\theoremstyle{definition}
\newtheorem{df}{Definition}[section]
\newtheorem{ex}[df]{Examples}}
\newtheorem{lm}[df]{Lemma}
\newtheorem{theo}[df]{Theorem}
\newtheorem{newob}[df]{New Observation}
\newcommand{\Ain}{A_{\mathrm{in}}}
\newcommand{\eps}{\varepsilon}
\newcommand{\BEP}{\ensuremath{\mathrm{BE}(P)}}
\DeclareMathOperator{\acc}{acc}
\DeclareMathOperator{\Acc}{Acc}
\DeclareMathOperator{\CFext}{CF_{ext}}
\DeclareMathOperator{\DCFext}{DCF_{ext}}
\DeclareMathOperator{\llabel}{label}
\DeclareMathOperator{\son}{son}
\DeclareMathOperator{\up}{up}
\DeclareMathOperator{\down}{down}
\DeclareMathOperator{\rroot}{root}
\DeclareMathOperator{\sel}{sel}
\DeclareMathOperator{\expand}{expand}
\DeclareMathOperator{\test}{test}
\DeclareMathOperator{\bottom}{bottom}
\DeclareMathOperator{\DCF}{DCF}
\DeclareMathOperator{\rk}{rank}
\DeclareMathOperator{\Pd}{P}
\DeclareMathOperator{\p}{p}
\DeclareMathOperator{\TP}{TP}
\DeclareMathOperator{\SPACE}{SPACE}
\DeclareMathOperator{\DREG}{DREG}
\DeclareMathOperator{\DeREG}{D_eREG}
\DeclareMathOperator{\DfREG}{D_fREG}
\DeclareMathOperator{\CF}{CF}
\DeclareMathOperator{\REG}{REG}
\DeclareMathOperator{\RT}{RT}
\DeclareMathOperator{\DRT}{DRT}
\DeclareMathOperator{\DeRT}{D_eRT}
\DeclareMathOperator{\DfRT}{D_fRT}
\DeclareMathOperator{\dec}{dec}
\DeclareMathOperator{\dom}{dom}
\DeclareMathOperator{\yield}{yield}
\DeclareMathOperator{\treeD}{tree_\Delta}
\DeclareMathOperator{\tree}{tree}
\DeclareMathOperator{\false}{false}
\DeclareMathOperator{\Integer}{Integer}
\DeclareMathOperator{\id}{id}
\DeclareMathOperator{\en}{en}
\DeclareMathOperator{\N}{\mathbb{N}}
\DeclareMathOperator{\nnull}{null}
\DeclareMathOperator{\on}{\ on\ }
\DeclareMathOperator{\pop}{pop}
\DeclareMathOperator{\push}{push}
\DeclareMathOperator{\stay}{stay}
\DeclareMathOperator{\first}{first}
\DeclareMathOperator{\empt}{empty}
\DeclareMathOperator{\rread}{read}
\DeclareMathOperator{\ran}{ran}
\DeclareMathOperator{\ruleif}{\uline{\mathrm{if}}\ }
\DeclareMathOperator{\ruleelse}{\ \uline{\mathrm{else}}\ }
\DeclareMathOperator{\rulethen}{\ \uline{\mathrm{then}}\ }
\DeclareMathOperator{\true}{true}
\DeclareMathOperator{\ttop}{top}
\newcommand{\LA}{\mathrm{LA}}
\newcommand{\alphaCF}{\alpha$-$\CF}
\newcommand{\alphaCFext}{\alpha$-$\CFext}
\newcommand{\alphaDCF}{\alpha$-$\DCF}
\newcommand{\alphaREG}{\alpha$-$\REG}
\newcommand{\alphaDREG}{\alpha$-$\DREG}
\newcommand{\alphaRT}{\alpha$-$\RT}
\newcommand{\G}{G}
\newcommand{\lambdaCF}{\lambda$-$\CF}
\newcommand{\lambdaCFext}{\lambda$-$\CFext}
\newcommand{\lambdaDCF}{\lambda$-$\DCF}
\newcommand{\lambdaREG}{\lambda$-$\REG}
\newcommand{\lambdaDREG}{\lambda$-$\DREG}
\newcommand{\lambdaRT}{\lambda$-$\RT}
\newcommand{\lambdaDeREG}{\lambda$-$\DeREG}
\newcommand{\lambdaDfREG}{\lambda$-$\DfREG}
\newcommand{\lambdaDeRT}{\lambda$-$\DeRT}
\newcommand{\lambdaDfRT}{\lambda$-$\DfRT}
\newcommand{\tauCF}{\tau$-$\CF}
\newcommand{\tauCFext}{\tau$-$\CFext}
\newcommand{\tauDCF}{\tau$-$\DCF}
\newcommand{\tauDCFext}{\tau$-$\DCFext}
\newcommand{\tauREG}{\tau$-$\REG}
\newcommand{\tauDREG}{\tau$-$\DREG}
\newcommand{\tauRT}{\tau$-$\RT}
\newcommand{\tauDRT}{\tau$-$\DRT}
\newcommand*{\QEDB}{\hfill\ensuremath{\square}}%
\newcommand{\bool}[1]{\underline{#1}}
\newcommand{\cond}{\underline{cond} }
\newcommand{\ar}{\ensuremath{\quad\ \rightarrow \quad\ }}
\newcommand{\Ar}{\ensuremath{\quad\ \Rightarrow \quad\ }}
\newcounter{boxwidth}
\newcommand{\emp}[1]{\textit{#1}}
\newcommand{\sect}[1]{#1}
\begin{document}

\begin{center}

{\LARGE Context-Free Grammars with Storage}

\

\

\begin{tabular}{cc}
  {\large Joost Engelfriet}\\[3mm]
  {\small {\it Leiden Institute of Advanced Computer Science (LIACS)}}\\[1mm] 
  {\small {\it Leiden University, P.O. Box 9512}}\\[1mm]
  {\small {\it 2300 RA Leiden, The Netherlands}}\\[1mm] 
  {\small \tt engelfri@liacs.nl}
\end{tabular}

\

July 1986/2014 
\end{center}

\vspace{18mm}


\noindent\textbf{Abstract.}
Context-free $S$ grammars are introduced, for arbitrary
(storage) type $S$, as a uniform framework for recursion-based
grammars, automata, and transducers, viewed as programs. To each
occurrence of a nonterminal of a context-free $S$ grammar an
object of type $S$ is associated, that can be acted upon by tests
and operations, as indicated in the rules of the grammar. Taking
particular storage types gives particular formalisms, such as
indexed grammars, top-down tree transducers, attribute grammars,
etc. Context-free $S$ grammars are equivalent to pushdown $S$
automata. The context-free $S$ languages can be obtained from the
deterministic one-way $S$ automaton languages by way of the delta
operations on languages, introduced in this paper.


\vspace{1cm}

\section*{\sect{Foreword}}

This is a slightly revised version of a paper that appeared 
as Technical Report 86-11 of the Department of Computer Science
of the University of Leiden, in July 1986.
Small errors are corrected and large mistakes are repaired.
Here and there the wording of the text is improved.
The references to the literature are updated, 
and a few New Observations are added.
However, I have made no effort to bring the paper up-to-date.  
New references to the literature are indicated by a *.
The results of Section~8 were published in~\cite{*Eng11}. 
 
I thank Heiko Vogler for his idea to transform the paper into \LaTeX\ 
and put it on arXiv. I~am grateful to Luisa Herrmann for 
typing the text in \LaTeX\ and drawing the figures in Ti{\it k}Z. 


\newpage
\section*{\sect{Introduction}}

Context-free grammars, as formally defined by Chomsky,
are a very particular type of rewriting system. However, the
reason for their popularity is that they embody the idea of
recursion, in its simplest form. A context-free grammar is
really just a nondeterministic recursive program that generates
or recognizes strings. For instance, the context-free grammar
rules $A \rightarrow aBbCD$ and $A \rightarrow b$ can be understood
as the following program piece (in case the grammar is viewed as a string generator)

\begin{lstlisting}
procedure A;
   begin
     write(a); call B; write(b); call C; call D
   end
 or
   begin
     write(b)
   end
\end{lstlisting}
where ``{\ttfamily write(a)}'' means ``write {\ttfamily a} on the output tape''.
If the grammar is viewed as a (nondeterministic) string recognizer or
acceptor, as is usual in recursive descent parsing, ``{\ttfamily write}'' 
should be replaced by ``{\ttfamily read}'' (where ``{\ttfamily read(a)}'' means 
``read {\ttfamily a} from the input tape''). Thus, context-free grammars are recursive
programs; the nonterminals $A, B, \dots$  are procedures (without
parameters), and all the rules with left-hand side $A$ constitute
the body of procedure $A$; the main program consists of a call of
the main procedure, i.e., the initial nonterminal. Actually, it
is quite funny that in formal language theory some programs are
called grammars (as suggested here), and other programs are
called automata: e.g.,  a program with one variable, of type
pushdown, is called a pushdown automaton. Maybe the underlying
idea is: nonrecursive program = automaton, recursive program =
grammar.

Given that context-free grammars consist of recursive
procedures without parameters, what would happen if we
generalize the concept by allowing parameters? This is an idea
that has turned up in several places in the literature. Since
the resulting formalisms are usually still based on the idea of
recursive procedures, such context-free grammars with parameters
are easy to understand, construct, and prove correct, just as
ordinary context-free grammars. Here, we will fix this idea as
follows: each nonterminal of the context-free grammar will have
\emp{one} (input) \emp{parameter} of a given type. Although this is a very
simple case of the general idea, we will show that the resulting
formalism of ``context-free $S$ grammars'' (where $S$ is the type of
the parameter) has its links with several existing formalisms;
this is due partly to the fact that such a generalized
context-free grammar may be viewed as a grammar, an automaton, a
program, or a transducer, and partly to the freedom in the
choice of $S$. In this way we will see that the following
formalisms can be ``explained'', each formalism corresponding to
the context-free $S$ grammars for a specific type $S$: indexed grammars, 
top-down tree transducers, ETOL systems, attribute grammars, macro grammars, etc. 
Moreover, viewing the type $S$ as the storage type of an automaton (e.g., $S$ = pushdown,
$S$ = counter, etc.) context-free $S$ grammars can be used to model
all one-way $S$ automata, where $S$ is any type of storage, and all
alternating $S$ automata, where $S$ is any type of [storage plus
input]. In particular, it should be clear that right-linear $S$
grammars correspond to one-way $S$ automata (just as, classically,
right-linear grammars correspond to one-way finite automata). To
stress this link to automata, context-free $S$ grammars are also
called ``grammars with storage'' or even ``recursive automata'', and
the type $S$ is also called a storage type. Thus this programming
paradigm strengthens the similarities between automata and
grammars. Tree grammars and tree automata (such as the top-down
tree transducer) can be obtained either by defining an
appropriate type $S$ of trees, or by considering context-free $S$
grammars that generate trees, or both. Note that trees, in their
intuitive form of expressions, are particular strings.

Two main results are the following.

(1) Context-free $S$ grammars correspond to pushdown $S$
automata. A pushdown $S$ automaton, introduced in \cite{Gre}, is an
iterative program that manipulates a pushdown of which each
pushdown cell contains a symbol and an object of type $S$. This is
of course the obvious way to implement recursive procedures with
one parameter, and so the result is not surprising. What is nice
about it, however, is that it provides pushdown-like automata,
in one stroke, for all formalisms that can be explained as
context-free $S$ grammars (e.g., we will obtain pushdown$^2$ automata
for indexed grammars, tree-walking pushdown transducers for
top-down tree transducers, 
checking-stack/pushdown automata for ETOL systems,
etc.). To deal with determinism (of the
context-free $S$ grammar, viewed as a transducer) the notion of
look-ahead (on the storage $S$) is introduced, as a generalization
of both look-ahead on the input (in parsing, and for top-down
tree transducers) and the predicting machines of \cite{HopUll}.

(2) Apart from this automaton characterization of
context-free $S$ grammars we also give a characterization by means
of operations on languages (cf. AFL/AFA theory \cite{Gin}), but only
for rather specific $S$ (including iterated pushdowns). We define
a new class $\delta$ of ``delta''  operations on languages, such that the
languages generated by context-free $S$ grammars can be obtained
by applying the delta operations to the languages accepted by
deterministic(!) one-way $S$ automata. A delta operation is quite
different from the usual operations on languages; it takes a
(string) language, views the strings as paths through labeled
trees, constructs a tree language out of these paths, and then
produces a (string) language again by taking the yields of these
trees. As an example, the indexed languages are obtained by the
delta operations from the deterministic context-free languages.

Thus, the aim of this paper is to provide \emp{a uniform
framework for grammars and automata that are based on recursion},
including the usual one-way automata as special case. The
general theory that is built in this framework should give
transparent (and possibly easier) proofs of results for
particular formalisms, i.e., for particular $S$. In fact, this
paper may be viewed as an extension of abstract automata theory
to recursive automata, i.e., as a new branch of AFA/AFL theory
\cite{Gin, Gre}. The above two main results constitute a modest
beginning of such a general theory for context-free $S$ grammars;
more can be found in \mbox{\cite{EngVog2, EngVog3, EngVog4, DamGue,
Vog1, Vog2, Vog3, Eng9}}. Although these papers are based on
the first, very rough, version of this paper (\cite{Eng8}),
we will now feel free to mention results from them.

\subsection*{\sect{How to read this paper}}

Since rather many formalisms will be discussed in this
paper, the reader is not expected to know them all. However,
this paper is not a tutorial, and so, whenever a formalism is
discussed, it is assumed that the reader is more or less
familiar with it. If he is not, he is therefore advised to skip
that part of the paper, or read it with the above in his mind.\footnote{{\bf New Observation.}
The reader is asked to consider the word ``he'' to stand for ``he/she'', and
the word ``his'' for ``his/her''. 
My personal ideal is to remove all female forms of words from the language,
and to let the male forms refer to all human beings. 
}
Hopefully the paper is written in such a way that skipping is
easy.

The reader who is interested in the expressiveness of
the context-free $S$ grammar formalism only, should read
Sections \ref{sect1}, \ref{sect2}, \ref{sect3}, \ref{sect4}, and \ref{sect6} 
(after glancing at Section \ref{sect5}), or
parts of them. The reader who is interested in the theory of
context-free $S$ grammars only, can restrict himself to
Sections \ref{sect1.1}, \ref{sect5}, \ref{sect7}, and \ref{sect8}.

\subsection*{\sect{Organization of the paper}}

Context-free $S$ grammars are defined in Section \ref{sect1.1}.
They are compared with attribute grammars in Section \ref{sect1.2}, which
can be skipped without problems. In Section \ref{sect2} two particular
cases are defined: regular grammars and regular tree grammars,
both with storage $S$. It is argued that these correspond to
one-way automata and top-down tree automata, both with storage
$S$, respectively. The reader is advised to at least glance
through this section. Section~\ref{sect3} is divided into 8 parts;
in each part a specific storage type $S$ is defined (e.g.,
$S$ = Pushdown in the second part), and it is shown how the
resulting context-free $S$ grammars relate to existing formalisms.
Although these parts are not completely independent, it
should be easy to skip some of them. The relationship between
context-free $S$ grammars and alternating automata is contained in
Section \ref{sect4}, which can easily be skipped (it is needed in
Section \ref{sect6}(9) only). In Section~\ref{sect5} we start 
the theory and show the first main result mentioned above: the relation to pushdown
$S$ automata. Then, in Section \ref{sect6}, it is shown how this gives
pushdown-like automata for all the formalisms discussed in
Section \ref{sect3}. Thus, Section \ref{sect6} is divided into the same 8
parts as Section \ref{sect3}, according to the storage type, with
one additional part concerning alternating automata.
Section \ref{sect7} is a technical section devoted to determinism, as
needed for Section \ref{sect8}. Section \ref{sect8} contains the second
main result mentioned above: the characterization of context-free $S$ grammars
by means of the delta operations.

\subsection*{\sect{Notation}}

Before we start, we mention some elementary notation.
We assume the reader to be familiar with formal language theory,
see \cite{HopUll, Sal, Har, Ber}, and, to a much lesser extent, with
tree language theory, see \cite{GecSte, Eng1}. We denote by REG, CF,
DCF, Indexed, and RE, the classes of regular, context-free,
deterministic context-free, indexed, and recursively enumerable
languages, respectively. (RT denotes the class of regular tree
languages, also called recognizable tree languages.)

For a set $A$, $A^*$ is the set of strings over $A$. For $w \in A^*$, $|w|$
denotes the length of $w$. The empty string is denoted $\lambda$, and
$A^+ = A^*  - \{\lambda\}$. (In ranked alphabets, $\eps$ is a symbol
of rank 0 denoting $\lambda$, in the sense that $\yield(\eps) = \lambda$.)  

For a relation $\tau$, $\tau^*$ is its reflexive, transitive closure,
$\dom( \tau)$ is its domain, and $\ran(\tau)$ is its range. For a set $A$,
$\id(A)$ denotes the identity mapping $A \rightarrow A$. An ordered pair 
$(\varphi,\psi)$ of objects $\varphi$ and $\psi$ will also be denoted $\varphi(\psi)$, 
not to be confused with function application. For sets $\Phi$ and $\Psi$, 
both $\Phi \times \Psi$ and $\Phi(\Psi)$ will be used to denote their cartesian 
product $\{\varphi(\psi) \mid \varphi \in \Phi, \psi \in \Psi\}$.


\section{\sect{Context-free $S$ grammars}}\label{sect1}

\subsection{\sect{Examples and definitions}}\label{sect1.1}

To give the reader an idea of what context-free $S$
grammars are, let us first discuss three simple examples.

In the first grammar, $\G_1$, there is just one nonterminal
$A$, with one parameter of type integer (i.e., $S = \Integer$), and
there is one terminal symbol $a$. The two rules of the grammar are
\[
  \begin{array}{l}
    A(x) \ar \ruleif x \not= 0 \rulethen A(x-1) A(x-1)\\
    A(x) \ar \ruleif x=0 \rulethen a
  \end{array}
\]
where $x$ is a formal parameter. The meaning of the first rule is
that, for any integer $n$, $A(n)$ may be rewritten as $A(n-1) A(n-1)$,
provided $n \not= 0$; and similarly for the second rule: $A(0)$ may be
rewritten by $a$. Thus, for $n \ge 0$, $A(n)$ generates
$a^{2^n}$.
We may view $\G_1$

\begin{enumerate}
  \item[(1)] as a grammar generating the language 
    $L(\G_1) = \{a^{2^n} \mid n \ge 0\}$ (the input $n$ is chosen
    nondeterministically),
  \item[(2)] as a nondeterministic acceptor that recognizes $L(\G_1)$ (cf.
    the Introduction; $n$ is again chosen nondeterministically),
  \item[(3)] as a deterministic transducer that translates $n$ into $a^{2^n}$, and, finally,
  \item[(4)] as a deterministic acceptor of all integers $n \ge 0$ (the
    domain of the translation).
\end{enumerate}
These four points of view will be taken for all context-free
$S$ grammars. The grammar $\G_1$ is not such a good example for the 4th point of
view; a better example will be given in Section~\ref{sect4} (viz., $\G_6$).
From all four points of view, $\G_1$ (and any other context-free $S$
grammar) can be thought of as a program, similar to the one for
the context-free grammar in the Introduction. As a transducer,
$\G_1$ corresponds intuitively to the program

\begin{lstlisting}[mathescape]
procedure A(x: integer);
    begin if x $ \neq $ 0
             then call A(x-l); call A(x-1)
             else write(a)
          fi
    end;
{main program}
obtain n;
call A(n)
\end{lstlisting}
where ``{\ttfamily obtain n}'' means ``read {\ttfamily n} from an input device''. 
If $\G_1$ is a generator of $L(\G_1)$, ``{\ttfamily obtain~n}'' means 
``choose an integer {\ttfamily n}''. 
If $\G_1$ is an acceptor of $L(\G_1)$, then ``{\ttfamily obtain n}'' again means 
``choose an integer {\ttfamily n}'', and, in the program, ``{\ttfamily write(a)}'' should be
replaced by ``{\ttfamily read(a)}'', as observed for the context-free grammar
in the Introduction. If $\G_1$ is an acceptor of the nonnegative
integers, then ``{\ttfamily write(a)}'' can be replaced by ``skip'' (i.e.,
terminals do not matter).

The second grammar, $\G_2$, generates the language
$\{a^nb^nc^n  \mid n \ge 0\}$. It has nonterminals $\Ain, A,B,C$,  each having a
parameter of type pushdown (i.e., $S$ = Pushdown); the pushdown
symbols are $a$ and $\#$ (the bottom marker). The derivations of $\G_2$
start with $\Ain(\#)$: the initial call of the main procedure. The
rules of the grammar are the following, where the tests and
operations on the pushdown should be obvious (and $\lambda$ denotes the
empty string).

\begin{center}\parbox{0cm}{
\begin{tabbing}
    \=$\Ain(x) \ \ $\=$\ar $\=$A(x)$\\
    \>$A(x) $\>$\ar $\>$aA(\push\ a \on x)$\\
    \>$A(x) $\>$\ar $\>$B(x)C(x)$\\
    \>$B(x) $\>$\ar $\>$\ruleif \ttop(x) = a \rulethen bB(\pop\ x)$\\
    \>$B(x) $\>$\ar $\>$\ruleif \ttop(x) = \# \rulethen \lambda$\\
    \>$C(x) $\>$\ar $\>$\ruleif \ttop(x) = a \rulethen cC(\pop\ x)$\\
    \>$C(x) $\>$\ar $\>$\ruleif \ttop(x) = \# \rulethen \lambda$
\end{tabbing}}
\end{center}

\noindent The unconditional rules may always be used, the conditional
rules only if their tests are true. From $\Ain(\#)$, $\G_2$ generates
nondeterministically $a^nA(a^n\#)$, followed by $a^nB(a^n\#)C(a^n\#)$, where
the top of the pushdown is at the left. Then $B(a^n\#)$ and $C(a^n\#)$
generate deterministically $b^n$ and $c^n$, respectively. Thus, $\Ain(\#)$ 
generates all strings $a^nb^nc^n$. Note that dropping $C(x)$ from the
third rule gives a right-linear grammar (with pushdown
parameter), generating $a^nb^n$; clearly, as an acceptor of 
$\{a^nb^n \mid n\ge0\}$, this grammar is just an ordinary nondeterministic
pushdown automaton.

Another way to generate the language $\{a^nb^nc^n \mid n\ge 0\}$ is by 
a grammar $\G_3$ that is almost the same as $\G_2$. The first three rules 
of $\G_2$ should be replaced by

\begin{center}\parbox{0cm}{
  \begin{tabbing}
      \=$\Ain(x) $\=$\ar $ \=$A(x)B(x)C(x)$\\
      \>$A(x) $\>$\ar $\>$\ruleif \ttop(x) = a \rulethen aA(\pop x)$\\
      \>$A(x) $\>$\ar $\>$\ruleif \ttop(x) = \# \rulethen \lambda$
  \end{tabbing}}
\end{center}

\noindent But now the idea is that any integer $n \ge 0$ can be taken as input,
encoded as a pushdown $a^n\#$. The grammar $\G_3$ starts the derivation with
$\Ain(a^n\#)$, and deterministically generates $a^nb^nc^n$. Thus $\G_3$
translates $n$ into $a^nb^nc^n$, and generates (or accepts) the
language $\{a^nb^nc^n \mid  n \ge 0\}$. 

\vspace{1cm}
Let us now turn to the formal definitions. They are
inspired by Ginsburg and Greibach \cite{Gin}, who developed a general
theory of automata that is based on the separation of storage
and control. We will try to make our notation as readable as
that of Scott, who also started such a theory \cite{Sco}. We begin
with the definition of type, i.e., of the possible types of the
parameter. Because of the intuitive connection to automata we
will talk about storage type rather than type. A storage type is
a set of objects (called the storage configurations), together
with the allowed tests and operations on these objects. Since
each nonterminal has only one parameter, we may restrict
ourselves to unary tests and operations. But that is not all.
Since our context-free $S$ grammars will also be viewed as
transducers, it is necessary to specify a set of input elements,
together with the possibility to encode them as storage
configurations (e.g., in $\G_3$, integer $n$ is encoded as pushdown
$a^n\#$). But, in general, different transducers may use different
encodings (e.g., we should also have the freedom to encode $n$ as
$b^n\bowtie$). Thus, a set of possible encodings is specified.

\begin{df} 
    A \emp{storage type} $S$ is a tuple $S = (C,P,F,I,E,m)$,
    where $C$ is the set of configurations, $P$ is the set of predicate
    symbols, $F$ is the set of instruction symbols, $I$ is the set of
    input elements, $E$ is the set of encoding symbols, and $m$ is the
    meaning function that associates with every $p \in P$ a mapping
    $m(p): C \rightarrow \{\true,\false\}$, with every $f \in F$ a partial function
    $m(f): C \rightarrow  C$, and with every $e \in E$ a partial function
    $m(e): I \rightarrow C$. \QEDB
\end{df}

We let $\BEP$ denote the set of all boolean expressions over $P$, with the  
usual boolean operators \bool{and}, \bool{or}, \bool{not}, \bool{true}, and \bool{false}. 
For $b \in \BEP$, $m(b): C \rightarrow \{\true,\false\}$ is defined in the obvious
way. The elements of $\BEP$ are also called \emp{tests}.

We will also say ``predicate $p$'' instead of ``predicate symbol $p$'', 
with the intention to talk about $p$ and $m(p)$ at the
same time (when the distinction is not so important), and
similarly for ``instruction'' and ``encoding''.

Next we give our main definition: that of context-free
$S$ grammar, for any storage type $S$. However, to remain as general
as possible, we will call it a \emp{context-free $S$ transducer} (but
also grammar and acceptor, depending on the point of view).

First a remark on the notation of rules. Since all nonterminals, predicate symbols,
and instruction symbols always have one formal argument, we will drop ``$(x)$'' from
our formal notation. Thus, the rules of $\G_1$ can first be written as

\begin{center}\parbox{0cm}{
  \begin{tabbing}
      \=$A(x) $\=$ \ar \ruleif \text{\bool{not}} \nnull(x) \rulethen A(\dec(x))A(\dec(x))$\\
      \>$A(x) $\>$\ar \ruleif \nnull(x) \rulethen a $
  \end{tabbing}}
\end{center}

\noindent  where $\nnull(x)$ and $\dec(x)$ stand for $x = 0$ and $x-1$, respectively
(null is a predicate symbol, and dec is an instruction symbol of
the storage type $\Integer$). And then they can be written, formally, as

\begin{center}\parbox{0cm}{
  \begin{tabbing}
      \=$A $\=$ \ar \ruleif \text{\bool{not}}  \nnull \rulethen A(\dec)A(\dec)$\\
      \>$A $\>$\ar \ruleif \nnull \rulethen a $
  \end{tabbing}}
\end{center}

\noindent The definition now follows. Recall that, for objects $\varphi$ and $\psi$, 
$\varphi(\psi)$ is just another notation for the ordered pair $(\varphi,\psi)$;
similarly, $\Phi(\Psi)$ is another notation for $\Phi \times \Psi$. 
This is done to formalize $A(\dec)$ as an ordered pair $(A,\dec)$, but keep the old
notation.

\begin{df}
    Let $S = (C,P,F,I,E,m)$ be a storage type. A
    \emp{context-free $S$ transducer}, or $\CF(S)$ transducer, is a tuple $\G =
    (N,e,\Delta,\Ain,R)$, where $N$ is the nonterminal alphabet, $e \in E$ is the
    encoding symbol, $\Delta$ is the terminal alphabet (disjoint with $N$), 
    $\Ain \in N$ is the initial nonterminal, and $R$ is the finite set of
    rules; every rule is of the form
	\[
	A \ar \ruleif b \rulethen \xi
	\]
    with $A \in N$, $b \in \BEP$, and $\xi \in  (N(F) \cup \Delta)^*$. 

    The set of
    total configurations (or, instantaneous descriptions) is $(N(C) \cup  \Delta)^*$.
    The derivation relation of $\G$, denoted by $\Rightarrow_\G$ or just $\Rightarrow$,  
    is a binary relation on the set of total configurations, defined as
    follows: 

    \noindent if $A \rightarrow \ruleif b \rulethen \xi$ is in $R$, $m(b)(c)=\true$, 
    and $m(f)(c)$ is defined for all $f$ that occur in $\xi$, then 
    $\xi_1 A(c) \xi_2 \Rightarrow_\G \xi_1 \xi' \xi_2$ for all total configurations
    $\xi_1$ and $\xi_2$, where $\xi'$ is obtained
    from $\xi$  by replacing every $B(f)$ by $B(m(f)(c))$.

    The \emp{translation defined} by $\G$ is 
	\[
	T(\G) =
	\{(u,w) \in I \times \Delta^* \mid  \Ain(m(e)(u)) \Rightarrow_\G^* w\}.
	\]
    Note that $T(\G) \subseteq \dom(m(e)) \times \Delta^*$. 

    The \emp{language generated} (or, \emp{r-accepted}) by $\G$ is 
	\[
	L(\G) = \ran(T(\G)) = 
    \{w \in \Delta^* \mid \Ain(m(e)(u))\Rightarrow_\G^* w \text{ for some } u \in I \}.
	\] 

    The \emp{input set d-accepted} by $\G$ is 
	\[
	A(\G) = \dom(T(\G)) =
	\{u \in I \mid  \Ain(m(e)(u)) \Rightarrow_\G^* w \text{ for some } w \in\Delta^*\}.
	\]

    The $\CF(S)$ transducer $\G$ is \emp{(transducer) deterministic} if for
    every $c \in C$ and every two different rules $A \rightarrow \ruleif b_1 \rulethen \xi_1$
    and $A \rightarrow \ruleif b_2 \rulethen \xi_2$, $m(b_1 $ \bool{and} $ b_2)(c) = \false$. \QEDB
\end{df}

\sloppy The corresponding classes of translations, languages,
and input sets, are defined by 
\mbox{$\tauCF(S) = \{T(\G) \mid \G \text{ is a } \CF(S)\text{ transducer}\}$}, 
$\lambdaCF(S) = \{L(\G) \mid \G \text{ is a } \CF(S) \text{ transducer}\}$, 
and $\alphaCF(S) = \{A(\G) \mid \G \text{ is a } \CF(S) \text{ transducer}\}$.
Moreover,
$\tauDCF(S)= \{T(\G) \mid \G \text{ is a deterministic } \CF(S) \text{ transducer}\}$,
and similarly for $\lambdaDCF(S)$ and $\alphaDCF(S)$. Note that this $\lambda$ has
nothing to do with the empty string.

Let us discuss some notational conventions concerning
rules. A rule $A \rightarrow \ruleif $\bool{true}$ \rulethen \xi$ will be
abbreviated by $A \rightarrow \xi$ (these are the unconditional rules we
saw in $\G_2$ and $\G_3$). A~rule $A \rightarrow \ruleif$\bool{false}$\rulethen\xi$
may be omitted. Obviously, if $b$ and $b'$ are equivalent boolean expressions,
then they are interchangeable as tests of rules. Thus, if $S$ is a storage
type with $P = \emptyset$, we may assume that all rules are unconditional.
We write $A \rightarrow \ruleif b \rulethen \xi_1 \ruleelse \xi_2$ as an
abbreviation of the two rules
$A \rightarrow \ruleif b \rulethen \xi_1$ and 
$A \rightarrow \ruleif \text{\bool{not}} b  \rulethen \xi_2$ (cf. the program for
$\G_1$;
the two rules of $\G_1$ could be written 
$A \rightarrow \ruleif \nnull \rulethen a \ruleelse A(\dec) A(\dec)$).  
 A rule \mbox{$A \rightarrow \ruleif b_1 $ \bool{or} $ b_2 \rulethen \xi$} may be
replaced by the two rules $A \rightarrow \ruleif b_1 \rulethen \xi$ and
\mbox{$A \rightarrow \ruleif b_2 \rulethen \xi$}. 
Thus, using the disjunctive normal form of boolean expressions,
we may always assume that all tests in rules are conjunctions of
predicate symbols and negated predicate symbols. We will allow
rules $A \rightarrow \ruleif  b \rulethen \xi$ with
$\xi \in (N(F^+) \cup \Delta)^*$, where concatenation in $F^+$ is denoted
by a semicolon.  A rule
$A \rightarrow \ruleif b \rulethen \cdots B( f_1; f_2;\ldots;f_k) \cdots$
abbreviates the rules
$A \rightarrow \ruleif b \rulethen \cdots B_1(f_1)\cdots$, $B_1 \rightarrow B_2(f_2)$,
\ldots, $B_{k-1} \rightarrow B(f_k)$, where $B_1,\ldots, B_{k-1}$ are new nonterminals.

Let $\G$ be a $\CF(S)$ transducer. Whenever we view $\G$ in
particular as a generator of $L(\G)$, we will call $\G$ a 
\emp{context-free $S$ grammar}, or $\CF(S)$ grammar. Similarly, when viewing it as a
recognizer of $L(\G)$, as discussed before, we  call it a
\emp{context-free $S$ r-acceptor}, or $\CF(S)$ \mbox{r-acceptor} (where r
abbreviates range). But, of course, $\G$ can also be viewed as an
acceptor of $A(\G)$; in that case we call it a \emp{context-free $S$
d-acceptor}, or $\CF(S)$ d-acceptor (where d abbreviates domain).
Note that in this case the terminal alphabet $\Delta$ of $\G$ is
superfluous, i.e., we may assume that $\Delta = \emptyset$.

In the definition of $\CF(S)$ transducer we took the usual
terminology for grammars. From the point of view of recursive
automata, i.e., for transducers and acceptors, it would be more
appropriate to call the elements of $N$ \emp{states}, $\Ain$ the \emp{initial state},
$\Delta$ the \emp{output alphabet} (or the \emp{input alphabet}, for
r-acceptors), and to formalize $R$ as a \emp{transition function}. The
derivations of the transducer would then be called \emp{computations};
these computations start with $\Ain(c)$ where $c$ is an \emp{initial
configuration}, i.e., an element of $\ran(m(e))$. As said before,
from the point of view of programs, $N$ is the set of procedure
names, $\Ain$ is the main procedure, and $R$ the program (consisting
of the procedure declarations).

The notion of (transducer) determinism, as defined
above, is what one would expect for transducers and d-acceptors
(and, perhaps, grammars). Obviously, for a deterministic $\CF(S)$
transducer $\G$, $T(\G)$ is a partial function from $I$ to $\Delta^*$. For
r-acceptors, this notion is too strong, because the terminals
should also be involved; r-acceptor determinism will be
considered in Section \ref{sect7}.

As an example we now give a complete formal definition
of $\G_1$. First we define the storage type $\Integer = (C,P,F,I,E,m)$,
where $C$ is the set of integers, $P = \{\nnull\}$, $F = \{\dec\}$, $I = C$,
$E = \{\mathrm{en}\}$; for every $c \in C$, $m(\nnull)(c) = (c = 0)$ and
$m(\dec)(c) = c-1$; and $m(\mathrm{en}) = \id(C)$, the identity on $C$. 
Second we define 
the $\CF(\Integer)$ transducer $\G_1=(N,e,\Delta,\Ain,R)$ where
$N = \{A\}$, $e = \mathrm{en}$,   $\Delta = \{a\}$, $\Ain = A$, and 
$$R = \{A \rightarrow \ruleif \nnull \rulethen a, \;A \rightarrow \ruleif
\text{\bool{not}} \nnull \rulethen A(\dec)A(\dec)\}.$$
A derivation of $\G_1$ is 

\begin{center}\parbox{0cm}{
  \begin{tabbing}
   \=$A(2)$ \= $\Ar$ \= $A(1)A(1)$\\
   \> \> $\Ar$ \> $A(0)A(0)A(1)$\\
   \> \> $\Ar$ \> $aA(0)A(1)$\\
   \> \> $\Ar$ \> $aA(0)A(0)A(0)$\\
   \> \> $\quad\ \Rightarrow^*$ \> $ aaaa$
  \end{tabbing}}
\end{center}
Hence, since $m(\mathrm{en})(2) = 2$, we have $(2,aaaa) \in T(\G_1)$,
$aaaa \in  L(\G_1)$, and $2 \in A(\G_1)$. Clearly, 
$T(\G_1) = \{(n,a^{2^n}) \mid n \ge 0\}$,
$L(\G_1)=\{a^{2^n} \mid n \ge 0\}$, 
and $A(\G_1)=\{n\mid n\geq 0\}$.

\subsection{\sect{Comparison with attribute grammars}}\label{sect1.2}

As remarked in the Introduction, $\CF(S)$ grammars are a
very special case of the general idea of adding parameters to
context-free grammars. A much more powerful realization of this
idea is the notion of attribute grammar \cite{Knu}, in particular in
its formulation as affix grammar \cite{Kos, Wat1}. In fact, a $\CF(S)$
grammar may be viewed as an attribute grammar with one,
inherited, attribute. To explain this, let in particular
$S = (C,P,F,I,E,m)$ be a storage type such that $I$ is a singleton
($I = \{u_0\}$), and let $\G = (N,e,\Delta,\Ain,R)$ be a $\CF(S)$ grammar. Let us
call the inherited attribute $i$. It is an attribute of all
nonterminals in $N$, and it has type $S$, i.e., $C$ is the set of
attribute values for $i$, and $P$ and $F$ contain the possible tests
and operations on these attribute values ($S$ is also called the
semantic domain, cf. \cite{EngFil}). Every rule
$A \rightarrow \ruleif b \rulethen w_0 B_1(f_1) w_1  B_2(f_2) w_2 \cdots B_n(f_n) w_n$
of $R$ (with $A,B_j \in  N$ and $w_j \in \Delta^*$) determines
a rule of the underlying context-free grammar $\overline{\G} =
(N,\Delta,\Ain,\overline{R})$
of the attribute grammar, together with the semantic rules and semantic
conditions for its attributes,
as follows (for the notion of semantic condition, see \cite{Wat2}).
The rule $A \rightarrow w_0 B_1 w_1  B_2  w_2 \ldots B_n w_n$
is in $\overline{R}$, the semantic rules to 
compute the attributes (of the sons) are $i(B_j) = f_j(i(A))$, for
$1 \le  j \le n$, and the semantic condition (on the father) is
$b(i(A)) = \true$.\footnote{{\bf New Observation.} In this paper,
the direct descendants of a node of a tree are called its ``sons''
and the node itself is then called the ``father''; 
moreover, two such sons are ``brothers'' of each other. 
To avoid this patriarchate, 
many authors now use ``children'', ``parent'' and ``siblings''.
That terminology is misleading, because every child has two parents.
}
Usually the initial nonterminal $\Ain$ is not
allowed to have an inherited attribute; we allow this but fix
its value $i(\Ain)$ to be $m(e)(u_0)$. It should now be clear that
$L(\G)$ is the set of all strings in $L(\overline{\G})$ that have derivation
trees of which the values of the attributes satisfy all semantic
conditions. This is the usual way in which attribute grammars
define the context-sensitive syntax of languages. Note that to
the (left-most) derivations of the $\CF(S)$ grammar $\G$ correspond
derivation trees, in an obvious way; the nodes of these trees
are labeled by pairs $A(c)$ from $N(C)$. These derivation trees
correspond to the semantic trees of the attribute grammar $\G$,
i.e., derivation trees of $\overline{\G}$ together with the values of their
attributes. Thus, in this case, attribute evaluation can be
defined by way of the derivations of the $\CF(S)$ grammar; in fact
this holds in general, and it is precisely the way in which
attribute evaluation is defined formally in affix grammars (see
\cite{Kos,Wat1}).

As an example, a variation $\G_1'$ of $\G_1$ might have the rules

\begin{center}\parbox{0cm}{
  \begin{tabbing}
   \=$\Ain(x)$ \= $\ar$ \= $B(x)$\\
   \>$B(x)$ \> $\ar$ \> $B(x+1)$\\
   \>$B(x)$ \> $\ar$ \> $A(x)$\\
   \>$A(x)$ \> $\ar$ \> $\ruleif x \not= 0 \rulethen A(x-1)A(x-1)$\\
   \>$A(x)$ \> $\ar$ \> $\ruleif x = 0 \rulethen a$
  \end{tabbing}}
\end{center}

\noindent with $m(e)(u_0) = 0$. This corresponds to an attribute grammar with
$i(\Ain) = 0$ and rules $\Ain \rightarrow B$, $B \rightarrow B$, 
$B \rightarrow A$, $A \rightarrow AA$, and $A \rightarrow a$, in $\overline{R}$.
The attribute values are defined as follows. Note that, as
usual, subscripts denote different occurrences of the same
nonterminal; we use \cond to indicate a semantic condition.

\begin{minipage}{0.5\textwidth}
  \begin{center}\parbox{0cm}{
    \begin{tabbing} 
      syntactic rules\\[3mm]
      \=$\Ain$ \= $\ar$ \= $B$\\
      \>$B_0$ \> $\ar$ \> $B_1$\\
      \>$B$ \> $\ar$ \> $A$\\
      \>$A_0$ \> $\ar$ \> $A_1 A_2$\\
      \\
      \\
      \>$A$ \> $\ar$ \> $a$
    \end{tabbing}}
  \end{center}
\end{minipage}
\begin{minipage}{0.4\textwidth}
  \begin{center}\parbox{0cm}{
    \begin{tabbing}
      semantic rules\\[3mm]
      \=$i(B)$ \= $\ =\ $ \= $i(\Ain)$\\
      \>$i(B_1)$ \> $\ =\ $ \> $i(B_0)+1$\\
      \>$i(A)$ \> $\ =\ $ \> $i(B)$\\
      \>\cond $i(A_0) \ \neq \ 0$\\
      \>$i(A_1)$ \> $\ =\ $ \> $i(A_0)-1$\\
      \>$i(A_2)$ \> $\ =\ $ \> $i(A_0)-1$\\
      \>\cond $i(A) \ =\  0$      
    \end{tabbing}}
  \end{center}
\end{minipage}

\noindent The underlying context-free grammar generates all strings in $a^+$, 
but the attribute grammar $\G_1'$ generates $\{a^{2^n}\mid n\geq 0\}$.

Although $\CF(S)$ transducers are a very particular case of attribute
grammars, we will see in Section \ref{sect3}(7) how they can be used 
to model arbitrary attribute grammars!


\section{\sect{Regular grammars}}\label{sect2}

Two particular subcases of the context-free grammars
are the regular (= right-linear) grammars and the regular tree
grammars. Adding storage $S$, these can be used to model known
classes of automata: regular $S$ grammars for one-way $S$ automata,
and regular tree $S$ grammars for top-down tree automata with
storage $S$. We now discuss these two subcases one by one.

\begin{df} 
    A \emp{regular $S$ transducer}, or $\REG(S)$ transducer, is
    a context-free $S$ transducer $\G = (N,e,\Delta,\Ain,R)$ of which all rules
    in $R$ have one of the forms $A \rightarrow \ruleif b \rulethen wB(f)$ or
    $A \rightarrow \ruleif b \rulethen w$, where 
    $A,B \in N$, $b \in \BEP$, $w \in \Delta^*$, and $f \in F$. \QEDB
\end{df}

The corresponding classes of translations, languages,
and input sets are denoted by \mbox{$\tauREG(S)$}, $\lambdaREG(S)$,
and $\alphaREG(S)$,
respectively, and similarly for the deterministic case by 
$\tauDREG(S)$, $\lambdaDREG(S)$, and $\alphaDREG(S)$.

From the programming point of view a regular $S$
transducer consists of recursive procedures that only call each
other at the end of their bodies (tail recursion). It is well
known that such recursion can easily be removed, replacing calls
by goto's, and keeping the actual parameter in the global state.
Thus, we may view $\REG(S)$ transducers as ordinary flowcharts with
elements of \BEP \ in their diamonds and elements of
$F \cup \{\mathrm{write}(a) \mid a \in \Delta\}$ in their boxes. 
These flowcharts operate,
in the usual way, on a global state consisting of an object of
type $S$ (i.e., an $S$-configuration) and an output tape. If we
consider the $\REG(S)$ r-acceptors (with a one-way input tape
instead of an output tape: replace ``write'' by ``read''), it should
be clear that these are precisely the usual nondeterministic
\emp{one-way $S$ automata}. Thus:
\begin{quote}
``$\REG(S)$ r-acceptor = one-way $S$ automaton'',
\end{quote}
\noindent and $\lambdaREG(S)$ is the class of languages
accepted by one-way $S$
automata (but, as observed before, determinism does not carry
over; see Section \ref{sect7}).

An informal example of a $\REG(S)$ r-acceptor was given in
the discussion of $\G_2$ in Section \ref{sect1.1}: a one-way pushdown
automaton accepting $\{a^n b^n \mid n \geq 0\}$. It should be clear that the
type pushdown can be formalized as a storage type Pushdown, in
such a way that the $\REG$(Pushdown) r-acceptor corresponds to the
usual one-way pushdown automaton, see Section \ref{sect3}(2). This can be
done for all usual one-way automata, as shown successfully in
AFA theory \cite{Gin}. Thus AFA theory is the theory of $\REG(S)$
r-acceptors.

Let us look more closely at our ``notation'' for one-way
$S$ automata. As noted before in Section \ref{sect1.1}, for a $\REG(S)$
r-acceptor $\G = (N,e,\Delta,\Ain,R)$, the elements of $N$ are its states,
$\Ain$ is the initial state, $\Delta$ is the input alphabet, $\ran(m(e))$ is
the set of initial $S$-configurations with which $\G$ may start its
computations, and $R$ represents the transition function. A rule
$A \rightarrow \ruleif b \rulethen wB(f)$ should be interpreted as: 
``if the current state of the automaton is $A$, $b$ holds for its current storage
configuration, and $w$ is a prefix of the (rest of the) input,
then the automaton may read $w$, go into state $B$, and apply $f$ to
its storage configuration.'' A~rule $A \rightarrow \ruleif b \rulethen w$
should be interpreted as: ``if $\ldots\langle$as before$\rangle\ldots$, 
then the automaton may read~$w$, and halt.'' Thus, a string is accepted 
if the automaton reads it to its end, and then halts according to a 
rule of the second form. Note that in the total configurations of the $\REG(S)$
r-acceptor the already processed part of the input appears,
rather than the rest of the input. This is unusual, but may be
viewed as a notational matter. Actually, $\REG(S)$ transducers are
as close to one-way $S$ automata as right-linear grammars are to
finite automata (which is very close!).

However, intuitively, $\REG(S)$ r-acceptors only
correspond to one-way $S$ automata in case $S$ has an identity: in
general a one-way $S$ automaton is not forced to transform its
storage at each move.

\begin{df}
    A storage type $S = (C,P,F,I,E,m)$ \emp{has an identity},
    if there is an instruction symbol $\id \in F$ such that
    $m(\id) = \id(C)$. \QEDB
\end{df}

Grammar $\G_2$ in Section \ref{sect1.1} uses an identity in the first
and third rules.

Thus, when modelling particular well-known types of
one-way automata, such as pushdown automata, we should see to it
that the corresponding storage type has an identity. The reason
that we also consider storage types without identity is that
there exist devices, such as the top-down tree transducer, that
\emp{have} to transform their storage configuration at each step of
their computation. Instead of formalizing this in the control
(i.e., the form of the rules) of the device, it turns out to be
useful to formalize it in the storage type, as we do now. In
case we wish to consider both types of transducers, we define $S$
without an identity, and then add one, as follows.

\begin{df}\label{sid}
    For a storage type $S = (C,P,F,I,E,m)$, \emp{$S$ with
    identity} is the storage type $S_{\id} = (C,P,F\cup\{\id\},I,E,m')$,
    where $\id$ is a ``new'' instruction symbol, $m'$ is the same as
    $m$ on $P \cup F \cup E$, and $m'(\id) = \id(C)$. \QEDB
\end{df}

This is useful in particular when (part of) the storage
is viewed as input: then the identity constitutes a ``$\lambda$-move'' on
this input.

Note that, in the above definition, $S_{\id}$ is also defined
in case $S$ already has an identity; this simplifies some
technical definitions.

As an illustration of the use of an identity, for a
$\REG(S)$ r-acceptor, we note the following. Some people may not
like that the automaton can read a whole string from the input
in one stroke. Let us say that a $\REG(S)$ r-acceptor is in \emp{normal
form} if $w \in \Delta \cup \{\lambda\}$ in all its rules. Now let us assume that $S$
has an identity id. Then it is quite easy to see that every
$\REG(S)$ r-acceptor can be put into normal form: replace a rule of
the form $A \rightarrow \ruleif b \rulethen a_1 a_2 \cdots a_n B(f)$,
with $n \geq 2$, by the $n$ rules 
$A \rightarrow \ruleif b \rulethen a_1 B_1 (\id)$, 
$B_1 \rightarrow a_2 B_2 (\id),\ldots, B_{n-1} \rightarrow a_n B(f)$, where
$B_1,\ldots,B_{n-1}$ are new states, and similarly for a rule of the form
$A \rightarrow \ruleif b \rulethen a_1a_2\cdots a_n$.

\bigskip
In the remaining part of this section we consider
regular tree grammars and generalize them to regular tree $S$
grammars, just as we did for context-free grammars. First we
need some well-known terminology on trees (see, e.g., \cite{GecSte,
Eng1}).

A ranked set $\Delta$ is a set together with a mapping $\rk:
\Delta \rightarrow \{0,1,2,\ldots\}$. If $\Delta$ is finite, 
it is called a ranked alphabet.
For $k \geq 0$, $\Delta_k = \{\sigma \in\Delta\mid \rk(\sigma) = k\}$.
The set of trees over $\Delta$,
denoted $T_\Delta$, is the smallest subset of $\Delta^*$ such that (1) for every
$\sigma\in\Delta_0$, $\sigma$ is in $T_\Delta$, and (2) for every 
$\sigma \in\Delta_k$ with $k \geq 1$, and
every $t_1,t_2,\ldots,t_k \in T_\Delta$, $\sigma t_1 t_2\cdots t_k$ is in $T_\Delta$. 
For a set $Y$ disjoint with $\Delta$, $T_\Delta [Y]$
denotes $T_{\Delta\cup Y}$, where the elements of $Y$ are given rank 0. Note 
that we write trees in prefix notation, without parentheses or
commas; however, for the sake of clearness, we will sometimes
write $\sigma(t_1,t_2,\ldots,t_k)$ instead of $\sigma t_1 t_2 \cdots t_k$. A
language $L \subseteq \Delta^*$
is called a tree language if $L \subseteq T_\Delta$.

\begin{df}
    A \emp{regular tree $S$ transducer}, or $\RT(S)$ transducer,
    is a context-free $S$ transducer $\G = (N,e,\Delta,\Ain,R)$, 
    such that $\Delta$ is
    a ranked alphabet, and, for every rule 
    $A \rightarrow \ruleif b \rulethen \xi$ of $R$, $\xi$
    is in $T_\Delta [N(F)]$. \QEDB
\end{df}

As usual, the corresponding classes of translations,
languages, and input sets are denoted $\tauRT(S)$, 
$\lambdaRT(S)$, and $\alphaRT(S)$, respectively.

It is easy to see that, for an $\RT(S)$ transducer $\G$,
$L(\G) \subseteq T_\Delta$. Thus $L(\G)$ is a tree language, 
and $T(\G)$ translates input elements into trees.

As an example we consider the $\RT(S)$ transducer $\G'_2$, a
variation of $\G_2$; as for $\G_2$, \mbox{$S$ = Pushdown}.
The (informal) rules of $\G'_2$ are the same as those of $\G_2$, 
except that $\lambda$ has to be replaced by $\tau$
(of rank 0), and the rule $A(x) \rightarrow B(x)C(x)$
should be replaced by the rule $A(x) \rightarrow \sigma B(x)C(x)$,
where $\sigma$ has rank 2. Symbols $a$, $b$, and
$c$ have rank 1. Thus $\Delta = \{a,b,c,\sigma,\tau\}$ with 
$\Delta_0 = \{\tau\}$, $\Delta_1 = \{a,b,c\}$, and $\Delta_2 = \{\sigma\}$. 
The transducer $\G_2'$ generates all trees of the form $a^n\sigma(b^n\tau,c^n\tau)$,
i.e., a chain of $a$'s that forks into a chain of $b$'s and a chain
of $c$'s, cf. Fig.~\ref{Fig.1} for $n = 3$. This ends the example.

\begin{figure}
  \centering
\tikzset{sibling distance=4.5em, level distance=2.2em}
\begin{tikzpicture}

\node {$a$}
	child {node {$a$}
		child {node {$a$}
			child {node {$\sigma$}
				child {node {$b$}
					child {node {$b$}
						child {node {$b$}
							child {node {$\tau$}}
						}
					}
				}
				child {node {$c$}
					child {node {$c$}
						child {node {$c$}
							child {node {$\tau$}}
						}
					}
				}
			}
		}
	};

\end{tikzpicture}
  \caption[1]{A tree generated by the $\RT$(Pushdown) transducer $\G'_2$.}
  \label{Fig.1}
\end{figure}
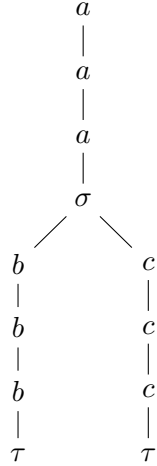

\smallskip
We will now discuss the fact that the regular tree $S$
grammar may also be viewed as a generalization of the
context-free $S$ grammar, just as in the case without $S$. A tree is
an expression built up from the ``operators'' of the ranked
alphabet $\Delta$. When an interpretation of these operators is given,
as operations on some set $D$ (a so-called $\Delta$-algebra $D$), then the
expressions of $T_\Delta$ denote elements of $D$, in the usual way (see
\cite{GogThaWagWri}). Thus an $\RT(S)$ transducer $\G$ together with a
$\Delta$-algebra $D$ define a translation from the input set $I$ to the
output set $D$. Note that from the programming point of view it is
this time best to view an $\RT(S)$ transducer as a set of recursive
function procedures, with arguments of type $S$ and results of
type $T_\Delta$ (or $D$). In this way the $\RT(S)$ transducer (together with
a $\Delta$-algebra) generalizes the $\CF(S)$ transducer. In fact, as is
well known, by taking the $\Delta$-algebra $\Delta_0^*$ with every 
$\sigma \in \Delta$ of \mbox{rank $\geq 1$} interpreted as concatenation, 
the $\RT(S)$ transducer turns, as a special case, into the $\CF(S)$ transducer, 
with terminal alphabet $\Delta_0$. In this case every tree denotes its yield,
defined as follows.

Let $\eps$ be a special symbol of rank 0. 
Then (1) for $\sigma\in\Delta_0$ with $\sigma\neq\eps$, $\yield(\sigma) = \sigma$,
and $\yield(\eps) = \lambda$, and 
(2) for $\sigma \in \Delta_k$, $k \geq 1$, 
$\yield(\sigma t_1 t_2\cdots t_k) = \yield(t_1) \yield(t_2)\cdots\yield(t_k)$. 
For a tree language $L \subseteq T_\Delta$,
$\yield(L) = \{\yield(t) \mid t \in L\}$. 
For a relation $R \subseteq I \times T_\Delta$, we
define $\yield(R) = \{(u,\yield(t)) \mid (u,t) \in R\}$. And for a class $K$
of tree languages or relations, $\yield(K) = \{\yield(B) \mid B \in K\}$.

It is easy to see (and will be proved in
Theorem \ref{theo8.1}(1)) that, for every $S$,
\begin{center}
$\lambdaCF(S) = \yield(\lambdaRT(S))$
\end{center}
and, in fact, $\tauCF(S) = \yield(\tauRT(S))$, and so $\alphaRT(S) = \alphaCF(S)$. 
Of course also $\tauRT(S) \subseteq \tauCF(S)$, because an $\RT(S)$
transducer is defined as a special type of $\CF(S)$ transducer.

Just as we viewed $\REG(S)$ r-acceptors as one-way $S$
automata, we can view $\RT(S)$ \mbox{r-acceptors} as \emp{top-down $S$ tree
automata}, i.e., ordinary top-down tree automata such that to
each occurrence of its states an $S$-configuration is associated.
Such an automaton receives a tree from $T_\Delta$ as input, processes
the tree from the root to its leaves (splitting at each node
into as many copies as the node has sons), and accepts the tree
if all parallel computations are successful. Let us say that an
$\RT(S)$ r-acceptor $\G = (N,e,\Delta,\Ain,R)$ is in \emp{normal form}
if all its rules are of one of the forms 
$A \rightarrow \ruleif b \rulethen B(f)$ or
$A \rightarrow \ruleif b \rulethen \sigma B_1(f_1)\cdots B_k(f_k)$
with $\sigma\in\Delta_k$, $k \geq 0$. As for the
regular case it is easy to show that if $S$ has an identity, then
every $\RT(S)$ r-acceptor can be transformed into normal form. A
rule $A \rightarrow \ruleif b \rulethen \sigma B_1(f_1)\cdots B_k(f_k)$, 
with $k \geq 1$, should be
interpreted as: ``if, at the current node, the state of the tree
automaton is $A$, $b$ holds for its storage configuration, and the
label of the node is $\sigma$, then the tree automaton splits into $k$
copies, one for each son of the node; at the $j$-th son, the
automaton goes into state $B_j$ and applies $f_j$ to its storage
configuration.'' A rule $A \rightarrow \ruleif b \rulethen \sigma$, 
with $\sigma\in\Delta_0$, should be
interpreted as: ``if $\ldots\langle$as before$\rangle\ldots$, 
then the automaton halts
at this node.'' Finally, a rule $A \rightarrow \ruleif b  \rulethen B(f)$ 
should be interpreted as: ``if $\ldots\langle$as before$\rangle\ldots$, 
then the automaton stays
at this node, goes into state $B$, and applies $f$ to its storage
configuration.'' Grammar $\G'_2$ discussed above, is a top-down
pushdown tree automaton, in this way.

Thus, if $S$ has an identity, then $\lambdaRT(S)$ is the class
of tree languages accepted by top-down $S$ tree automata.

An example of an $\RT(\Integer)$ transducer is
$\widetilde \G_1 = (N,e,\Delta,\Ain,R)$, where $N = \{A\}$, $e = \mathrm{en}$,
$\Delta = \{+,1\}$ with $\rk(+) = 2$ and $\rk(1) = 0$, $\Ain = A$,
and $R$ contains the rules (written informally)

\begin{center}\parbox{0cm}{
  \begin{tabbing}
      \=$A(x) $\=$\ar $ \=$\ruleif x\neq 0 \rulethen +\;A(x-1) \; A(x-1)$\\
      \>$A(x) $\>$\ar $\>$\ruleif x = 0 \rulethen 1$
  \end{tabbing}}
\end{center}

\noindent Note that $\widetilde \G_1$ is in normal form. It translates a nonnegative
integer $n$ into an expression over $\{+,1\}$ that, when interpreted
over the integers, with + as ordinary addition and 1 as the
integer 1, denotes $2^n$. As a tree, this expression is the full
binary tree of depth $n$. Thus, since

\begin{center}\parbox{0cm}{
  \begin{tabbing}
   \=$A(2)$ \= $\Ar$ \= ${+}A(1)A(1)$\\
   \> \> $\Ar$ \> ${++}A(0)A(0)A(1)$\\
   \> \> $\Ar$ \> ${++}1A(0)A(1)$\\
   \> \> $\Ar$ \> ${++}1A(0){+}A(0)A(0)$\\
   \> \> $\quad\ \Rightarrow^*$ \> $ {++}11{+}11$,
  \end{tabbing}}
\end{center}

\noindent 2 is translated into the tree +(+(1,1),+(1,1)) that denotes 4
when interpreted over the \mbox{$\Delta$-algebra} of integers. Note that, when
interpreted over the $\Delta$-algebra $a^*$, with + interpreted as
concatenation and 1 as $a$, this tree denotes $aaaa$ (and $\widetilde \G_1$ is
really $\G_1$).
Viewed as a program, with $\{+,1\}$ interpreted over the
integers, $\widetilde \G_1$ looks as follows (where + is written infix, as usual):

\begin{lstlisting}[mathescape]
function A(x: integer): integer;
    begin if x $ \neq $ 0
             then return(A(x-l) + A(x-1))
             else return(1)
          fi
    end;
{main program}
obtain n;
deliver A(n).
\end{lstlisting}


\newpage
\section{\sect{Specific storage types}}\label{sect3}

In this section we discuss several cases in which,
taking $S$ to be a specific storage type, the $\CF(S)$ transducer
turns into a well-known device. In each of these cases we claim
that the $\CF(S)$ transducer is just a ``definitional variation'' of
the known device. This means that their definitions are very
close (differing in some technical details only), and that the
equivalence of the two formalisms is easy to prove (sometimes
using some nontrivial known property of the device). Moreover,
we hope that the formulation of the device as a $\CF(S)$ transducer
gives more insight into ``what it really is'', in other words,
that the $\CF(S)$ transducer captures the essence of the device.
Thus, whenever the reader is not familiar with a certain device,
he may safely consider the corresponding $\CF(S)$ transducer as its
definition (but he should be careful with determinism). In what
follows we will usually say that the $\CF(S)$ transducer ``is'' the
device, in order to avoid the repeated use of phrases like ``can
be viewed as'', ``corresponds to'', ``is a definitional variation
of'', etc.

\vspace{3.5em}

(1) The first case is trivial: after adding a storage
type to context-free grammars, we now drop it again.

\begin{df}
    The \emp{trivial storage type} $S_0$ is defined by 
    $S_0 = (C,P,F,I,E,m)$, where
    \begin{itemize}
     \setlength{\itemsep}{0pt}
     \item $C = \{c_0\}$ for some arbitrary, but
    fixed, object $c_0$,
    \item $P = \emptyset$,
    \item $F = \{\id\}$,
    \item $I = C$, 
    \item $E = \{\mathrm{en}\}$, and 
    \item $m(\id) = m(\mathrm{en}) = \id(C)$.  \QEDB
    \end{itemize}
  
\end{df}

Obviously, the $\CF(S_0)$ grammar is the \emp{context-free
grammar}. As argued in the Introduction, it may also be viewed as
a recursive r-acceptor. Similarly, the $\REG(S_0)$ grammar is the
\emp{regular} (or right-linear) \emp{grammar}, and the $\RT(S_0)$ grammar is the
\emp{regular tree grammar}. Moreover, the $\REG(S_0)$ r-acceptor is the
\emp{finite automaton}, in particular when it is in normal form (note
that $S_0$ has an identity). The $\RT(S_0)$ r-acceptor is the \emp{top-down
finite tree automaton}, in particular, again, when it is in
normal form. Note, however, that usually finite (tree) automata
do not have $\lambda$-moves, i.e., rules $A \rightarrow B(\id)$; 
it is easy to see that these can be removed.

Thus $\lambdaCF(S_0) = \CF$, $\lambdaREG(S_0) = \REG$,
and $\lambdaRT(S_0) = \RT$ (the class of regular tree languages).

\vspace{3.5em}

(2) Our first nontrivial case is to take $S$ to be the
storage type pushdown. It is funny to attach pushdowns to the
nonterminals of a context-free grammar, but let us see what
happens.

\begin{df}
    The storage type \emp{Pushdown}, abbreviated P (not to
    be confused with the set of predicate symbols!), is defined by
    Pushdown = $(C,P,F,I,E,m)$, where
    
    \begin{itemize}
     \setlength{\itemsep}{0pt}
    \item $C = \Gamma^+$ for some fixed infinite set $\Gamma$ of pushdown
    symbols,
    \item $P = \{\ttop{=}\,\gamma \mid \gamma\in\Gamma\} \cup \{\bottom\}$,
    \item $F = \{\push(\gamma) \mid \gamma\in\Gamma\} \cup \{\pop\} 
	   \cup\{\stay(\gamma) \mid \gamma\in\Gamma\} \cup \{\stay\}$,
    \item $I = \{u_0\}$ for a fixed object $u_0$,
    \item $E = \Gamma$, with $m(\gamma)(u_0) = \gamma$ for every $\gamma\in E$, 
    \end{itemize}
    
    \noindent and for every $c = \delta\beta$ with $\delta\in\Gamma$ and $\beta\in\Gamma^*$
    (intuitively, $\delta$ is the
    top of the pushdown $\delta\beta$),
   
   \begin{itemize}
     \setlength{\itemsep}{0pt}
    \item $m(\ttop{=}\,\gamma)(c) = \true $ iff $ \delta=\gamma$,\footnote{Throughout
this paper we use ``iff'' as an abbreviation of ``if and only if''.
}
    \item $m(\bottom)(c) = \true $ iff $ \beta=\lambda$,
    \item $m(\push(\gamma))(c) = \gamma\delta\beta$,
    \item $m(\pop)(c) = \beta $ if $ \beta \neq \lambda$ and undefined
    otherwise,
    \item $m(\stay(\gamma))(c) = \gamma\beta$, and
    \item $m(\stay)(c) = c$. \QEDB
    \end{itemize}
    
\end{df}

It should be clear that this Pushdown corresponds to
the usual storage type of pushdowns. Note however that there is
no empty pushdown. In fact, ordinary pushdown automata halt in
case the pushdown becomes empty, but (if Pushdown would have an
empty pushdown) a $\CF$(Pushdown) transducer could always continue
on an empty pushdown with unconditional rules; this would cause
some technical inconveniences.

Note that Pushdown has an identity, viz. $\stay$. Note
that, due to our use of a set of encodings, each $\CF$(Pushdown)
transducer $\G = (N,\gamma_0,\Delta,\Ain,R)$ has its own initial bottom
pushdown symbol $\gamma_0$. Note that the $\stay(\gamma)$ instructions are
superfluous: if the pushdown does not consist of one ($\bottom$)
cell, then $\stay(\gamma)$ can be simulated by $(\pop;\push(\gamma))$; the
pushdown symbol of the bottom cell can be kept in the finite
control. The bottom predicate is also superfluous: one can
always mark the pushdown symbol of the bottom cell (and keep it
marked). Note finally that we may assume that each test in a
rule of a $\CF$(P) transducer consists of a single predicate symbol
of the form $\ttop{=}\,\gamma$. We state this as a lemma (see Lemma~3.30 of
\cite{EngVog2}).

\begin{lm}\label{cfp_nf}
    Every $\CF(\mathrm{P})$ transducer is equivalent to one in which
    all rules are of the form $A \rightarrow \ruleif \ttop{=}\,\gamma \rulethen \xi$.
\end{lm}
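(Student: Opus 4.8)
The plan is to start from an arbitrary $\CF(\mathrm{P})$ transducer $\G = (N,\gamma_0,\Delta,\Ain,R)$ and convert its rules to the required form in two preprocessing stages followed by one main construction. First I would apply the normalization already discussed in the text: using the disjunctive normal form of boolean expressions, I may assume every test in a rule is a conjunction of predicate symbols and negated predicate symbols drawn from $P = \{\ttop{=}\,\gamma \mid \gamma\in\Gamma\} \cup \{\bottom\}$. Next I would eliminate $\bottom$ and negated top-tests as remarked in the paragraph preceding the lemma: one marks the bottom pushdown symbol by taking a fresh copy $\bar\gamma$ of each $\gamma\in\Gamma$ used, replacing the initial symbol $\gamma_0$ by $\bar\gamma_0$, and adjusting every instruction so that pushing onto the single-cell pushdown pushes $\bar\gamma$'s while all other pushes produce unmarked symbols (concretely, replace $\push(\gamma)$ by a rule that tests whether the current top is marked and pushes $\gamma$ or $\bar\gamma$ accordingly — since top-tests against all $\gamma$ are available this is a finite case split). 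After this, $m(\bottom)(c)=\true$ iff the top symbol is one of the marked symbols, so $\bottom$ and $\text{\bool{not}}\,\bottom$ are each equivalent to a disjunction of $\ttop{=}\,\gamma$ tests, which by the disjunctive-normal-form remark can be distributed into separate rules. Likewise $\text{\bool{not}}\,(\ttop{=}\,\gamma)$ is equivalent to the disjunction $\bigvee_{\gamma'\neq\gamma}(\ttop{=}\,\gamma')$ over the finitely many pushdown symbols that actually occur — but here is the subtlety, so I postpone it.

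The main construction then handles a rule $A \rightarrow \ruleif b \rulethen \xi$ whose test $b$ is a single positive predicate $\ttop{=}\,\gamma$ or the empty conjunction \bool{true}. For a single positive top-test we are already done. For \bool{true} (an unconditional rule), I would replace the single rule by the finite family $\{\,A \rightarrow \ruleif \ttop{=}\,\gamma \rulethen \xi \mid \gamma\in\Gamma'\,\}$, where $\Gamma'$ is the finite set of pushdown symbols that can ever appear on top of the pushdown during a derivation of $\G$ — this set is finite because it is generated from $\gamma_0$ and the finitely many symbols named in $\push(\gamma)$ and $\stay(\gamma)$ instructions occurring in $R$. A conjunction of two or more positive top-tests $\ttop{=}\,\gamma_1 \wedge \ttop{=}\,\gamma_2$ with $\gamma_1\neq\gamma_2$ is unsatisfiable, so that rule may be omitted; a conjunction containing $\ttop{=}\,\gamma$ together with negated top-tests is equivalent to $\ttop{=}\,\gamma$ alone once we have checked the negated literals are for symbols $\neq\gamma$ (otherwise omit the rule). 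Thus every rule reduces to either the target form or is discarded, and one argues by a straightforward induction on derivation length that $\Rightarrow^*$ is unchanged on total configurations reachable from $\Ain(m(e)(u_0))$, hence $T(\G)$, $L(\G)$, and $A(\G)$ are preserved.

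The hard part — and the reason the bottom-marking step comes first — is making sure that after all substitutions every test that needs to be replaced by a finite disjunction of $\ttop{=}\,\gamma$ tests really does range over a finite alphabet, even though $\Gamma$ is infinite. This is exactly why one must first confine attention to the finite set $\Gamma'$ of pushdown symbols reachable in actual derivations: only then is $\text{\bool{not}}\,(\ttop{=}\,\gamma)$ finitely expressible and only then does the \bool{true}-to-disjunction replacement terminate. I expect the remaining verification — that each local rewriting step of the transformed transducer simulates exactly one step of $\G$ and conversely, and that the bottom-marking bookkeeping is consistent under $\push$, $\pop$, $\stay(\gamma)$, and $\stay$ — to be routine and I would only sketch it, citing Lemma~3.30 of \cite{EngVog2} for the details.
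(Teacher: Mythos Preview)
Your approach is essentially the paper's: eliminate $\bottom$ by marking the bottom cell, restrict attention to the finite set $\Gamma_\G$ of pushdown symbols actually used, put tests in DNF, and then observe that any satisfiable conjunction of (negated and nonnegated) $\ttop{=}\,\gamma$ literals over $\Gamma_\G$ is equivalent to a single positive $\ttop{=}\,\gamma$. The paper streamlines the case analysis slightly by first padding each conjunct so that \emph{every} $\ttop{=}\,\gamma$ with $\gamma\in\Gamma_\G$ occurs exactly once; then the three cases (all negated, two positive, exactly one positive) are immediate and your separate treatment of \bool{true} is absorbed automatically.

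One concrete slip in your bottom-marking sketch: it is $\stay(\gamma)$, not $\push(\gamma)$, that needs the case split on whether the current top is marked. A $\push(\gamma)$ always produces an \emph{unmarked} new top cell (the pushed cell is never the bottom), whereas $\stay(\gamma)$ must preserve the mark when rewriting the bottom cell. As you wrote it, pushing onto a single-cell pushdown would create a second marked cell and break the invariant. This is easily repaired and does not affect the rest of the argument.
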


\begin{proof}
    Let $\G = (N,\gamma_0,\Delta,\Ain,R)$ be a $\CF$(P) transducer. We may
    assume that $\G$ does not use the bottom predicate (see above). Let
    $\Gamma_\G$ be the set of all pushdown symbols that occur in $R$, together
    with $\gamma_0$ (these are all pushdown symbols $\G$ ever uses). First
    transform $\G$ so that all tests in rules are conjunctions of
    negated and nonnegated predicate symbols (through their
    disjunctive normal form); we may assume that every predicate
    symbol $\ttop{=}\,\gamma$, with $\gamma\in\Gamma_\G$, occurs exactly once in
    such a conjunction. 
    Now consider a rule $A \rightarrow \ruleif b \rulethen \xi$.
    If $b$ contains
    only negated predicate symbols, i.e., $\text{\bool{not}}\ttop{=}\,\gamma$,
    then throw the rule away. Do the same if $b$ contains two
    nonnegated predicate symbols (they are mutually exclusive). In
    the remaining rules, erase all negated predicate symbols
    (because $\ttop{=}\,\gamma_1 \text{ \bool{and} } \text{\bool{not}}\ttop{=}\,\gamma_2$ 
    is equivalent to $\ttop{=}\,\gamma_1$). Now all
    tests are of the form $\ttop{=}\,\gamma$. Note that the construction
    preserves several special properties of $\CF$(P) transducers (such
    as determinism and regularity).
\end{proof}

Hence the $\REG$(P) r-acceptor is the ordinary one-way
\emp{pushdown automaton}. And the $\RT$(P) r-acceptor, in particular when
in normal form, is the top-down \emp{pushdown tree automaton},
recently defined by Guessarian (\cite{Gue2}; in fact she calls it the
restricted pushdown tree automaton, see (8) of this section).
Actually this is the only example in the literature of a
top-down $S$ tree automaton.

Now we turn to a less predictable connection. The $\CF$(P)
grammar is the \emp{indexed grammar} of \cite{Aho1} (see \cite{HopUll,Sal})
and so $\lambda$-$\CF$(P) = Indexed. Viewing flags as pushdown symbols, the
sequence of flags attached to each nonterminal in a sentential
form of an indexed grammar behaves just as a pushdown. Flag
production corresponds to pushing, and flag consumption to
popping. A flag producing rule (called a production in \cite{Aho1})
corresponds to an unconditional rule
$A \rightarrow w_0 B_1(\push(\gamma_1))w_1\cdots B_n(\push(\gamma_n))w_n$,
where some of the
$\push(\gamma_j)$ may also be stay (actually, in an indexed grammar, more
than one symbol can be pushed; this can be done here by using
elements of $N(F^+))$. A flag consuming rule (called an index
production in \cite{Aho1}) corresponds to a rule
$A \rightarrow\ruleif \ttop{=}\,\gamma \rulethen w_0 B_1(\pop)w_1\cdots B_n(\pop)w_n$ 
(consumption of $\gamma$).
Thus, in a $\CF$(P) grammar both kinds of rules are present, and
flag production and consumption can even be mixed in one rule
(it is easy to see that this can be simulated by an indexed
grammar). Consequently, a $\CF$(P) grammar has a more uniform
notation than an indexed grammar (see \cite{ParDusSpe2} for a
definition of indexed grammar closer to the $\CF$(P) grammar).
Also, personally, I must confess that I only understood indexed
grammars when I~found out they were just $\CF$(P) grammars in
disguise!

An example of a $\CF$(P) grammar was given as $\G_2$ of
Section \ref{sect1.1}. The reader should now recognize it as an indexed
grammar. Let us write down $\G_2$ formally: $\G_2 = (N,e,\Delta,\Ain,R)$,
where $N = \{\Ain,A,B,C\}$, $e = \#$, $\Delta = \{a,b,c\}$, and $R$ consists of the
rules

\begin{center}\parbox{0cm}{
\begin{tabbing}
    \=$\Ain \ \ $\=$\ar $\=$A(\stay)$\\
    \>$A $\>$\ar $\>$aA(\push (a))$\\
    \>$A $\>$\ar $\>$B(\stay)C(\stay)$\\
    \>$B $\>$\ar $\>$\ruleif \ttop {=}\, a \rulethen bB(\pop)$\\
    \>$B $\>$\ar $\>$\ruleif \ttop {=}\, \# \rulethen \lambda$\\
    \>$C $\>$\ar $\>$\ruleif \ttop {=}\, a \rulethen cC(\pop)$\\
    \>$C $\>$\ar $\>$\ruleif \ttop {=}\, \# \rulethen \lambda$
\end{tabbing}}
\end{center}	
Note that the pushdown alphabet is not explicitly mentioned in
$\G_2$; it can be obtained from $R$ and $\#$. As noted before, when
dropping $C(\stay)$ from the third rule, $\G_2$ turns into a $\REG$(P)
r-acceptor, i.e., a pushdown automaton. Also, changing $\lambda$ into $\tau$,
and changing the third rule into $A \rightarrow \sigma B(\stay)C(\stay)$, we obtain
the $\RT$(P) r-acceptor $\G_2'$ of Section \ref{sect2}, i.e., a top-down
pushdown tree automaton.

Thus, $\CF(S)$ grammars may also be viewed as a
generalization of indexed grammars: the nonterminals are indexed
by $S$-configurations rather than sequences of flags.

Since $\CF(S)$ grammars are attribute grammars with one
inherited attribute (Section \ref{sect1.2}), it follows that the indexed
languages are generated by attribute grammars with one,
inherited, attribute of type Pushdown; this was shown in \cite{Dob}.

The $\RT$(P) grammar might be called the \emp{indexed tree
grammar}. It corresponds to the regular tree grammar in exactly
the same way as the indexed grammar to the context-free grammar.
Thus $\lambdaRT$(P) is both the class of tree languages accepted by
pushdown tree automata and the class of tree languages generated
by indexed tree grammars (in fact, as shown in \cite{Gue2}, these are
the context-free tree languages \cite{Rou}). Since, as noted in the
previous section for arbitrary $S$, 
$\yield(\lambdaRT$(P)$) = \lambdaCF$(P), the
yields of these languages are the indexed languages 
(cf.~Section~5 of \cite{Gue2}).

One-turn pushdown automata are pushdown automata that
can never push anymore after doing a pop move. This property can
easily be incorporated in the storage, and this gives rise to
the storage type \emp{One-turn Pushdown}, abbreviated P$_{1\text{t}}$, 
defined and studied in \cite{Gin, Vog2, Vog3}. 
The $\REG($P$_{1\text{t}})$ r-acceptor is the one-way one-turn 
pushdown automaton \cite{GinSpa}, and the $\CF($P$_{1\text{t}})$ grammar is
the \emp{restricted indexed grammar} \cite{Aho1} (defined in such a way
that after flag consumption there may be no more flag
production). The grammar $\G_2$ of Section \ref{sect1.1} can easily be turned
into such a $\CF($P$_{1\text{t}})$ grammar, because its pushdowns make one turn
only.

\vspace{3.5em}

(3) The storage type Counter (of the usual one-way
counter automaton) can be obtained by restricting Pushdown to
have one pushdown symbol only: therefore we will call this
storage type also Pure-pushdown.

\begin{df}
    The storage type \emp{Counter} or \emp{Pure-pushdown} is
    defined in exactly the same way as Pushdown, except that
    $\Gamma=\{\gamma_0\}$ for some fixed symbol $\gamma_0$. \QEDB
\end{df}

Thus, the $\REG$(Counter) r-acceptor is the one-way
\emp{counter automaton}. Note that a pushdown of $k$ cells represents
the fact that the counter, say $x$, contains the number $k-1$;
$\push(\gamma_0)$ corresponds to $x := x+1$, $\pop$ to $x := x-1$,
and the bottom predicate to the predicate $x = 0$.

The $\CF$(Counter) grammar is a special type of indexed
grammar: the \emp{1-block-indexed grammar} (see \cite{Ern}). Note that the
grammar $\G_2$ of Section \ref{sect1.1} can easily be turned into a
$\CF$(Counter) grammar by using the predicate bottom rather than
$\ttop{=}\,\#$ (and, dropping $C(\stay)$, it is a one-way counter automaton).

\vspace{3.5em}

(4) The (new) storage type Count-down is the same as
Integer, defined at the end of Section~\ref{sect1.1}, except that we drop
the negative integers from the set of configurations.

\begin{df}
    The storage type \emp{Count-down} is $(C,P,F,I,E,m)$,
    where
    \begin{itemize}
     \setlength{\itemsep}{0pt}
     \item $C$ is the set of nonnegative integers,
     \item $P = \{\nnull\}$,
     \item $F = \{\dec\}$,
     \item $I = C$,
     \item $E = \{\en\}$, with $m(\en)=\id(C)$, 
    \end{itemize}
    \noindent and for every $c \in C$, 
    \begin{itemize}
     \setlength{\itemsep}{0pt}
     \item $m(\nnull)(c) = \true $ iff $ c =0$,
     \item $m(\dec)(c+1) = c$, and
     \item $m(\dec)(0)$ is undefined.\QEDB
    \end{itemize}
     
\end{df}

The $\CF$(Count-down) grammar is the \emp{EOL system} (see
\cite{RozSal}), and so $\lambdaCF$(Count-down) is the class of EOL
languages. Rather than applying rules in parallel (as in a
``real'' EOL system), the $\CF$(Count-down) grammar applies rules in
the ordinary context-free way; but it chooses an integer $n$ at
the start of the derivation, and then sees to it, by counting
down to zero, that the paths through the derivation tree do not
become longer than $n$. Note that Count-down has no identity.

Clearly, grammar $\G_1$ of Section \ref{sect1.1} is a $\CF$(Count-down)
grammar. It corresponds to the EOL system with the rule $a \rightarrow aa$.
In general, an EOL system with alphabet $\Sigma$ and terminal alphabet
$\Delta\subseteq \Sigma$ corresponds to a $\CF$(Count-down) grammar 
with \mbox{terminals} $\Delta$ and nonterminals $\{\bar{a}\mid a\in\Sigma\}$;
an EOL rule $a\rightarrow a_1 a_2\cdots a_n$
corresponds to a rule \mbox{$\bar{a} \rightarrow \ruleif \text{\bool{not}}\nnull 
\rulethen \bar{a}_1(\dec)\bar{a}_2(\dec)\cdots\bar{a}_n(\dec)$;}
and moreover the grammar has rules
\mbox{$\bar{a} \rightarrow \ruleif \nnull 
\rulethen a$} for all $ a\in\Delta$. Vice versa, it is not very
difficult to show that the $\CF$(Count-down) grammar is not more
powerful than the EOL system (cf. \cite{Eng2}); the proof involves
closure of EOL under homomorphisms.

Note that $\lambdaREG$(Count-down) is just the class of
regular languages.

\vspace{3.5em}

(5) Strings can be read from left to right, symbol by
symbol. This is defined in the next storage type that we call
One-way, because it corresponds to the input tape of one-way
automata.

\begin{df}
    The storage type \emp{One-way} is $(C,P,F,I,E,m)$, where
    \begin{itemize}
     \setlength{\itemsep}{0pt}
     \item $C = \Omega^*$ for some fixed infinite set $\Omega$ of input symbols,
     \item $P = \{\first{=}\,a \mid a \in \Omega\} \cup \{\empt\}$,
     \item $F = \{\rread\}$,
     \item $I = \Omega^*$, 
     \item $E = \{\Sigma\mid\Sigma$ is a finite subset of $ \Omega\}$,
     with $m(\Sigma) = \id(\Sigma^*)$ for every $\Sigma\in E$,
    \end{itemize}
    \noindent and for every $c = bw \in \Omega^*$ (with $b \in \Omega$, $w \in \Omega^*$),
    \begin{itemize}
     \setlength{\itemsep}{0pt}
     \item $m(\first{=}\,a)(c) = (b = a)$,
     \item $m(\first{=}\,a)(\lambda) = \false$,
     \item $m(\empt)(c) = \false$, 
     \item $m(\empt)(\lambda)=\true$,
     \item $m(\rread)(c) = w$, and 
     \item $m(\rread)(\lambda)$ is undefined. \QEDB
    \end{itemize}
\end{df}

Note that One-way has no identity. Note that every
alphabet in $E$ is viewed as one encoding symbol. Note that a
$\CF$(One-way) transducer $(N,\Sigma,\Delta,\Ain,R)$ translates strings of
$\Sigma^*$ into strings of $\Delta^*$. Thus the encoding of the transducer
determines its input alphabet (another example of the usefulness
of a set of encodings). Note finally that it may be assumed that
all rules of a $\CF$(One-way) transducer are of the form
$A \rightarrow \ruleif \first{=}\,a \rulethen \xi$ or of the form 
$A\rightarrow \ruleif \empt \rulethen \xi$ (cf.
Lemma~\ref{cfp_nf} for Pushdown). 

As an example, the following $\CF$(One-way) grammar $\G_4$
translates every string over the alphabet $\Sigma$ into the sequence of
its suffixes. We define $\G_4 = (N,\Sigma,\Sigma\cup\{\#\},\Ain,R)$, where $N = \{A,C\}$,
$\Ain = A$, and $R$ consists of the rules

\begin{center}\parbox{0cm}{
  \begin{tabbing}
      \=$A $\=$\ar $ \=$\ruleif \first{=}\,a \rulethen aC(\rread) \#A(\rread)$
	  \qquad \qquad \=for all $a\in\Sigma$\\
      \>$C $\>$\ar $ \>$\ruleif \first{=}\,a \rulethen aC(\rread)$
	  \qquad \qquad \>for all $a\in\Sigma$\\
      \>$A $\>$\ar $ \=$\ruleif \empt \rulethen \lambda$\\
      \>$C $\>$\ar $ \=$\ruleif \empt \rulethen \lambda$
  \end{tabbing}}
\end{center}

\noindent Clearly, $\G_4$ is a deterministic $\CF$(One-way) transducer that translates
$a_1 a_2 \cdots a_n$ (with $a_j \in \Sigma$) into
$a_1 a_2\cdots a_n\#a_2\cdots a_n\#a_3\cdots a_n\# \cdots\#a_n\#$. Note, by the
way, that, viewing $\G_4$ as a system of recursive (function) procedures, the
correctness of $\G_4$ is immediate.

The $\CF$(One-way) grammar is the \emp{ETOL system}, and the
deterministic $\CF$(One-way) grammar is the deterministic ETOL
system, or \emp{EDTOL system} \cite{RozSal}. Thus $\lambda$-(D)$\CF$(One-way)
is the class of E(D)TOL languages. The sequence of tables applied
during the derivation of the ETOL system corresponds to the
input string of the $\CF$(One-way) grammar (chosen
nondeterministically). Otherwise the correspondence is exactly
the same as for EOL systems and $\CF$(Count-down) grammars. In
fact, Count-down is the same as the restriction of One-way to
one symbol $a$, i.e., $\Omega = \{a\}$: $\nnull$ corresponds to empty (and also
to $\text{\bool{not}}\first{=}\,a$), and $\dec$ corresponds to $\rread$. See
\cite{EngRozSlu} for
a definition of ETOL system that is close to the $\CF$(One-way)
grammar.

Grammar $\G_4$ corresponds to the EDTOL system with a table
$a$ for each $a \in \Sigma$, containing the rules $A \rightarrow aC\#A$ 
and $C \rightarrow aC$ (and
identity rules $b \rightarrow b$ for all other symbols $b$), plus an
additional table that contains the rules $A \rightarrow \lambda$ 
and $C \rightarrow \lambda$ (and identity rules).

We note here that it is quite easy to show that
$\lambdaCF$(One-way) is included in $\lambdaCF$(P) (and even in
$\lambdaCF$(P$_{1\text{t}}$)).
Given a $\CF$(One-way) grammar $\G$ = $(N,\Sigma,\Delta,\Ain,R)$, construct the
$\CF($P$)$ grammar $\G'=(N\cup\{Z\},\#,\Delta,Z,R')$ , where $R'$ consists of
(1) all rules that are needed to build up an arbitrary string in the
pushdown: $Z \rightarrow Z(\push(a))$ and $Z \rightarrow \Ain$, for all $a \in \Sigma$, and 
(2)~all rules that simulate $\G$ with this input string, i.e., all rules of
$\G$ in which $\first{=}\,a$ is replaced by $\ttop{=}\,a$, $\empt$ by $\bottom$, and
$\rread$ by $\pop$. It should be clear that $L(\G') = L(\G)$. Thus the ETOL
languages are contained in the (restricted) indexed languages,
as originally shown, in this way, in \cite{Cul}. This result is
generalized in \cite{Vog1}, showing that ``the ETOL hierarchy'' is
included in ``the OI-hierarchy''. As a particular case it can be
shown that $\lambdaCF$(Count-down) $\subseteq \lambdaCF$(Counter),
i.e., the EOL languages are contained in the 1-block-indexed languages.

We also note that the controlled ETOL systems \cite{Asv,EngRozSlu} 
can be modelled by an appropriate generalization of
One-way, as shown in \cite{Vog1}; as a special case the controlled
linear context-free grammars are obtained, see \cite{Vog2}. In fact,
the other way around, context-free $S$ grammars may be viewed as
``storage controlled'' context-free grammars, i.e., context-free
grammars of which the derivations are controlled by the storage
configurations of $S$. This point of view is explained in the
first chapter of \cite{Vog4}.

The $\REG$(One-way$_{\id}$) transducer is
the finite-state transducer or \emp{a-transducer} (see Definition~\ref{sid}
for the definition of $S_{\id}$ for a storage type $S$). Thus,  
$\tauREG$(One-way$_{\id}$) is the class of
\mbox{a-transductions}. A small difference is that, using $\empt$, the
$\REG$(One-way$_{\id}$) transducer can detect the end of the input string (and
so the deterministic transducers define slightly different
classes of translations). Thus it would be better to say that
it is the a-transducer with endmarker. 
Note also that the $\REG$(One-way$_{\id}$) transducer has a
look-ahead of one input symbol.

The $\REG$(One-way) transducer is slightly more powerful than the 
\emp{generalized sequential machine} (gsm). 
A gsm can only translate the empty input string $\lambda$ 
into itself, whereas a $\REG$(One-way) transducer can translate it 
into any finite number of output strings, using rules of the form 
$\Ain \rightarrow \ruleif \empt \rulethen w$.
Disregarding the empty input string, $\tauREG$(One-way)
is the class of gsm mappings (with endmarker).

As an example, if we drop $C(\rread)$ from the first set of
rules of $\G_4$, we obtain a gsm that translates every $a_1 a_2\cdots a_n$
into $a_1\#a_2\#\cdots\#a_n\#$.

Note the asymmetric way in which input and output
strings are modeled in $\REG$(One-way) transducers: the input by a
storage type, the output by a terminal alphabet. This asymmetry
is of course inherent to the formalism of $\CF(S)$ transducers.

Note finally that the $\REG$(One-way) d-acceptor is the
finite automaton again.

\vspace{2.5em}

(6) Next we generalize input strings to input trees.
They can be read from top to bottom, node by node. See Section \ref{sect2}
for notation concerning trees.

\begin{df}
 The storage type \emp{Tree} is $(C,P,F,I,E,m)$, where
 \begin{itemize}
 \setlength{\itemsep}{0pt}
  \item $C = T_\Omega$ for some fixed ranked set $\Omega$, such that
	  $\Omega_k$ is infinite for every $k \geq 0$,
  \item $P = \{\rroot{=}\,\sigma\mid\sigma\in\Omega\}$, 
  \item $F = \{\sel_i \mid i \in \{1,2,3,\ldots\}\}$,
   \item $I = T_\Omega$,
  \item $E = \{\Sigma \mid \Sigma$ is a finite subset of $\Omega\}$,
     with $m(\Sigma) = \id(T_\Sigma)$ for every $\Sigma \in E$,
 \end{itemize}

 \noindent and for every $c = \tau t_1\cdots t_k \in T_\Omega$ 
  (with $\tau \in \Omega_k$, $k \geq 0$, $t_1,\ldots,t_k \in T_\Omega$),
 
 \begin{itemize}
  \setlength{\itemsep}{0pt}
  \item $m(\rroot{=}\,\sigma)(c) = (\tau = \sigma)$, 
  \item $m(\sel_i)(c) = t_i$ if $1 \leq i \leq k$, and
  \item $m(\sel_i)(c)$ is undefined if $i>k$. \QEDB
 \end{itemize}
\end{df}

Thus $m(\sel_i)$ selects the $i$-th subtree of the given
tree. Clearly, $\REG$(Tree) transducers do not have much sense:
they can only look at one path through the tree.

The $\RT$(Tree) transducer is the \emp{top-down tree
transducer}, and similarly for determinism \cite{Rou, Tha, EngRozSlu,
GecSte}. Thus $\tau$-(D)$\RT$(Tree) is the class of (deterministic)
top-down tree transductions. Hence, the $\CF$(Tree) transducer is
the \emp{top-down tree-to-string transducer} (see, e.g., \cite{EngRozSlu}),
or the \emp{generalized syntax-directed translation scheme} (GSDT,
\cite{AhoUll}). A rule of a top-down tree-to-string transducer has
the form $q(\sigma(x_1,\ldots, x_k)) \rightarrow 
w_0q_1(x_{j_1})w_1\cdots q_n(x_{j_n})w_n$ with $n \geq 0$;
this corresponds to the $\CF($Tree$)$ transducer rule
$q \rightarrow \if \ruleif \rroot{=}\,\sigma \rulethen 
w_0q_1(\sel_{j_1})w_1\cdots q_n(\sel_{j_n})w_n$.

As an example, let $\G = (N,\Delta,\Ain,R)$ be an ordinary context-free grammar,
say, in Chomsky normal form. The following
$\CF$(Tree) transducer $\G'$ generates the language $L(\G')= \{w\#w\#w \mid
w \in L(\G)\}$. Let $\G' = (N\cup\{Z\},\Sigma,\Delta\cup\{\#\},Z,R')$, where
$\Sigma = R \cup \{\$\}$, $\Sigma_0 = \{r \in R \mid r \text{ has the form }A
\rightarrow a\}, \Sigma_1 = \{\$\}$, $\Sigma_2 = \{r \in R \mid r$ has the
form $A \rightarrow BC\}$, and $R'$ is defined as follows.
\begin{itemize}
  \item The rule $Z \rightarrow \ruleif \rroot{=}\,\$ \rulethen
  \Ain(\sel_1)\#\Ain(\sel_1)\#\Ain(\sel_1)$ is in $R'$.
  \item If $r: A \rightarrow BC$ is in $R$, then $A \rightarrow \ruleif \rroot{=}\,r
  \rulethen B(\sel_1)C(\sel_2)$ is in $R'$.
  \item If $r: A \rightarrow a$ is in $R$, then $A \rightarrow \ruleif \rroot{=}\,r
  \rulethen a$ is in $R'$.
\end{itemize}

\noindent Clearly $\G'$ translates a (rule labeled) derivation tree of $\G$
(with extra root $\$$) into $w\#w\#w$ where $w$ is the ``yield'' of the
tree.

As shown in \cite{Eng2, EngRozSlu}, the ETOL system is
really the ``monadic case'' of the top-down tree transducer (i.e.,
all input symbols have rank 1 or 0). In fact, if all elements of
$\Omega$ have rank 1 or 0, then Tree becomes (almost) the same as
One-way.

The class $\lambdaRT$(Tree) is called the class of surface tree
languages \cite{Rou, Eng2}; $\alphaRT$(Tree) is the class $\RT$ of regular
tree languages (as shown in \cite{Rou}). Note finally that the
$\RT$(Tree$_{\id}$) transducer is a top-down tree transducer for which
``$\lambda$-moves'' are allowed (considered in [Eng6,*MalVog]).

\vspace{1.5em}

(7) Input trees can of course also be read by walking
from node to node along the edges of the tree. This is a
generalization of the previous storage type, in which one can
only walk downwards.

Informally, the storage configurations of Tree-walk
are nodes of trees. We assume the reader to be familiar with the
usual informal terminology concerning trees.

\begin{df}
 The storage type \emp{Tree-walk} is $(C,P,F,I,E,m)$, where 
 \begin{itemize}
  \setlength{\itemsep}{0pt}
  \item $C = \{(t,n) \mid t \in T_\Omega, n\text{ is a node of }t\}$, 
        with $\Omega$ as in Tree,
  \item $P = \{\llabel{=}\,\sigma \mid \sigma \in \Omega\} \cup \{\rroot\} \cup
      \{\son{=}\,i \mid i \in \{1,2,3,\ldots\}\}$,
  \item $F = \{\down_i \mid i \in \{1,2,3,\ldots\}\} \cup \{\up,\stay\}$,
  \item $I = T_\Omega$,
  \item $E = \{\Sigma \mid \Sigma\text{ is a finite subset of }\Omega\}$,
 \end{itemize}
  \noindent and $m$ is defined as follows. For every $(t,n)\in C$, 
  \begin{itemize}
  \setlength{\itemsep}{0pt}
   \item $m(\llabel{=}\,\sigma)(t,n) = \true$ iff $n$ is labeled $\sigma$ in $t$,
   \item $m(\rroot)(t,n) = \true$ iff $n$ is the root of $t$,
   \item $m(\son{=}\,i)(t,n) = \true$ iff $n$ is the $i$-th son of its father,
   \item $m(\down_i)(t,n) = (t,i$-th son of $n)$, and
   \item $m(\up)(t,n) = (t,$ father of $n)$.
  \end{itemize}
  \noindent Note that some of these may be undefined. Furthermore, 
   \begin{itemize}
  \setlength{\itemsep}{0pt}
  \item $m(\stay) = \id(C)$,
   \end{itemize}
   \noindent and for every $\Sigma \in E$,
   \begin{itemize}
  \setlength{\itemsep}{0pt}
     \item $m(\Sigma)(t) = (t,n)$ for every $t \in T_\Sigma$,
	where $n$ is the root of $t$, and
     \item $m(\Sigma)(t)$ is undefined for $t\notin T_\Sigma$. \QEDB
 \end{itemize}
\end{df}

The $\son{=}\,i$ predicates are needed because otherwise a
$\REG$(Tree-walk) transducer is not even able to do a depth-first
left-to-right search of the input tree (as shown in \cite{KamSlu}):
when it returns to the father, it does not know which son to
visit next. By the way, it is open whether the $\REG$(Tree-walk)
d-acceptor can accept all regular tree languages.

\vspace{1em}

\begin{newob}\rm\label{bojcol}
This is not open anymore. The problem was solved by Boja\'nczyk and Colcombet 
in~\cite{*BojCol2}: the $\REG$(Tree-walk)
d-acceptor \emp{cannot} accept all regular tree languages, i.e., 
$\alphaREG$(Tree-walk) is properly included in $\RT$. 
In~\cite{*BojCol1} they proved that $\alphaDREG$(Tree-walk) 
is properly included in $\alphaREG$(Tree-walk). 
\end{newob}

\vspace{1em}

The $\REG$(Tree-walk) transducer is the \emp{tree walking automaton} of
\cite{AhoUll}, which is called \emp{checking tree transducer} (CT
transducer) in \cite{EngRozSlu}. It translates trees into strings,
and is equivalent to a subcase of the top-down tree-to-string
transducer (cf. (7) of Section \ref{sect6}).

We now wish to convince the reader that the
deterministic $\RT$(Tree-walk) transducer can be viewed as the
\emp{attribute grammar} of \cite{Knu}. Actually, it corresponds to the
attribute grammar viewed as tree transducer, cf. \cite{EngFil}. But
if we also take into account the possibility of interpreting $\Delta$
(into a $\Delta$-algebra, see Section \ref{sect2}), it really corresponds to the
attribute grammar. However, in this case we do not claim that
$\tauDRT$(Tree-walk) is equal to the tree transductions realized by
attribute grammars; due to several (uninteresting) technical
details this may not be true. Still, the formalisms are ``very
close''. To see this, let us consider an arbitrary deterministic
$\RT$(Tree-walk) transducer $\G = (N,\Sigma,\Delta,\Ain,R)$. Then $\Sigma$
determines,
in some sense, the underlying context-free grammar of the
attribute grammar; the input trees from $T_\Sigma$ are the derivation
trees of the context-free grammar (this holds in particular in
the ``many-sorted case'', see \cite{GogThaWagWri}). The elements of $N$
are the attributes. They are not partitioned into inherited and
synthesized attributes (for the fact that this is not necessary,
see \cite{Tie}). The elements of $T_\Delta$ (or of $D$, in case a
$\Delta$-algebra $D$
is also given) are the attribute values. The attribute $\Ain$ is a designated
attribute of the root of the input tree, the value of which is
the translation of the tree. Finally, $R$ contains the semantic
rules for computing the attribute values. A rule of the form,
say, $A \rightarrow \ruleif \llabel{=}\,\sigma \rulethen
\cdots B(\up)\cdots C(\down_2)\cdots D(\stay)\cdots $ with
$\rk(\sigma) \geq 2$, expresses the $A$-attribute of any node labeled $\sigma$
in terms of the $B$-attribute of its father, the $C$-attribute of its
second son, and its own $D$-attribute. Thus, such a rule combines
the inherited and synthesized features of the attributes; it
just expresses the attribute of a node in terms of those of its
neighbors (and itself). Note that semantic conditions cannot be
used explicitly, but should be simulated by semantic rules. The
$\son{=}\,i$ predicates are needed to express attributes of a node in
terms of those of its brothers (see the next example).

This point of view on attribute grammars is actually
one of the most obvious ones. Recall from Section \ref{sect2} that an
$\RT(S)$ transducer is a set of recursive function procedures with
arguments of type $S$ (here: node of a tree) and results of type
$T_\Delta$ (or $D$, where $D$ is a \mbox{$\Delta$-algebra}). Now, it should be clear that
the semantic rules of an attribute grammar are just a way of
recursively programming functions of that type, one function for
each attribute. The derivations of the $\RT$(Tree-walk) transducer
are precisely the computations of the recursive function
procedures, and thus form a (inefficient) way of attribute
evaluation. In \cite{Ful, Kam} these derivations are used to formally
define the translation realized by an attribute grammar, and in
\cite{Jal, Jou} they form the basis of more efficient attribute
evaluation (cf. Section 4.1 of \cite{Eng10}).

Note that if we drop ``up'' and ``stay'' from Tree-walk, we
more or less reobtain Tree. This shows the well-known fact that
attribute grammars with synthesized attributes only are closely
related to DRT(Tree) transducers, i.e., deterministic top-down
tree transducers (see, e.g., \cite{CouFra}).

Finally we note that attribute grammars with strings as
values, and concatenation as only operation (see \cite{DusParSedSpe,
Kam, EngFil, Eng7}), are modeled by deterministic $\CF$(Tree-walk)
transducers.

To illustrate the use of the $\son{=}\,i$ predicates we give
an example of a deterministic $\CF$(Tree-walk) transducer $\G_5$. It
has two nonterminals, $L$ and $U$; for every node $(t,n), L(t,n)$
generates that part of the yield of $t$ that is to the left of $n$,
and $U(t,n)$ generates that part of the yield of $t$ that is to the
left of $n$ or below $n$, see Fig.~2. Intuitively, $L$ is an inherited
attribute, and $U$ is synthesized. So, $\G_5 = (\{L,U\},\Sigma,\Sigma_0,U,R)$,
where $R$ contains the following rules (recall the convention of
using elements of $N(F^+)$, see Section \ref{sect1.1}).

\begin{center}\parbox{0cm}{
\begin{tabbing}
    \=$L  \ $\=$\ar $\=$\ruleif \rroot \rulethen \lambda$\\
    \>$L $\>$\ar $\>$\ruleif \son{=}\,1 \rulethen L(\up)$\\
    \>$L $\>$\ar $\>$\ruleif \son{=}\,i \rulethen U(\up; \down_{i-1})$\\[1mm]
    \>\>\>(for $2\leq i\leq m$, where $m$ is the maximal rank in
$\Sigma$)\\[1mm]
    \>$U $\>$\ar $\>$\ruleif \llabel{=}\,\sigma \rulethen L(\stay)\sigma \qquad
    \qquad$ \= for all $\sigma\in\Sigma_0$\\
    \>$U $\>$\ar $\>$\ruleif \llabel{=}\,\sigma \rulethen U(\down_k)$ \> for all
$\sigma\in\Sigma_k, k\geq 1$
\end{tabbing}}
\end{center}

\noindent Note that the purpose of $\G_5$ is not to define $T(\G_5)$, because
that could be done in a much easier way; instead, its purpose is to
be able to compute $L(t,n)$ and $U(t,n)$ for arbitrary $(t,n)$.

\vspace{3.5em}

\begin{figure}[h]
  \centering
\begin{tikzpicture}[decoration=brace]

\draw (0cm,0cm)  -- (4cm,0cm)  -- +(120:4cm) -- cycle;

\draw (2cm,1.732cm)  -- (1cm,0cm)  -- (0:3cm) -- cycle;

\coordinate[label=left:$n$] (N) at (2cm,1.732cm); 
\node (t) [yshift=8em, xshift=3em]{$t$};

\fill (N) circle (0.05);

\draw[decorate, yshift=-1ex] (1,0) -- node[below=0.4ex] {$l$} (0,0);

\draw[decorate, yshift=-5ex] (3,0) -- node[below=0.4ex] {$u$} (0,0);

\end{tikzpicture}
  \caption[1]{$l=L(t,n)$ and $u=U(t,n)$.}
  \label{Fig.2}
\end{figure}
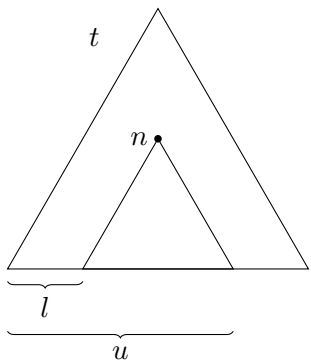

\vspace{3.5em}

(8) The last storage type we consider is the
generalization of the pushdown to trees, introduced in \cite{Gue2}
and formalized in \cite{DamGue}. The top of the tree-pushdown is the
root of the tree. This root may be replaced by any piece of
tree. Let $Y = \{y_1,y_2,...\}$ be an infinite set of ``variables''.

\begin{df}
 The storage type \emp{Tree-pushdown} is $(C,P,F,I,E,m)$, where 
 \begin{itemize}
 \setlength{\itemsep}{0pt}
  \item $C = T_\Omega$ (with $\Omega$ as in Tree),
  \item $P = \{\rroot{=}\,\sigma \mid \sigma \in \Omega\}$,
  \item $F = \{\expand(\zeta) \mid \zeta \in T_\Omega[Y]\}$,
  \item $I = \{u_0\}$ for a fixed object $u_0$,
  \item $E = \Omega_0$, with $m(\sigma)(u_0) = \sigma$ for every $\sigma \in E$,
 \end{itemize}
 \noindent and for every $c = \tau t_1\ldots t_k \in T_\Omega$ (with $\tau \in
 \Omega_k$, $k \geq 0$, $t_1,\ldots ,t_k \in T_\Omega$),
 \begin{itemize}
 \setlength{\itemsep}{0pt}
  \item $m(\rroot{=}\,\sigma)(c) = (\tau = \sigma)$, and 
  \item $m(\expand(\zeta))(c) =$ the result of substituting $t_i$ for $y_i$ in
        $\zeta$, for all $1 \leq i \leq k$ (undefined if this is not in
	$T_\Omega$). \QEDB
 \end{itemize}

\end{df}

Note that an $\expand(y_i)$ corresponds to a pop, whereas
all other expands correspond to pushes.

The $\RT$(Tree-pushdown) r-acceptor is the top-down
\emp{tree-pushdown tree automaton} (called pushdown tree automaton in
\cite{Gue2}). Although Tree-pushdown has no identity, it can easily
be simulated.

As noted in \cite {DamGue}, the $\RT$(Tree-pushdown) grammar is
the \emp{creative dendrogrammar} of \cite{Rou}. The creative dendrogrammar
with one state (= nonterminal) only, is the (outside-in)
\emp{context-free tree grammar}. Moreover, every creative
dendrogrammar is equivalent to one with one state only (\cite{Rou},
the proof is nontrivial). Thus $\lambdaRT$(Tree-pushdown) is the class
of context-free tree languages (see also \cite{Gue2, EngVog2}). In
the same way the $\CF$(Tree-pushdown) grammar is related to the
(outside-in) \emp{macro grammar} of \cite{Fis}. Recall that the
$\CF$(Pushdown) grammar is the indexed grammar; thus the
equivalence of the macro grammar and the indexed grammar \cite{Fis}
can (partly) be explained by the equivalence of the storage
types Tree-pushdown and Pushdown (see \cite{EngVog2}), and similarly
for the context-free tree grammar and the indexed tree grammar
(cf. $\RT$(P) of point (2)).

Inside-out context-free tree grammars and macro
grammars do not seem to fit into the $\CF(S)$ formalism.


\newpage
\section{\sect{Alternating automata}}\label{sect4}

Since alternation is close to recursion, context-free
grammars may be viewed as the prototype of alternating automata.
In fact, the $\CF(S)$ d-acceptor is the \emp{alternating $S$ automaton}
(see \cite{ChaKozSto, LadLipSto, Ruz} for alternating automata; and
see \cite{May} for the relationship between AND/OR programming and
context-free grammars). Here, $S$ is assumed to contain both the
input and the internal storage of the automaton (as opposed to
the treatment of one-way $S$ automata in Section \ref{sect2}, using
r-acceptors). In particular when $S$ is a storage type with $I = \Omega^*$
for some infinite alphabet $\Omega$, the input set $A(\G)$ d-accepted by a
$\CF(S)$ d-acceptor $\G$ is an ordinary language (because $A(\G) \subseteq I$),
and so $\alphaCF(S)$ is the class of languages accepted by alternating
$S$ automata. Note that, as observed before, we may assume that
the terminal alphabet is empty.

To explain the correspondence between $\CF(S)$ d-acceptors
and alternation, recall that an alternating automaton has two
kinds of states: existential and universal states. In an
existential state, \textit{some} possible next move has to lead to
success, whereas in a universal state \textit{all} possible next moves
should lead to acceptance. In a $\CF(S)$ d-acceptor, there is no
such difference between nonterminals, but existentiality is
modeled by the choice between two possible rules with the same
left-hand nonterminal, whereas universality is modeled by having
several nonterminals in the right-hand side (as opposed to a
$\REG(S)$ d-acceptor, where there is at most one). For instance, if
all rules for $A$ and $b$ are $A \rightarrow \ruleif b \rulethen B(f_1)C(f_2)$
and $A \rightarrow \ruleif b \rulethen D(f_3)$, then the alternating automaton,
in state $A$ and with $b$ true for its storage configuration, either goes into
state $D$ (applying $f_3$), \textit{or} splits itself in two and goes into
state $B$ (applying $f_1$) \textit{and} into state $C$ (applying $f_2$). It is
easy to see that, assuming that $S$ has an identity, every $\CF(S)$
d-acceptor can be transformed into one that has existential
states and universal states: for an existential state $A$, all
rules with left-hand side $A$ have the form of those of a $\REG(S)$
d-acceptor, and for a universal state $A$, all rules with
left-hand side $A$ have the form of those of a deterministic $\CF(S)$
d-acceptor.

From this it should also be clear that the
deterministic $\CF(S)$ d-acceptor is the \emp{universal $S$ automaton},
i.e., the alternating $S$ automaton with universal states only.

As a first example, the $\CF$(One-way) d-acceptor is the
\emp{alternating one-way finite automaton}. Thus, $\alphaCF$(One-way) is the
class of languages accepted by alternating one-way finite
automata. Recall that the $\CF$(One-way) grammar is the ETOL
system; this relationship between the parallelism of L systems
and that of alternating automata was pointed out in \cite{Eng6}. Note
that the $\REG$(One-way) d-acceptor is the finite automaton. In
general, the $\REG(S)$ d-acceptor is the nondeterministic $S$
automaton (where $S$ contains both the input and the internal
storage of the automaton).

To be able to consider arbitrary alternating one-way $S$
automata (i.e., alternating automata with a one-way input tape
and internal storage $S$), we need the notion of product of
storage types.

\begin{df}\label{prod_type}
 Let $S_i = (C_i,P_i,F_i,I_i,E_i,m_i)$, for $i = 1,2$, be two
storage types, with \mbox{$P_1 \cap P_2 = \emptyset$.} Their \emp{product} $S_1
\times S_2$ is the storage type \[(C_1 \times C_2, P_1 \cup P_2,F_1 \times
F_2, I_1 \times I_2, E_1 \times E_2, m),\] where

\bigskip
\begin{itemize}
 \setlength{\itemsep}{0pt}
 \item $m(p)(c_1,c_2) = m_i(p)(c_i)$ for $p \in P_i$,
 \item $m(f_1,f_2)(c_1,c_2) = (m_1(f_1)(c_1),m_2(f_2)(c_2))$, and
 \item $m(e_1,e_2)(u_1,u_2) = (m_1(e_1)(u_1),m_2(e_2)(u_2))$. \QEDB
\end{itemize}

\end{df}

In other words, everything is defined element-wise
except the predicates, which can be combined in boolean
expressions anyway. We can now say that the $\CF$(One-way$_{\id} \times S$)
\mbox{d-acceptor} is the \emp{alternating one-way $S$ automaton}. As an
example, the following \mbox{$\CF$(One-way$_{\id} \times $P)} \mbox{d-acceptor} $\G_6$, i.e.,
alternating one-way pushdown automaton, accepts the language
$A(\G_6) = \{a^n b^n c^n \mid n \geq 1\}$. Since for P the input set $I$ is a
singleton, we identify the input set of One-way$_{\id} \times $P with that
of One-way$_{\id}$, i.e., $\Omega^*$.

Let $\G_6 = (N,e,\emptyset,\Ain,R)$, where $N = \{\Ain,A,B,C\}$,
$e = (\{a,b,c\},\#)$, i.e., the input alphabet is $\{a,b,c\}$, the bottom
pushdown symbol is $\#$, and $R$ contains the following rules

\begin{center}\parbox{0cm}{
\begin{tabbing}
    \=$\Ain  $\=$\ar $\=$A(\id,\stay)$\\
    \>$A $\>$\ar $\>$\ruleif \first{=}\,a \rulethen A(\rread,\push (a))$\\
    \>$A $\>$\ar $\>$\ruleif \first{=}\,b \text{ \bool{and} } \ttop{=}\,a
      \rulethen B(\rread,\pop)C(\rread,\stay)$\\
    \>$B $\>$\ar $\>$\ruleif \first{=}\,b \text{ \bool{and} } \ttop{=}\,a \rulethen
      B(\rread,\pop)$\\
    \>$B $\>$\ar $\>$\ruleif \first{=}\,c \text{ \bool{and} } \ttop{=}\,\# \rulethen
      \lambda$\\
    \>$C $\>$\ar $\>$\ruleif \first{=}\,b \rulethen C(\rread,\stay)$\\
    \>$C $\>$\ar $\>$\ruleif \first{=}\,c \text{ \bool{and} } \ttop{=}\,a
      \rulethen C(\rread,\pop)$\\
    \>$C $\>$\ar $\>$\ruleif \empt \text{ \bool{and} } \ttop{=}\,\# \rulethen \lambda$.
\end{tabbing}}
\end{center}

\noindent Note that $\G_6$ is deterministic, i.e., universal. Note also that,
informally, we may view the nonterminals of $\G_6$ to have \textit{two}
parameters, say, $x$ and $y$. The fourth rule of $\G_6$ may be written
informally as $B(x,y) \rightarrow \ruleif \first(x)=b$ \bool{and} $\ttop(y)=a
\rulethen B(\rread(x),\pop(y))$, and similarly for the other rules.

Replacing One-way by Tree or Tree-walk gives
\emp{alternating tree automata} \cite{Slu}. Thus, the $\CF$(Tree) d-acceptor
is the alternating top-down finite tree automaton; note that it
is just the domain of a top-down tree transducer. In general,
the $\CF$(Tree$_{\id} \times S$) d-acceptor may be called the alternating
top-down $S$ tree automaton. Similarly, the $\CF$(Tree-walk)
d-acceptor is the alternating tree walking automaton; its
deterministic (i.e., universal) version is just the domain of an
attribute grammar.

Instead of One-way, it is possible to consider other
ways of handling the input, such as allowing \emp{auxiliary Turing
machine space}. For every space-constructable function $f$ on the
nonnegative integers, the storage type SPACE($f$) can be defined
as $(C,P,F,I,E,m)$ where $C$ consists of 
all 4-tuples of strings $(\#w_1,w_2\$,\#v_1,v_2\$)$
with $|v_1|+|v_2| = f(|w_1|+|w_2|)$. Intuitively, $w_1$ is the content of the input
tape to the left of the reading head, and $w_2$ is the remainder of the input tape.
Similarly, $v_1v_2$ is the content of the auxiliary space-restricted
Turing machine tape, with the reading head on the first symbol of $v_2$. 
The sets $P$ and $F$ can be defined so as to model the
usual tests and operations on a two-way input tape and a Turing
machine tape. Finally, $I = \Omega^*$, and $E$ consists of all functions $e$
that encode a string $w$ over some alphabet $\Sigma\subseteq \Omega$ as
$m(e)(w) = (\#,w\$,\#,v\$)$, where $v$ is the blank tape of length $f(|w|)$.
Then the REG(SPACE($f$)) d-acceptor is the nondeterministic
SPACE($f$) Turing machine, and the CF(SPACE($f$)) d-acceptor is the
\emp{alternating} SPACE($f$) \emp{Turing machine}. Also, e.g., the
CF(SPACE($f)\times$ P) d-acceptor is the alternating SPACE($f$) auxiliary
pushdown automaton \cite{LadLipSto}. Similarly TIME($f$) can be defined
as a storage type, such that $C$ consists of pairs $(u,t)$,
where $u$ codes an input tape and a sequence of Turing machine tapes, and
$t\in\N$. Initially $t = f(|w|)$ where $w$ is the content of the input tape, 
and each instruction applied to $(u,t)$ decreases $t$ by 1 (cf. Count-down). 
In this way, the CF(TIME($f$)) d-acceptor is the alternating TIME($f$) Turing machine.

We conclude this section with two related, rather
technical, observations.

First observation. In the regular case, both
$\alphaREG$(One-way$_{\id}\times S$) and $\lambdaREG(S)$ are the class
of languages accepted by nondeterministic one-way $S$ automata. In the
context-free case there does not seem to be any relationship
between the alternating one-way $S$ automaton and the context-free
$S$ grammar, i.e., between $\alphaCF$(One-way$_{\id}\times S)$ and
$\lambdaCF(S)$. Since the alternating finite automata accept the regular
languages \cite{ChaKozSto}, taking $S = S_0$ gives
\begin{quote}
$\alphaCF$(One-way$_{\id}\times S_0) =
\alphaCF$(One-way$_{\id}) = \REG \subsetneq \CF = \lambdaCF(S_0)$.
\end{quote}
But, since the
alternating one-way pushdown automata accept $\cup\{$DTIME$(c^n) \mid
c > 0\}$, see \cite{ChaKozSto}, taking $S$ = Pushdown gives 
\begin{quote}
$\lambdaCF($P) = Indexed $\subsetneq\cup\{$DTIME$(c^n) \mid c > 0\}$ =
$\alphaCF$(One-way$_{\id}\times$ P), 
\end{quote}
because the indexed languages are context-sensitive and closed under homomorphisms. 
Thus there is no generally valid inclusion between these classes.

Second observation. However, there does exist a
relationship between $\alphaRT$(One-way$_{\id}\times S$) and
$\lambdaRT(S)$: it
is stated in Theorem~6 of \cite{Eng6} (see \cite{DamGue} for a detailed proof) that
\begin{quote}
$\alphaCF$(One-way$_{\id}\times S$) =
$\tauRT$(One-way$_{\id})^{-1}(\lambdaRT(S)$)
\end{quote}
where it is assumed that $S$ has an identity and that $I$ is a
singleton (cf. Section \ref{sect7}). This establishes a formal link
between the alternating one-way $S$ automata and the top-down $S$
tree automata (by way of monadic top-down tree transducers with
$\lambda$-moves). In \cite{DamGue}, the same result is also proved with
One-way$_{\id}$ replaced by Tree$_{\id}$, establishing an analogous link
between the alternating and the nondeterministic top-down $S$ tree
automata (by way of top-down tree transducers with $\lambda$-moves).
Note that the $\RT$(Tree$\times S$) transducers studied in \cite{DamGue} may
be called top-down $S$ tree transducers.


\newpage
\section{\sect{Pushdown $S$ automata}}\label{sect5}

The recursion of the context-free grammar can be
simulated by the (nonrecursive) pushdown automaton, and, vice
versa, every pushdown automaton can be simulated by a
context-free grammar. Of course, recursion can always be
implemented on a pushdown. Thus, for the context-free $S$ grammar
we now look for a pushdown-like automaton equivalent to it.
Clearly, recursive procedures with one parameter (of type $S$) can
be implemented on a pushdown of which each cell contains both a
pushdown symbol (the name of a procedure) and an object of type
$S$ (its actual parameter), see Fig.~3. Let us define this as a
storage type (introduced in \cite{Gre}).

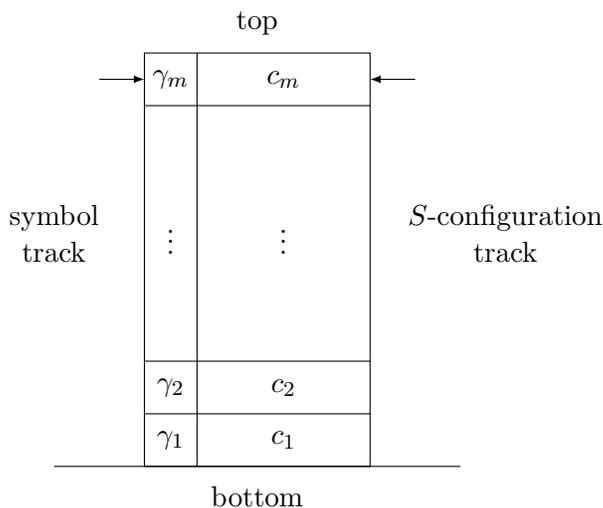
\begin{figure}[h]
 \centering
\begin{tikzpicture}

\draw (0,0) -- (3,0) -- (3,5.5) -- (0,5.5) --(0,0);
\draw (0,0.7) -- (3,0.7);
\draw (0,1.4) -- (3,1.4);
\draw (0,4.8) -- (3,4.8);
\draw (0.7,0) -- (0.7,5.5);
\draw (-1.2,0) -- (4.2,0);

\draw[-latex] (-0.6,5.15) -- (0,5.15);
\draw[-latex] (3.6,5.15) -- (3,5.15);

\node at (1.5,-0.4) {bottom};
\node at (1.5,5.9) {top};
\node[align=center] at (-1.2,3.1) {symbol\\ track};
\node[align=center] at (4.8,3.1) {$S$-configuration\\ track};

\node at (0.35,0.35) {$\gamma_1$};
\node at (0.35,1.05) {$\gamma_2$};
\node at (0.35,5.15) {$\gamma_m$};

\node at (1.85,0.35) {$c_1$};
\node at (1.85,1.05) {$c_2$};
\node at (1.85,5.15) {$c_m$};
\node at (1.85,3.1) {$\vdots$};
\node at (0.35,3.1) {$\vdots$};

\end{tikzpicture}
 \caption{Configuration $(\gamma_m,c_m)\cdots(\gamma_2,c_2)(\gamma_1,c_1)$ of
Pushdown of $S$. Only $\gamma_m$ and $c_m$ are accessible.}
\end{figure}

\begin{df}

Let $S = (C,P,F,I,E,m)$ be a storage type. The
storage type \emp{Pushdown~of~$S$}, abbreviated by $\Pd(S)$, is
$(C',P',F',I',E',m')$, where
\begin{itemize}
 \setlength{\itemsep}{0pt}
 \item $C'=(\Gamma\times C)^+$ for the fixed infinite set $\Gamma$ of pushdown
  symbols,
 \item $P' = \{\ttop{=}\,\gamma \mid \gamma\in\Gamma\} \cup \{\bottom\} \cup
  \{\test(p) \mid p \in P\}$,
 \item $F' = \{\push(\gamma,f) \mid \gamma \in \Gamma, f \in F\} \cup \{\pop\}
  \cup \{\stay(\gamma) \mid \gamma \in \Gamma\} \cup \{\stay\}$,
 \item $I' = I$,
 \item $E' = \Gamma \times E,$ with $m(\gamma,e)(u) = (\gamma,m(e)(u))$ for
  every $\gamma \in \Gamma$, $e \in E,$ and $u \in I$,
\end{itemize}

\noindent and for every $c' = (\delta,c)\beta$ with $\delta \in \Gamma$, 
$c \in C$, and $\beta \in (\Gamma \times C)^*$ (intuitively, $(\delta,c)$ is the top of
the pushdown $(\delta,c)\beta$),

\begin{itemize}
 \setlength{\itemsep}{0pt}
 \item $m'(\ttop{=}\,\gamma)(c') = \true$ iff $\delta=\gamma$,
 \item $m'(\bottom)(c') = \true$ iff $\beta = \lambda$,
 \item $m'(\test(p))(c') = m(p)(c)$,
 \item $m'(\push(\gamma,f))(c') = (\gamma,m(f)(c))c'$ if $m(f)$ is
  defined on $c$, and undefined otherwise,
 \item $m'(\pop)(c') = \beta$ if $\beta \neq \lambda$, and undefined otherwise,
 \item $m'(\stay(\gamma))(c') = (\gamma,c)\beta$, and
 \item $m'(\stay)(c') = c'$. \QEDB
\end{itemize}
\end{df}

For $b \in \BEP$, we will use $\test(b)$ to denote the
element of BE($P'$) that is obtained from $b$ by replacing every
$p \in P$ by $\test(p)$.

\begin{df}
The storage type \emp{Pure-pushdown of $S$}, abbreviated
by $\Pd_{\p}(S)$, is defined in exactly the same way as $\Pd(S)$, except
that $\Gamma=\{\gamma_0\}$ for some fixed symbol $\gamma_0$. \QEDB
\end{df}

Remarks: (1) The storage type Pushdown of $S_0$ is almost the
same as Pushdown. The only difference is that it additionally
has $c_0$ in all its pushdown cells. Therefore we may identify the
two. Thus $\Pd(S_0) = \Pd$ (not to be confused with the set $P$ of
predicates). Similarly, $\Pd_{\p}(S_0)$ = Counter.

(2) As for $\Pd$, it can be shown that the $\stay(\gamma)$ instructions
and the $\bottom$ predicate are superfluous, see Lemma 3.31 of
\cite{EngVog2}.

(3) $\Pd(S)$, or rather $\Pd_{\p}(S)$, was introduced, in a slightly
different form, in \cite{Gre}, where it was called a \emp{nested AFA}. For
the reader familiar with \cite{Gre}, we now discuss the
correspondence in more detail. Let us add instructions $\stay(\gamma,f)$
to P($S$), with meaning $m'(\stay(\gamma,f))(c') = (\gamma,m(f)(c))\beta$,
where $c'$ is as in the definition of $\Pd(S)$ (see Lemma 3.31 of \cite{EngVog2}
for the fact that this does not strengthen $\Pd(S)$). Now, if $S$ contains
an identity $\id$, then $\stay(\gamma)$ is the same as $\stay(\gamma,\id)$.
Moreover, it is easy to see that the $\push(\gamma,f)$ instructions can
then be restricted to the ``duplicate'' instructions $\push(\gamma,\id)$.
Thus the instructions now are: $\push(\gamma,\id)$, $\pop$, $\stay(\gamma,f)$,
and $\stay$. In this way $\Pd_{\p}(S)$ is very close to the nested AFA.
Formal equivalence between $\Pd(S)$ and the nested AFA (in the sense that
the corresponding classes of one-way automata are equivalent)
holds under a few weak restrictions on $S$.

(4) In Section \ref{sect1.2} we have seen that the $\CF(S)$ grammar is an
attribute grammar with one inherited attribute. In \cite{LewRosSte}
the \emp{attributed pushdown machine} was defined, and shown to be
equivalent to a particular type of attribute grammar (the L
attribute grammar, including the case of one inherited
attribute). Clearly, a pushdown cell $(\gamma,c)$ of $\Pd(S)$ may be viewed
as an attributed pushdown symbol. Thus, in the case of one
inherited attribute, the attributed pushdown machine has storage
type $\Pd(S)$. To handle synthesized attributes, the attributed
pushdown machine also has instructions of the form, say, $\pop(f)$
that send attribute values downwards in the pushdown (as opposed
to $\push(\gamma,f)$ that send them upwards). \QEDB

\vspace{1em}

Our aim is now to show that \emp{the context-free $S$ grammar
is equivalent to the one-way $\Pd(S)$ automaton}, or, more general,
that the $\CF(S)$ transducer is equivalent to the $\REG(\Pd(S))$
transducer. It turns out that this is true in case $S$ has an
identity. However, in general the $\CF(S)$ transducer can be
simulated by the $\REG(\Pd(S))$ transducer but not vice versa. To
obtain an equivalence anyway, there are two solutions: either
restrict the $\REG(\Pd(S))$ transducer or extend the $\CF(S)$
transducer. Here we consider the second solution (for the first
\mbox{see \cite{EngVog2}}, where the so-called bounded excursion pushdown is
defined). See Definition~\ref{sid} for the definition of $S_{\id}$.

\begin{df}
 Let $S = (C,P,F,I,E,m)$ be a storage type. An
 \emp{extended context-free $S$ transducer}, or $\CFext(S)$ transducer,
 is a $\CF(S_{\id})$ transducer $(N,e,\Delta,\Ain,R)$ such that, for every
 nonterminal $A$, $A(\id)$ may only appear at the end of a rule. In
 other words: for every rule $A \rightarrow \ruleif b \rulethen \xi$ of $R$,
 either $\xi \in (N(F) \cup \Delta)^*$ or $\xi \in (N(F) \cup
\Delta)^*N(\{\id\})$. \QEDB 
\end{df}

The corresponding classes of translations, languages, and input sets are denoted 
as usual, with subscript ext. 

It is easy to see that there is a \emp{normal form} for
$\CFext(S)$ transducers (similar to Chomsky normal form): we may
assume that all rules are of one of the forms

\begin{center}\parbox{0cm}{
\begin{tabbing}
    \=$A $\=$\ar $\=$\ruleif b \rulethen wB(\id)$\\
    \>$A $\>$\ar $\>$\ruleif b \rulethen C(f)B(\id)$\\
    \>$A $\>$\ar $\>$\ruleif b \rulethen w$
\end{tabbing}}
\end{center}

\noindent where $A,B,C \in N$, $b \in \BEP$, $f \in F$, and $w \in \Delta^*$.

We now state the correspondence between $\CF(S)$ and
$\REG(\Pd(S))$. Note that if two classes of translations are equal,
then so are the classes of their ranges and of their domains.
Thus, ``if something holds for $\tau$-, then it also holds for $\lambda$- and
$\alpha$-.''

\begin{theo}\label{theo5.1}
Let $S$ be a storage type.
\begin{enumerate}[label=\rm(\arabic*)]
 \item $\tauCFext(S) = \tauREG(\Pd(S))$.
 \item If $S$ has an identity, then $\tauCF(S)$ = $\tauREG(\Pd(S))$.
 \item The $\subseteq$-inclusions in (1) and (2) also hold for the
corresponding deterministic transducers.  
\end{enumerate}
\end{theo}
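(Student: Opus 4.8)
The plan is to prove the two inclusions of~(1) separately, then obtain~(2) as a corollary, and finally check that both $\subseteq$-directions respect determinism to get~(3).

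First I would prove $\tauCFext(S) \subseteq \tauREG(\Pd(S))$. Given a $\CFext(S)$ transducer $\G = (N,e,\Delta,\Ain,R)$, I would put it into the Chomsky-like normal form stated just before the theorem, so that every rule is of one of the forms $A \to \ruleif b \rulethen wB(\id)$, $A \to \ruleif b \rulethen C(f)B(\id)$, or $A \to \ruleif b \rulethen w$. The idea is the classical recursion-to-pushdown simulation: a pushdown cell $(\gamma,c)$ of $\Pd(S)$ records a nonterminal together with the $S$-configuration it is paired with in the sentential form, and the $\REG(\Pd(S))$ transducer keeps, on top of the pushdown, the leftmost nonterminal of the (left-most) derivation it is tracing. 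Concretely, take $\Gamma = N$ (plus any auxiliary symbols), let the encoding of $\Pd(S)$ put $(\Ain, m(e)(u))$ on the bottom, and translate each normal-form rule into a $\REG(\Pd(S))$ instruction: a rule $A \to \ruleif b \rulethen wB(\id)$ becomes ``if $\ttop = A$ and $\test(b)$, output $w$, and $\stay(B)$''; a rule $A \to \ruleif b \rulethen C(f)B(\id)$ becomes ``if $\ttop = A$ and $\test(b)$, then $\push(C,f)$'' after first overwriting the top symbol from $A$ to $B$ (which is why we need $\stay(\gamma)$, or a $\pop$/$\push$ combination); and a terminating rule $A \to \ruleif b \rulethen w$ becomes ``if $\ttop = A$ and $\test(b)$, output $w$, and $\pop$'' — with acceptance when the pushdown becomes the single bottom cell. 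One shows by induction on derivation length that $\Ain(m(e)(u)) \Rightarrow_\G^* w$ iff the $\REG(\Pd(S))$ transducer has a computation on input $u$ producing $w$. The role of the $\id$-restriction in $\CFext(S)$ is exactly what makes left-most derivation steps correspond to single pushdown moves: the trailing $B(\id)$ is the ``return address'', handled by the symbol track, and no genuine $S$-operation is lost.

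Next, $\tauREG(\Pd(S)) \subseteq \tauCFext(S)$: this is the converse simulation, turning a pushdown automaton back into a (now extended, $S_{\id}$-based) context-free grammar, in the spirit of the textbook triple-construction $[p,\gamma,q]$ for pushdown automata. For a $\REG(\Pd(S))$ transducer $\M$ with states $Q$ and pushdown alphabet $\Gamma$, introduce nonterminals of the form $[q,\gamma,q']$ meaning ``starting in state $q$ with $(\gamma,\cdot)$ on top of the pushdown, $\M$ can pop that cell and end in state $q'$'', and attach to each such nonterminal the current $S$-configuration stored in that cell. A $\push(\gamma',f)$ move of $\M$ generates a rule with two nonterminals on the right — the newly pushed cell and the continuation — which is why we land in $\CFext(S)$ rather than $\REG(S)$; the $\id$-instruction on $S_{\id}$ supplies the $\lambda$-move that carries the continuation nonterminal to the end of the rule, matching the normal form. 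A $\pop$ move gives a terminating rule, and a $\stay$-type move a linear rule. The usual induction on the number of moves / derivation steps establishes the correspondence of translations. This direction, with the bookkeeping of which configuration sits in which cell and the verification that the $\id$-at-the-end discipline is always respected, is the part I expect to be the main obstacle — it is the step where an off-by-one in the cell/configuration pairing or a subtle mishandling of when $\test(b)$ is evaluated (on the top cell's configuration, per the definition of $\Pd(S)$) most easily creeps in.

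Finally, for~(2): if $S$ has an identity then $S_{\id}$ adds nothing essential, so every $\CF(S)$ transducer already \emph{is} a $\CFext(S)$ transducer (any rule is trivially of the permitted form once we observe we may append a dummy $B(\id)$, or simply because the $\id$ present in $F$ lets us rewrite rules into normal form), giving $\tauCF(S) = \tauCFext(S) = \tauREG(\Pd(S))$. For~(3), I would revisit both constructions above and check that transducer-determinism is preserved in the $\subseteq$-direction: in the forward simulation, two $\M$-transitions overlap in their enabling condition exactly when the two originating $\G$-rules have tests $b_1, b_2$ with $m(b_1 \text{ \bool{and} } b_2)(c) = \true$ for the top configuration, so determinism of $\G$ yields determinism of $\M$; and similarly the $[q,\gamma,q']$-construction, when started from a deterministic $\M$, produces a grammar in which competing rules for a nonterminal have mutually exclusive tests. (The reverse inclusions need not preserve determinism, which is why the theorem only claims $\subseteq$ in part~(3).) The remark preceding the theorem — that equal translation classes force equal range and domain classes — then transfers everything to the $\lambda$- and $\alpha$-versions with no extra work.
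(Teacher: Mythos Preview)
Your overall approach matches the paper's: normal form plus leftmost-derivation simulation for $\tauCFext(S) \subseteq \tauREG(\Pd(S))$, the triple construction for the converse, and (2) as a corollary of (1). That part is fine.

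The error is in your treatment of~(3). You write that ``the $[q,\gamma,q']$-construction, when started from a deterministic $\M$, produces a grammar in which competing rules for a nonterminal have mutually exclusive tests.'' This is false, and the paper explicitly points it out. A push rule $A \rightarrow \ruleif \test(b)\text{ \bool{and} }\ttop{=}\,\gamma \rulethen wB(\push(\delta,f))$ of $\M$ gives rise to rules $\langle A,\gamma,C\rangle \rightarrow \ruleif b \rulethen w\langle B,\delta,E\rangle(f)\langle E,\gamma,C\rangle(\id)$ for \emp{every} choice of return state $E$, all with the \emp{same} test~$b$. So the resulting grammar is nondeterministic even when $\M$ is deterministic --- the nondeterminism is in guessing the state $\M$ will be in when it eventually pops this cell. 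Your own parenthetical (``the reverse inclusions need not preserve determinism'') is correct and contradicts the sentence before it; only the forward construction (grammar to pushdown) preserves determinism, and that alone is what part~(3) asserts. Recovering determinism in the reverse direction requires the look-ahead mechanism of Theorem~\ref{theo5.3}.
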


\begin{proof}
 If $S$ has an identity, then that can be used instead of
$\id$; and so $\tauCFext(S) = \tauCF(S)$. Hence (2) follows from
(1). We now show (1) and (3).

``$\tauCFext(S) \subseteq \tauREG(\Pd(S))$.'' 
Let $\G = (N,e,\Delta,\Ain,R)$ be a
$\CFext(S)$ transducer in normal form. The $\REG(\Pd(S))$ transducer $\G'$
to be constructed simulates, as usual, the left-most derivations
of $\G$. To this end it uses the elements of $N$ as pushdown symbols.
The $S$-configuration associated to a nonterminal $A$ in a
derivation of $\G$, is stored in the same pushdown cell as $A$. Thus,
$\G' = (N',e',\Delta,\Ain',R')$, where $N' = \{\$\}$, $e' = (\Ain,e), \Ain' =
\$$,
and $R'$ is defined as follows (recall the convention concerning
the use of $N(F^+)$ in rules, see Section~\ref{sect1.1}).

\begin{itemize}
 \item If $A \rightarrow \ruleif b \rulethen wB(\id)$ is in $R$, 
       then $R'$ contains the rule \\
       $\$ \rightarrow \ruleif \test(b)\text{ \bool{and} }\ttop{=}\,A 
       \rulethen w\$(\stay(B))$.
 \item If $A \rightarrow \ruleif b \rulethen C(f)B(\id)$ is in $R$, 
       then $R'$ contains the rule \\ 
       $\$ \rightarrow \ruleif \test(b)\text{ \bool{and} }\ttop{=}\,A
	   \rulethen \$(\stay(B);\push(C,f)).$
 \item If $A \rightarrow \ruleif b \rulethen w$ is in $R$, 
       then $R'$ contains the rules \\
       $\$ \rightarrow \ruleif \test(b)\text{ \bool{and} }\ttop{=}\,A
       \text{ \bool{and} \bool{not}}\bottom \rulethen w\$(\pop)$, and \\
       $\$ \rightarrow \ruleif \test(b)\text{ \bool{and} }\ttop{=}\,A
       \text{ \bool{and} }\bottom \rulethen w$.
\end{itemize}

This concludes the construction of $\G'$. It should be
clear that $T(\G')$ = $T(\G)$. Note that if $\G$ is deterministic, then
so is $\G'$ (the transformation into normal form also preserves
determinism).

``$\tauREG(\Pd(S))$ $\subseteq \tauCFext(S)$.'' 
Let $\G = (N,e,\Delta,\Ain,R)$ be a
$\REG(\Pd(S))$ transducer, with $e = (\gamma_0,e')$. An equivalent
$\CFext(S)$
transducer will be obtained by the usual triple construction. To
facilitate this construction we make three assumptions
concerning $\G$. First, we assume that ``final'' rules
$A \rightarrow \ruleif b \rulethen w$ of $\G$ (with $w \in \Delta^*$) are only
applied when the
pushdown contains exactly one cell (clearly, using the bottom
predicate, one can reduce the pushdown to one cell just before
such a rule is applied). Second, we assume that the $\bottom$
predicate is not used by $\G$. Third, we assume that all rules of $\G$
are of the form $A \rightarrow \ruleif \test(b)\text{ \bool{and} }\ttop{=}\,\gamma
\rulethen \xi$, with $b \in \BEP$ and $\gamma \in \Gamma$. 
This can be achieved in a straightforward way (see
Lemma~\ref{cfp_nf}, and see Lemma 3.30 of \cite{EngVog2}).

Let $\Gamma_\G$ be the set of all pushdown symbols that occur in
$R$, together with $\gamma_0$. We now construct the $\CFext(S)$
transducer
$\G' = (N',e',\Delta,\Ain',R')$, where $N'$ consists of all triples
$\langle A,\gamma,B\rangle$
with $A \in N$, $\gamma \in \Gamma_\G$, and $B \in N$ or $B = \omega$; $\omega$
is a new symbol,
indicating the end of a derivation of $\G$; $\Ain' = \langle
\Ain,\gamma_0,\omega\rangle$.
Intuitively, $\G'$ has a derivation $\langle A,\gamma,B\rangle(c) \Rightarrow^*
w$ (with $B \in N$)
iff $\G$, starting in state $A$ and with a cell containing $(\gamma,c)$ at
the top of its pushdown, can derive $w$ and pop that cell from the
pushdown, ending in state $B$ (i.e., $\G$ has a derivation
$A((\gamma,c)B) \Rightarrow^* wB(\beta)$ that does not test $\beta$). Similarly,
$\langle A,\gamma,\omega\rangle(c) \Rightarrow^* w$ indicates that $\G'$ has a
derivation
$A((\gamma,c)) \Rightarrow^* w$. From this, the following construction of $R'$
should be clear.

\begin{itemize}
 \setlength{\itemsep}{0pt}
 \item If $R$ contains $A \rightarrow \ruleif \test(b)$ \bool{and} $\ttop{=}\,\gamma
      \rulethen w$, then $R'$ contains\\ $\langle A,\gamma,\omega\rangle
      \rightarrow \ruleif b \rulethen w$.
 \item If $R$ contains $A \rightarrow \ruleif \test(b)$ \bool{and} $\ttop{=}\,\gamma
      \rulethen wB(\pop)$, then $R'$ contains\\ $\langle A,\gamma,B\rangle
      \rightarrow \ruleif b \rulethen w$.
 \item If $R$ contains $A \rightarrow \ruleif \test(b)$ \bool{and} $\ttop{=}\,\gamma \rulethen
      wB(\push(\delta,f))$, then $R'$ contains the rules \\
      $\langle A,\gamma,C\rangle \rightarrow \ruleif b \rulethen
      w\langle B,\delta,E\rangle(f)\langle E,\gamma,C\rangle(\id)$ for all $C
      \in N \cup \{\omega\}$ and $E \in N$.
 \item If $R$ contains $A \rightarrow \ruleif \test(b)$ \bool{and} $\ttop{=}\,\gamma
      \rulethen wB(\stay(\delta))$, then $R'$ contains the rules \\ 
      $\langle A,\gamma,C\rangle \rightarrow \ruleif b \rulethen
      w\langle B,\delta,C\rangle(\id)$ for all $C \in N \cup \{\omega\}$.
 \item If $R$ contains $A \rightarrow \ruleif \test(b)$ \bool{and} $\ttop{=}\,\gamma
      \rulethen wB(\stay)$, then $R'$ contains the rules \\
      $\langle A,\gamma,C\rangle \rightarrow \ruleif b \rulethen
      w\langle B,\gamma,C\rangle(\id)$ for all $C \in N \cup \{\omega\}$.
\end{itemize}

This concludes the construction of $\G'$. It is left to
the reader to prove formally that $T(\G')=T(\G)$. Note that even
if $\G$ is deterministic, $\G'$ is not deterministic, due to the
choice of the ``return nonterminal'' $E$ in the rules of $\G'$
corresponding to the push-rules of $\G$.
\end{proof}

It is quite easy to show, for all $S$, that if $S$ has an
identity, $\lambdaREG(S)$ is a full trio, i.e., it is closed under
a-transductions (see \cite{Gin}, or \cite{EngVog4}). Thus, using this for
P$(S)$ rather than $S$, it follows from Theorem \ref{theo5.1}(2) that if $S$ has
an identity, then $\lambdaCF(S)$ is a full trio. Probably $\lambdaCF(S)$
is even a (full) super-AFL (see \cite{Gre}), under suitable conditions
on $S$. It seems that AFL theory for $\lambdaRT(S)$, 
i.e., for top-down $S$ tree automata, does not yet exist.

In the remaining part of this section we will try to
say more about the deterministic case, in the direction from
$\REG(\Pd(S))$ to $\CFext(S)$. By the previous theorem we know of course
that $\tauDREG(\Pd(S)) \subseteq \tauCFext(S)$, but
actually the extension is not needed. Since this is only based on the fact that
the translation defined by a deterministic transducer is a partial
function, we state this as follows.

\begin{lm}\label{cfext_pf}
 Let $S$ be a storage type, and let $\mathrm{PF}$ be the class of
 partial functions. Then $\tauCFext(S)\cap \mathrm{PF} \subseteq \tauCF(S)$.
\end{lm}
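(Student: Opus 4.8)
The plan is to show that if a translation $\tau \in \tauCFext(S)$ happens to be a partial function, then it can already be realized by a plain $\CF(S)$ transducer, i.e. without any use of the extra identity instruction $\id$ of $S_{\id}$. Let $\G = (N,e,\Delta,\Ain,R)$ be a $\CFext(S)$ transducer with $T(\G) = \tau$ a partial function; by the normal form noted just before the lemma, I may assume every rule of $\G$ has one of the forms $A \to \ruleif b \rulethen wB(\id)$, $A \to \ruleif b \rulethen C(f)B(\id)$, or $A \to \ruleif b \rulethen w$, with $w \in \Delta^*$. The only occurrences of $\id$ are the trailing $B(\id)$ in the first two rule shapes; these are precisely the ``tail calls'' that the $\CFext$ format allows but that an ordinary $\CF(S)$ transducer cannot express, since in $\CF(S)$ every nonterminal on a right-hand side must carry an instruction from $F$ (and $\id \notin F$ in general).

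First I would isolate the difficulty: a rule $A \to \ruleif b \rulethen C(f)B(\id)$ creates a nonterminal occurrence $B$ whose associated $S$-configuration is literally the \emph{same} configuration that was associated with $A$ — no instruction from $F$ is applied. In a derivation of $\G$, the symbol $B$ generated this way sits to the right of the subtree generated from $C(f)$, and only starts to act after that entire subtree has terminated to a string. The idea is to \emph{absorb} such a trailing $B$ into the nonterminal $C$: instead of $C(f)$ followed by $B$, use a new nonterminal $[C,B]$ carrying the configuration $f$-applied-to-$c$, which behaves like $C$ but remembers that when $C$ would finish (produce a terminal string via a rule $C \to \ruleif b' \rulethen w'$) it should instead hand control to $B$ — and it can do so because at that moment the configuration it is holding is exactly the one $B$ needs (the continuation configuration was not touched). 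Concretely, the new nonterminal set is $N' = N \cup (N \times (N \cup \{\text{--}\}))$, with $\Ain' = [\Ain, \text{--}]$ say (or just track a distinguished ``no continuation'' marker), and I translate rules roughly as follows: a final rule $C \to \ruleif b' \rulethen w'$ of $\G$ induces $[C,B] \to \ruleif b' \rulethen w'B(\id')$ — but wait, that reintroduces $\id$; the correct move is that $[C,B]$ should ``become'' $B$, so I instead push the continuation deeper. The clean formulation: for a rule $A \to \ruleif b \rulethen C(f)B(\id)$, and for the \emph{non-tail-call} rules, rewrite so that every trailing $(\id)$-nonterminal is merged with the nonterminal immediately to its left. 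Since $\G$ is in normal form the nonterminal to the left of a trailing $B(\id)$ is always a single $C(f)$ with $f \in F$ (the $wB(\id)$ case has only terminals to the left of $B$, which is the genuinely hard subcase).

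The genuinely hard subcase is a rule $A \to \ruleif b \rulethen wB(\id)$ with $w \in \Delta^*$: here there is no left-neighbour nonterminal to absorb $B$ into, and $B$ must be started on the \emph{unchanged} configuration. This is where the hypothesis that $T(\G)$ is a partial function is used. The point: because the translation is a partial function, I can afford to simulate the configuration-tracking ``in the states'' using a bounded amount of information, or — more simply — I can argue that such a rule $A \to \ruleif b \rulethen wB(\id)$ can be replaced by letting $A$ directly inherit the behaviour of $B$: introduce, for the pair $(A,B)$, the effect of "run $B$ but prepend $w$". Formally, whenever $A \to \ruleif b \rulethen wB(\id)$ is a rule, and $B \to \ruleif b' \rulethen \eta$ is any rule of $\G$, include a rule $A \to \ruleif b \wedge b' \rulethen w\eta'$ where $\eta'$ is $\eta$ with its own trailing $(\id)$ handled recursively; this ``short-circuiting'' terminates because eventually one reaches a rule whose right-hand side contains an $F$-instruction on its first nonterminal or is a pure terminal string. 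I would organize this as: (i) define a binary ``reaches-by-$\id$-steps'' relation on $N$, (ii) for each rule of $\G$, expand its trailing-$\id$ nonterminal by one layer using this relation, collapsing $\id \circ \id$ to $\id$ and $\id$ before a genuine $F$-instruction $f$ to just $f$, (iii) check the expansion is finite because the number of consecutive $B(\id)$-tail-calls along a left-most derivation that produce no $F$-instruction is bounded (a rule $A \to \ruleif b \rulethen wB(\id)$ with both $A = B$ and $b$ satisfiable would loop, but then the transducer is either non-terminating on that configuration — contributing nothing to $T(\G)$ — or, since $T(\G)$ is a function, such cycles can be removed without changing $T(\G)$). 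The determinism/functionality hypothesis is exactly what licenses deleting these problematic cycles.

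I expect \textbf{the main obstacle} to be the bookkeeping in the trailing-terminal case $A \to \ruleif b \rulethen wB(\id)$, and in particular making rigorous the claim that $\id$-cycles among nonterminals can be eliminated using only that $T(\G)$ is a partial function (not that $\G$ is deterministic in the rule sense) — one must be careful that a cycle $A \Rightarrow^* A$ via $\id$-moves that emits terminals is harmless precisely because if it emitted a nonempty string it would give infinitely many outputs for one input (contradicting functionality), and if it emitted only $\lambda$ it is removable. Once the $\id$'s are gone, one checks that every right-hand side lies in $(N''(F) \cup \Delta)^*$, so the result is a bona fide $\CF(S)$ transducer $\G''$ with $T(\G'') = T(\G) = \tau$, proving $\tau \in \tauCF(S)$ and hence $\tauCFext(S) \cap \mathrm{PF} \subseteq \tauCF(S)$.
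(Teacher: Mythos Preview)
Your ``short-circuiting'' idea in the second half is exactly the paper's construction: unfold chains
\[
A_1 \to \ruleif b_1 \rulethen \xi_1 A_2(\id),\quad \ldots,\quad A_{n-1} \to \ruleif b_{n-1} \rulethen \xi_{n-1} A_n(\id),\quad A_n \to \ruleif b_n \rulethen \xi_n
\]
(with each $\xi_i \in (N(F)\cup\Delta)^*$) into a single $\CF(S)$ rule $A_1 \to \ruleif b_1\text{ \bool{and} }\cdots\text{ \bool{and} }b_n \rulethen \xi_1\cdots\xi_n$, restricting to chains with \emph{pairwise distinct} $A_1,\ldots,A_n$ to keep $R'$ finite. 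The partial-function hypothesis is used just as you say, to argue that omitting the repetitious chains loses nothing: if $A(c)\Rightarrow^*\xi A(c)$ occurs inside a successful derivation producing $wxyz$, the shorter derivation without that loop produces $wyz$, and functionality forces $wxyz=wyz$. (This is slightly cleaner than your ``infinitely many outputs'' phrasing, since $\xi$ may contain nonterminals and different unrollings need not yield literal powers $x^k$.)

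The detour through paired nonterminals $[C,B]$ is where the proposal goes wrong, and you should simply drop it. Your claim that ``at that moment the configuration it is holding is exactly the one $B$ needs'' is false: applying $A\to\ruleif b\rulethen C(f)B(\id)$ at $A(c)$ yields $C(m(f)(c))\,B(c)$, so the continuation $B$ must run on $c$, whereas $[C,B]$ holds $m(f)(c)$; there is no way in a $\CF(S)$ transducer to recover $c$ from $m(f)(c)$. Consequently the rule shape $A\to\ruleif b\rulethen C(f)B(\id)$ is \emph{not} easier than $A\to\ruleif b\rulethen wB(\id)$ --- both must be handled by the same unfolding. Once you apply the short-circuiting uniformly to every rule with a trailing $(\id)$ (the normal form guarantees the prefix $\xi_i$ before the trailing $A_{i+1}(\id)$ is always in $(N(F)\cup\Delta)^*$, namely $w$ or $C(f)$), the construction is complete and matches the paper's proof.
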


\begin{proof}
Let $\G = (N,e,\Delta,\Ain,R)$ be a $\CFext(S)$ transducer in normal form such
that $T(\G)$ is a partial function. Construct a $\CF(S)$ transducer $\G'$
by repeatedly replacing the $A(\id)$ in the right-hand side of a
rule by the right-hand side of one of $A$'s rules, avoiding
repetition of the same $A$. Formally, $\G' = (N,e,\Delta,\Ain,R')$, where
$R'$ is defined as follows.

\begin{tabbing}
If $R$ contains the rules \=$A_1 \rightarrow \ruleif b_1 \rulethen \xi_1 A_2(\id)$,\\
\>$A_2 \rightarrow \ruleif b_2 \rulethen \xi_2 A_3(\id)$,\\
\>$\ \ldots$\\
\>$A_{n-1} \rightarrow \ruleif b_{n-1} \rulethen \xi_{n-1}A_n(\id)$, and\\
\>$A_n \rightarrow \ruleif b_n \rulethen \xi_n$\\
where $A_1,\ldots,A_n$ are \emp{different} nonterminals of $N$
$(n\geq 1)$, $b_i \in \BEP$, and 
$\xi_i \in (N(F) \cup \Delta)^*$, \\
then the rule
$A_1 \rightarrow \ruleif b_1\text{ \bool{and} }\cdots\text{ \bool{and} }b_n \rulethen
\xi_1\cdots \xi_n$ is in $R'$.
\end{tabbing}

If we would drop the condition that the $A_1,\ldots,A_n$ are
different, it would be clear that $T(\G') = T(\G)$ (if you are not
afraid of infinitely many rules). A repetition of, say, $A$ in
$A_1,\ldots,A_n$ allows for derivations in $\G$ of the form
$A(c) \Rightarrow^* \xi A(c)$. But, derivations in $\G$ of the form
$\Ain(m(e)(u)) \Rightarrow^* \alpha A(c)\beta \Rightarrow^* \alpha\xi A(c)\beta
\Rightarrow^* wxyz$ (with $\alpha \Rightarrow^* w \in \Delta^*$, etc.) 
are superfluous; in fact, for the derivation
$\Ain(m(e)(u)) \Rightarrow^* \alpha A(c)\beta \Rightarrow^* wyz$, we must have
$wyz = wxyz$ because $T(\G)$ is a partial function. This shows that we can restrict
ourselves to different $A_1,\ldots,A_n,$ and so $T(\G') = T(\G)$.
\end{proof}

Remarks: (1) Since the construction in the previous proof
preserves determinism of the transducer, we conclude that
$\tauDCFext(S) = \tauDCF(S)$.

(2) Since every $\CFext(S)$ transducer with empty terminal alphabet 
defines a partial function, we conclude that $\alphaCFext(S) = \alphaCF(S)$.

(3) An argument similar to the proof of this lemma would
show, for $S$ = Tree, that the $\CFext$(Tree) transducer is
equivalent to the regularly extended top-down tree-to-string
transducer of \cite{EngRozSlu}. \QEDB

\vspace{1em}

We now know that $\tauDREG(\Pd(S))$ is in $\tauCF(S)$,
and we would like to show that it is even in $\tauDCF(S$). But in
general this is not true. Take for instance $S$ = Tree. Let $\sigma$ be
of rank 2, and $a,b$ of rank 0, and consider the tree-to-string
translation $T = \{(\sigma aa,a),(\sigma bb,b)\}$. It is easy to see (and well
known) that $T$ cannot be defined by a deterministic top-down tree
transducer (i.e., a $\DCF$(Tree) transducer): then also $\sigma ab$ and $\sigma ba$ 
would be in its domain. On the other hand, it is also easy to
see that $T$ can be defined by a $\DREG(\Pd$(Tree)) transducer: use
$\push(\gamma,\sel_1)$ and $\push(\gamma,\sel_2)$ to inspect the subtrees of the
input tree. Thus $\tauDCF$(Tree) is properly included in
$\tauDREG(\Pd$(Tree)). To solve such problems, the top-down tree
transducer was equipped with a \emp{look-ahead} facility
\cite{Eng4, Eng5,EngRozSlu}. To define $T$, the top-down tree transducer
could look ahead at the subtrees of the input tree, to see
whether they have the same label.

Let $\G$ be a deterministic $\REG(\Pd(S))$ transducer, and
consider the equivalent $\CFext(S)$ transducer $\G'$ as defined
in the second part of the proof of Theorem~\ref{theo5.1}.
As observed at the end of that proof, the only determinism of $\G'$
is due to the choice of the ``return nonterminal'' $E$ in rules of the form 
$$\langle A,\gamma,C\rangle \rightarrow \ruleif b \rulethen
      w\langle B,\delta,E\rangle(f)\langle E,\gamma,C\rangle(\id)$$
that correspond to the push-rules of $\G$. 
For a configuration $c$, $\langle B,\delta,E\rangle(c)$ generates 
a terminal string iff $\G$ has a derivation $B((\delta,c)\beta) \Rightarrow^* wE(\beta)$
that does not test the nonempty pushdown~$\beta$. 
Clearly, due to the determinism of $G$, the ``return state'' $E$ is uniquely determined by 
the ``state''~$B$, the pushdown symbol $\delta$, and the configuration $c$. 
In other words, for given $B$, $\delta$ and $c$, there is at most one $E$
such that $\langle B,\delta,E\rangle(c)$ has a successful derivation in $\G'$. 
If $\G'$ could test the latter property (i.e., have some
knowledge about its own future behavior), then it could pick,
deterministically, the unique successful rule for $\langle A,\gamma,C\rangle$ (if it exists). 
Such tests will be called \emp{look-ahead} tests (also because in case $S$ = Tree it
corresponds to the above notion of look-ahead at subtrees).
Formally, we define them as an extension $S_\LA$ of a given storage
type~$S$. Thus, a transducer with look-ahead will not only be able
to test its own future behavior, but also that of others.

\begin{df}
Let $S = (C,P,F,I,E,m)$ be a storage type, and let
$\G = (N,e,\Delta,\Ain,R)$ be a $\CF(S)$ transducer. 
The \emp{set of configurations accepted} by $\G$ is 
$\Acc(\G) = \{c \in C \mid \Ain(c) \Rightarrow^* w$ 
for some $w \in \Delta^*\}$. \QEDB
\end{df}

\begin{df}\label{sla}
For a storage type $S = (C,P,F,I,E,m)$,  \emp{$S$ with look-ahead} 
is the storage type $S_\LA = (C,P \cup P',F,I,E,m')$, where
$P' = \{\acc(\G) \mid \G$ is a $\CF(S)$ transducer$\}$, $m'$ restricted to
$P \cup F \cup E$ is equal to $m$, and, for every $c \in C$,
$m'(\acc(\G))(c) = \true$ iff $c \in \Acc(\G)$.  \QEDB
\end{df}

Note that $\Acc(\G) = A(\G')$, where $\G' = (N,\mathrm{en},\Delta,\Ain,R)$ is
a $\CF(S')$ transducer for $S' = (C,P,F,C,\{\mathrm{en}\},m)$, with
$m(\mathrm{en}) = \id(C)$. Thus the encoding $e$ of $\G$ is irrelevant
and its terminal alphabet can be taken empty.
This also shows, by Lemma~\ref{cfext_pf}, that in Definition~\ref{sla} we can
equivalently put $P' = \{\acc(\G) \mid \G$ is a $\CFext(S)$ transducer$\}$,
see the second remark after Lemma~\ref{cfext_pf}.

Since the domain of a top-down tree(-to-string)
transducer is a regular tree language \cite{Rou}, it follows that
$\Acc(\G)$, restricted to some $T_\Sigma$, is a regular tree language for
every $\CF$(Tree) transducer $\G$. Hence the $\RT$(Tree$_\LA$) transducer is
the \emp{top-down tree transducer with regular look-ahead} of \cite{Eng4}
(and the notions of determinism are the same).

We now show that a deterministic $\REG(\Pd(S))$ transducer can be simulated by a
deterministic $\CF(S_\LA)$ transducer.

\begin{theo}\label{theo5.3}
 For every storage type $S$,
$\tauDREG(\Pd(S)) \subseteq \tauDCF(S_\LA)$. 
\end{theo}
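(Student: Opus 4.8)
The plan is to feed a deterministic $\REG(\Pd(S))$ transducer $\G$ into the construction of the second half of the proof of Theorem~\ref{theo5.1} and then repair the single source of nondeterminism of the resulting $\CFext(S)$ transducer by means of look-ahead predicates. So, first bring $\G$ into the normal form used there (all rules of the shape $A \rightarrow \ruleif \test(b)\text{ \bool{and} }\ttop{=}\,\gamma \rulethen \xi$, the $\bottom$ predicate unused, final rules fired only on a one-cell pushdown; these transformations preserve determinism), and let $\G' = (N',e',\Delta,\Ain',R')$ be the equivalent $\CFext(S)$ transducer produced, with nonterminals $\langle A,\gamma,B\rangle$ ($A\in N$, $\gamma\in\Gamma_\G$, $B\in N\cup\{\omega\}$) and $T(\G')=T(\G)$. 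As explained in the discussion preceding the theorem, the only nondeterminism of $\G'$ lies in the push-rules $\langle A,\gamma,C\rangle \rightarrow \ruleif b \rulethen w\langle B,\delta,E\rangle(f)\langle E,\gamma,C\rangle(\id)$ (with $E$ ranging over $N$), and --- this is the crucial point, inherited from determinism of $\G$ --- for each $B\in N$, $\delta\in\Gamma_\G$ and $c\in C$ there is at most one $E\in N$ such that $\langle B,\delta,E\rangle(c)\Rightarrow^*_{\G'} w$ for some $w\in\Delta^*$, because such a derivation mirrors the uniquely determined computation of $\G$ started in state $B$ with $(\delta,c)$ on top of its pushdown, run until that cell is popped, $E$ being the state then reached.

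The second step turns this uniqueness into a usable test. For $B\in N$, $\delta\in\Gamma_\G$, $E\in N$ and $f\in F$, let $\G_{B,\delta,E,f}$ be the $\CFext(S)$ transducer obtained from $\G'$ by adjoining a fresh initial nonterminal $Z$ with the single rule $Z\rightarrow \langle B,\delta,E\rangle(f)$. Then, directly from the definition of $\Acc$, a configuration $c$ lies in $\Acc(\G_{B,\delta,E,f})$ iff $m(f)(c)$ is defined and $\langle B,\delta,E\rangle(m(f)(c))\Rightarrow^*_{\G'} w$ for some $w\in\Delta^*$. By Definition~\ref{sla}, using the variant of $P'$ built from $\CFext(S)$ transducers as permitted by the remark following Lemma~\ref{cfext_pf}, the predicate $\acc(\G_{B,\delta,E,f})$ is a predicate of $S_\LA$, and it tests exactly this property of the current configuration.

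The third step is the construction itself. Define the $\CFext(S_\LA)$ transducer $\G^* = (N',e',\Delta,\Ain',R^*)$ by copying every rule of $\G'$ verbatim except the push-rules, each of which becomes $\langle A,\gamma,C\rangle \rightarrow \ruleif b\text{ \bool{and} }\acc(\G_{B,\delta,E,f}) \rulethen w\langle B,\delta,E\rangle(f)\langle E,\gamma,C\rangle(\id)$. I then check determinism and equivalence. For determinism: among the rules with a fixed left-hand side $\langle A,\gamma,C\rangle$, those coming from distinct rules of $\G$ already have mutually exclusive tests (here determinism of $\G$ and the normal form enter, just as in the proof of Lemma~\ref{cfp_nf} for pushdown transducers --- note that in $\G'$ the test of such a rule is the $\BEP$-part $b$ alone), while the $|N|$ rules arising from one push-rule of $\G$, which in $\G'$ all carried the same test $b$, are now pairwise separated: on any $c$ with $m(b)(c)=\true$ the uniqueness observation forbids two of the conjuncts $\acc(\G_{B,\delta,E,f})$ from holding simultaneously. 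For equivalence: any derivation of $\G^*$ is a fortiori a derivation of $\G'$, since the added conjunct only strengthens a test; and conversely, given a successful derivation of $\G'$, each of its push-rule steps instantiates $E$ so that the occurrence $\langle B,\delta,E\rangle(m(f)(c))$ it creates is later rewritten to a terminal string, whence $\acc(\G_{B,\delta,E,f})$ holds at that step and the derivation is also a derivation of $\G^*$. Thus $T(\G^*)=T(\G')=T(\G)$. Finally, $\tauDCFext(S_\LA)=\tauDCF(S_\LA)$ by the first remark after Lemma~\ref{cfext_pf} applied to the storage type $S_\LA$, so $T(\G)=T(\G^*)\in\tauDCF(S_\LA)$; as $\G$ was arbitrary, $\tauDREG(\Pd(S))\subseteq\tauDCF(S_\LA)$.

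The main obstacle is the uniqueness of the return state $E$: making it precise requires a careful correspondence between a derivation of $\G'$ issuing from a single triple-nonterminal occurrence and the deterministic pushdown computation of $\G$, segmented along matching push/pop pairs, and one must argue that determinism of $\G$ controls not merely which state is reached when a cell is finally popped but also whether it is ever popped at all (as opposed to the computation diverging, blocking, or accepting while that cell is still buried). The surrounding bookkeeping --- that inserting the look-ahead conjuncts into the push-rules neither deletes a needed derivation nor admits a spurious one, and that $\G^*$ still obeys the $\CFext$ end-of-rule restriction --- is routine.
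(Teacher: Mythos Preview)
Your proof is correct and follows essentially the same approach as the paper: apply the triple construction from Theorem~\ref{theo5.1} to the deterministic $\REG(\Pd(S))$ transducer, then eliminate the sole source of nondeterminism (the choice of the return state $E$ in the push-rules) by guarding each such rule with the look-ahead predicate $\acc(\G_{B,\delta,E,f})$, and finally pass from $\CFext(S_\LA)$ to $\CF(S_\LA)$ via the first remark after Lemma~\ref{cfext_pf}. Your write-up is in fact more explicit than the paper's in spelling out why the return state is unique and in verifying determinism and equivalence of the resulting transducer; the paper simply asserts these ``should now be clear''.
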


\begin{proof}
Let $\G = (N,e,\Delta,\Ain,R)$ be a deterministic
$\REG(\Pd(S))$ transducer,
and let $\G' = (N',e',\Delta,\Ain',R')$ be the $\CFext(S)$ transducer
constructed in the second part of the proof of Theorem~\ref{theo5.1},
where $N'$ consists of all triples
$\langle B,\delta,E\rangle$
with $B \in N$, $\delta \in \Gamma_\G$, and $E \in N \cup \{\omega\}$. 
Note that the assumptions on $\G$, made in that proof, preserve determinism. 
For every triple $\langle B,\delta,E\rangle$ with $E\in N$, 
and every instruction symbol $f$, define the $\CFext(S)$ transducer
$\G'(\langle B,\delta,E\rangle(f))=(N'\cup\{\underline{A}\},e',\Delta,\underline{A},\underline{R})$
where $\underline{A}$ is a new nonterminal and $\underline{R}$ consists of all rules of $R'$
plus the rule $\underline{A}\rightarrow \langle B,\delta,E\rangle(f)$.
Note that $\G'(\langle B,\delta,E\rangle(f))$ can be used as a look-ahead test,
as observed after Definition~\ref{sla}.
We now construct the $\CFext(S_\LA)$ transducer $\G''$ from $\G'$ by changing every rule 
$$\langle A,\gamma,C\rangle \rightarrow \ruleif b \rulethen
      w\langle B,\delta,E\rangle(f)\langle E,\gamma,C\rangle(\id)$$
that corresponds to a push-rule
$$A \rightarrow \ruleif \test(b) \text{ \bool{and} } \ttop{=}\,\gamma \rulethen wB(\push(\delta,f))$$
of $\G$, into the rule 
$$\langle A,\gamma,C\rangle \rightarrow 
      \ruleif b \text{ \bool{and} } \acc(\G'(\langle B,\delta,E\rangle(f))) \rulethen
      w\langle B,\delta,E\rangle(f)\langle E,\gamma,C\rangle(\id).$$
Since $\G$ is deterministic, 
and hence the tests $\acc(\G'(\langle B,\delta,E\rangle(f)))$ are mutually disjoint
(for fixed $B$, $\delta$, and $f$), it should now be clear that 
$\G''$ is deterministic and equivalent to $\G$. 

This proves that 
$\tauDREG(\Pd(S)) \subseteq \tauDCFext(S_\LA)=\tauDCF(S_\LA)$,
see the first remark after Lemma~\ref{cfext_pf}. 
\end{proof}

We now have $\tauDCF(S) \subseteq \tauDREG(\Pd(S)) \subseteq
\tauDCF(S_\LA) \subseteq \tauDREG(\Pd(S_\LA))$, and in general not more
can be said. For $S$ = Tree it can be shown that $\Pd$(Tree$_\LA$) is ``equivalent'' to
$\Pd$(Tree), see \cite{EngVog2}, and hence $\tauDREG(\Pd$(Tree)) =
$\tauDCF$(Tree$_\LA$), as shown in Theorem~4.7 of \cite{EngRozSlu} (cf.
next section). Also, there exist storage types $S$ that are ``closed
under look-ahead'', i.e., for which $S_\LA$ is ``equivalent'' to $S$ (see
\cite{EngVog2,Eng9} for this notion of equivalence). For such storage
types the nice equality $\tauDCF(S) = \tauDREG(\Pd(S))$ holds. As proved
in \cite{EngVog2}, an example of such a storage type is $\Pd$(Tree); this
was used in \cite{EngVog2} to show that deterministic \emp{macro
tree-to-string transducers} \cite{EngVog1,CouFra} are equivalent to
$\tauDCF(\Pd$(Tree)) = $\tauDREG(\Pd^2$(Tree)), i.e., to deterministic
$\REG(\Pd(\Pd$(Tree))) transducers. Another storage type closed under
look-ahead is Pushdown, i.e., $\Pd_\LA$ is ``equivalent'' to~P. In fact,
pushdown automata with look-ahead, i.e., $\REG(\Pd_\LA)$ r-acceptors,
are similar to the \emp{predicting machines} of (Section 10.3 of)
\cite{HopUll}. Note how funny the development of the notion of
look-ahead is: pushdown automata with look-ahead on the input
string (in parsing theory), top-down tree transducers with
look-ahead on the input tree, $\CF(S)$ transducers with look-ahead
on the storage (note that, very often, the storage is also the
input), pushdown automata with look-ahead on the pushdown (also
useful in parsing theory, as shown in \cite{EngVog4}). Anyway,
``look-ahead'' seems to be a useful tool in several parts of
formal language theory.

We have shown that $\tauDREG(\Pd(S))$ is included in $\tauCF(S)\cap \mathrm{PF}$, 
and that it is included in $\tauDCF(S_\LA)$. 
This suggests that maybe even $\tauCF(S)\cap \mathrm{PF} \subseteq \tauDCF(S_\LA)$.
We show that this holds provided $S$ is \emp{noetherian}, which means that 
there do not exists $c_i\in C$ and $f_i\in F$ for $i\in\N$ such that 
$m(f_i)(c_i)=c_{i+1}$ for every $i\in\N$. This property of $S$ ensures that 
a $\CF(S)$ transducer has no infinite derivations. The storage types One-way and Tree
are noetherian, but the storage types Pushdown and Tree-walk are not. 

For a noetherian storage type $S$, consider a $\CF(S)$ transducer $\G$ such that $T(\G)$ is a
partial function. At each moment of time during a derivation of
$\G$, several rules may be applicable. Since $T(\G)$ is a partial
function, it really does not matter which of these rules is
taken, as long as application of the rule leads to a successful
derivation. This can be detected by a look-ahead test, which allows to pick
one of the successful rules, deterministically. 
Thus, look-ahead can be used to turn semantic determinism into syntactic determinism.

\begin{theo}\label{theo5.2}
 Let $S$ be a noetherian storage type, and $\mathrm{PF}$ the class of all
partial functions. Then $\tauCF(S) \cap \mathrm{PF} \subseteq \tauDCF(S_\LA)$.
\end{theo}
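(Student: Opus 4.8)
The plan is to turn ``semantic determinism'' into ``syntactic determinism'' by a direct construction that uses one look-ahead predicate per rule. Let $\G=(N,e,\Delta,\Ain,R)$ be a $\CF(S)$ transducer with $T(\G)$ a partial function; we may assume, by the conventions of Section~\ref{sect1.1}, that every right-hand side lies in $(N(F)\cup\Delta)^*$. For each rule $\rho\colon A\rightarrow\ruleif b\rulethen\xi$ of $R$, let $\G_\rho$ be the $\CF(S)$ transducer obtained from $\G$ by adjoining a fresh nonterminal $A_\rho$, making it initial, and adding the single rule $A_\rho\rightarrow\ruleif b\rulethen\xi$ (the same device used for $\G'(\langle B,\delta,E\rangle(f))$ in the proof of Theorem~\ref{theo5.3}). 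Then for $c\in C$ we have $\acc(\G_\rho)(c)=\true$ iff $\rho$ is applicable to $A(c)$ and the sentential form it produces can be completed, in $\G$, to a terminal string. Fix a linear order $\rho_1,\dots,\rho_k$ on $R$ and let $\G''=(N,e,\Delta,\Ain,R'')$, where $R''$ contains, for each $\rho_j\colon A\rightarrow\ruleif b_j\rulethen\xi_j$, the rule with right-hand side $\xi_j$ whose test is ``$\acc(\G_{\rho_j})$ holds and $\acc(\G_{\rho_i})$ fails for every $i<j$ such that $\rho_i$ has left-hand side $A$''. Since $\acc(\G_{\rho_i})$ occurs positively in the test of $\rho_i''$ and negatively in that of $\rho_j''$ (for $i<j$ with the same left-hand side), $\G''$ is deterministic; since the test of $\rho_j''$ forces the test of $\rho_j$ (acceptance presupposes applicability), every $\Rightarrow_{\G''}$-step is a $\Rightarrow_\G$-step, so $T(\G'')\subseteq T(\G)$.

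For the reverse inclusion I would first extract a well-founded measure from noetherian-ness. If $S$ is noetherian, every $\Rightarrow_\G$-derivation is finite: an infinite one yields an infinite, finitely branching derivation tree, hence (König) an infinite path, hence an infinite chain $c_0,c_1,\dots$ with $c_{i+1}=m(f_i)(c_i)$, which is impossible. Applying König's lemma to the finitely branching, every-branch-finite tree of \emph{all} derivations from a sentential form $\zeta$ then gives a finite bound $h(\zeta)\in\N$ on their lengths. I would next prove, by induction on $h(\zeta)$, that \emph{if $\zeta$ has at most one terminal string reachable by $\Rightarrow_\G^*$ and $\zeta\Rightarrow_\G^*w$, then $\zeta\Rightarrow_{\G''}^*w$}. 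In the inductive step write $\zeta=x\,A(c)\,\eta$ with $x\in\Delta^*$ and $A(c)$ the leftmost nonterminal occurrence. Since $A(c)$ derives a terminal string, some $\acc(\G_\rho)$ with left-hand side $A$ holds at $c$; hence exactly one rule of $\G''$ with left-hand side $A$ fires at $c$, namely the least-indexed such $\rho$, and it is applicable in $\G$ as well, giving a step $\zeta\Rightarrow_\G\zeta'$ that is also $\zeta\Rightarrow_{\G''}\zeta'$. Because $T(\G)$ is a partial function, $\zeta'$ again has a unique $\Rightarrow_\G^*$-reachable terminal string and it is still $w$, while $h(\zeta')<h(\zeta)$; the induction hypothesis yields $\zeta'\Rightarrow_{\G''}^*w$, hence $\zeta\Rightarrow_{\G''}^*w$. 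Applying this to $\zeta=\Ain(m(e)(u))$ — which has a unique reachable terminal string exactly because $T(\G)$ is a partial function — gives $T(\G)\subseteq T(\G'')$. Thus $T(\G)=T(\G'')$, and as $\G''$ is a deterministic $\CF(S_\LA)$ transducer, $T(\G)\in\tauDCF(S_\LA)$, which proves the theorem.

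The main obstacle I anticipate is the inductive step: one must check that the rule $\G''$ is forced to apply at $A(c)$ is not merely the unique fireable one but also ``harmless'', i.e.\ that replacing $\G$'s choice by it cannot perturb the (unique) output — this is precisely where functionality of $T(\G)$ enters, and it must be threaded carefully through the leftmost decomposition $\zeta=x\,A(c)\,\eta$ and the passage from $\zeta$ to $\zeta'$ (one uses that $A(c)$ and $\eta$ each have a unique $\Rightarrow_\G^*$-reachable terminal string). It is also worth stressing that the noetherian hypothesis is essential rather than cosmetic: without it the committed rule may diverge (a rule such as $A\rightarrow A(\id)$ always ``looks'' successful, yet iterating it never produces output), so the finiteness of $h(\zeta)$ is the load-bearing part of the argument. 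The remaining points are routine: that each $\G_\rho$ is a legitimate $\CF(S)$ transducer, so $\acc(\G_\rho)$ is a predicate symbol of $S_\LA$; that normalizing right-hand sides to lie in $(N(F)\cup\Delta)^*$ is harmless; and that the construction preserves transducer determinism.
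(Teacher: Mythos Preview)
Your proposal is correct and follows essentially the same approach as the paper. Your $\G_\rho$ is the paper's $\G(r)$, and your test ``$\acc(\G_{\rho_j})\wedge\bigwedge_{i<j}\neg\acc(\G_{\rho_i})$'' is the disjunctive-normal-form compression of the paper's construction (which enumerates all $2^k$ conjunctions $q_1\wedge\cdots\wedge q_k$ with $q_i\in\{p_i,\text{\bool{not}}\,p_i\}$ and picks $\xi_{i(d)}$ for the least positive index); the two constructions yield the same transducer up to splitting rules along mutually exclusive tests. The paper then simply asserts $T(\G')=T(\G)$ with a pointer to \cite{Eng5}, whereas you spell out the correctness argument via König's lemma and induction on the depth $h(\zeta)$ of the (leftmost) derivation tree from $\zeta$; your argument is sound, including the propagation of ``at most one reachable terminal string'' from $\zeta=xA(c)\eta$ to $A(c)$, to $\eta$, and to the successor $\zeta'$.
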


\begin{proof}
 The proof is entirely analogous to the one for the
case $S$ = Tree in \cite{Eng5}. Let $\G = (N,e,\Delta,\Ain,R)$ be a $\CF(S)$
transducer such that $T(\G)$ is a partial function. For every rule
$r: A \rightarrow \ruleif b \rulethen \xi$ of $R$, define the $\CF(S)$
transducer $\G(r) = (N\cup\{\underline{A}\},e,\Delta,\underline{A},\underline{R})$, 
where $\underline{R}$ consists of all rules of $R$
plus the rule $\underline{A} \rightarrow \ruleif b \rulethen \xi$. 
The acceptor $\G(r)$ will be used
as a look-ahead test, to see whether $r$ leads to a successful derivation:
$$\Acc(\G(r)) = \{c \in C \mid A(c) \Rightarrow^r_\G \xi' \Rightarrow^*_\G w \text{ for some } 
w \in \Delta^* \text{ and } \xi' \in (N(F)\cup\Delta)^*\},$$ 
where $\Rightarrow^r_\G$ means that the derivation step
consists of the application of $r$. We now construct the
deterministic $\CF(S_\LA)$ transducer $\G' = (N,e,\Delta,\Ain,R')$, where $R'$
is defined as follows. For a given nonterminal $A$, let
$r_1: A \rightarrow \ruleif b_1 \rulethen \xi_1,\ldots, r_k: A \rightarrow
\ruleif b_k \rulethen \xi_k$ be all rules
in $R$ with left-hand side $A$, numbered arbitrarily from 1 to $k$.
Let $p_i = \acc(\G(r_i))$, a predicate symbol of $S_\LA$. For every test
$d = q_1\text{ \bool{and} }q_2\text{ \bool{and} }\cdots\text{ \bool{and} }q_k$ 
with $q_i \in \{p_i,\text{ \bool{not}}\,p_i\}$ for all $i$, the
rule $A \rightarrow \ruleif d \rulethen \xi_{i(d)}$ is in $R'$, where $i(d)$ is
the first
integer $i$, $1 \leq i \leq k$, such that $q_i = p_i$ (if there is such an $i$).
For this $\G'$, $T(\G') = T(\G)$. 
\end{proof}

\vspace{1em}

\begin{newob}\rm
One of the most stupid mistakes that I have made in my entire life 
(I~mean, mathematical mistakes),
is to think that the proof of Theorem~\ref{theo5.2} works for arbitrary storage types:
in the original version of this report the noetherian condition was not present 
in Theorem~\ref{theo5.2}, and 
Theorem~\ref{theo5.3} was derived from Lemma~\ref{cfext_pf} and Theorem~\ref{theo5.2}.
This stupid mistake was repeated for two-way pebble transducers 
in the proof of Theorem~3 of~\cite{*EngMan}, which is therefore wrong.
Fortunately, Hendrik Jan Hoogeboom discovered the mistake in 2006. 
I keep wondering about all my mistakes that were not discovered yet $\dots$. 
\end{newob}


\newpage
\section{\sect{Specific pushdown machines}}\label{sect6}

Applying the theorems of the previous section to the
specific storage types discussed in Sections \ref{sect3} and \ref{sect4},
immediately gives us several known pushdown machine
characterizations of the corresponding formalisms. The idea is
to convince the reader that these characterizations ``trivially''
follow from the fact that the formalisms are (or: can be viewed
as) context-free $S$ grammars. Let us discuss them one by one (as
numbered in Section \ref{sect3}).

\vspace{1.5em}

(1) $S = S_0$. Since $S_0$ has an identity,
$\lambdaCF(S_0) = \lambdaREG(\Pd(S_0))$. This shows that the
context-free grammar and the one-way pushdown automaton are equivalent
(surprise!).

\vspace{1.5em}

(2) $S$ = Pushdown. Since Pushdown has an identity, it follows from 
Theorem \ref{theo5.1}(2) that $\lambdaCF(\Pd) = \lambdaREG(\Pd(\Pd))$. In other words:
the indexed grammar is equivalent to the one-way $\Pd(\Pd)$ automaton.
Since a storage configuration of this automaton is a pushdown of
pushdowns (see Fig.~4), we also call this the \emp{one-way pushdown$^2$
automaton} (or $\Pd^2$ automaton). This result was first mentioned in
\cite{Gre}, then its proof was sketched in \cite{Mas1,Mas2}, and finally it
was proved in \cite{ParDusSpe1}, where the $\Pd^2$ automaton is defined as
\emp{indexed pushdown automaton} (which means that with each pushdown
symbol a sequence of flags is associated). As mentioned in
\cite{Mas1,Mas2}, and shown in \cite{Eng9,EngVog2}, the P$^2$ automaton is
equivalent to (and rather close to) the \emp{nested stack automaton}
of \cite{Aho2} (the nested stack may be viewed as a space
optimization of the pushdown of pushdowns). Thus, this fits with
the equivalence of the indexed grammar and the nested stack
automaton, established in \cite{Aho2}.

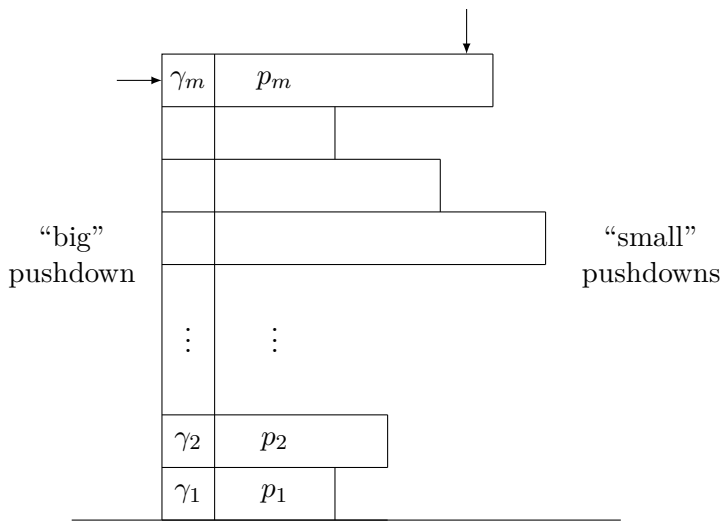
\begin{figure}[h]
\centering
\begin{tikzpicture}

\draw (0,0) -- (3,0);
\draw (4.4,6.2) -- (0,6.2) --(0,0);
\draw (0,0.7) -- (3,0.7);
\draw (0,1.4) -- (3,1.4);
\draw (0,5.5) -- (4.4,5.5);
\draw (0,4.8) -- (3.7,4.8);
\draw (0,4.1) -- (5.1,4.1);
\draw (0,3.4) -- (5.1,3.4);
\draw (0.7,0) -- (0.7,6.2);
\draw (-1.2,0) -- (6.1,0);
\draw (2.3,0) -- (2.3,0.7);
\draw (3,0.7) -- (3,1.4);
\draw (4.4,6.2) -- (4.4,5.5);
\draw (2.3,5.5) -- (2.3,4.8);
\draw (3.7,4.8) -- (3.7,4.1);
\draw (5.1,4.1) -- (5.1,3.4);

\draw[-latex] (-0.6,5.85) -- (0,5.85);
\draw[-latex] (4.05,6.8) -- (4.05,6.2);

\node[align=center] at (-1.2,3.5) {``big''\\ pushdown};
\node[align=center] at (6.5,3.5) {``small''\\ pushdowns};

\node at (0.35,0.35) {$\gamma_1$};
\node at (0.35,1.05) {$\gamma_2$};
\node at (0.35,5.85) {$\gamma_m$};

\node at (1.5,0.35) {$p_1$};
\node at (1.5,1.05) {$p_2$};
\node at (1.5,5.85) {$p_m$};
\node at (1.5,2.5) {$\vdots$};
\node at (0.35,2.5) {$\vdots$};

\end{tikzpicture}
 \caption{A configuration of $\Pd(\Pd)$: a pushdown of pushdowns. Each
``small'' pushdown $p_i$ is inside a cell of the ``big'' pushdown. The
``small'' pushdowns are drawn with their top to the right. Only $\gamma_m$ and
the top-cell of $p_m$ are accessible. The division of cells in the ``small''
pushdowns is not shown.}
\end{figure}

This storage type $\Pd^2$ might give you the idea to
consider $\CF(\Pd^2)$ grammars. By Theorem \ref{theo5.1}(2) again, these are
equivalent with the one-way $\Pd^3$ automata, etc. Define in general
$\Pd^{n+1}(S) = \Pd(\Pd^n(S))$ and $\Pd^0(S) = S$. Abbreviate $\Pd^n(S_0)$ by
$\Pd^n$. The $\CF(\Pd^n)$ grammar is rather close to the \emp{$n$-level indexed grammar} of
\cite{Mas1,Mas2}, restricted to left-most derivations. The one-way $\Pd^n$
automaton is the \emp{$n$-level pushdown automaton} of \cite{Mas1,Mas2,DamGoe},
called \emp{$n$-iterated pushdown automaton} in
\cite{Eng9,EngVog2,EngVog3,EngVog4}.
Theorem \ref{theo5.1}(2) implies that $\lambdaCF(\Pd^n) = \lambdaREG(\Pd^{n+1})$, 
as shown in \cite{Mas1,Mas2}. Thus, the $n$-level indexed grammars are equivalent to
the $(n+1)$-iterated pushdown automata.

\vspace{1.5em}

(3) $S$ = Counter. We get $\lambdaCF$(Counter) = $\lambdaREG(\Pd$(Counter)), 
a special case of the pushdown$^2$ automaton.

\vspace{1.5em}

(4) $S$ = Count-down. The one-way $\Pd$(Count-down) automaton
is the \emp{preset pushdown automaton} of \cite{vLe1}. It is just like an
ordinary pushdown automaton, except that at the beginning of
each computation the length of the pushdown is preset to some
number of cells. The automaton ``knows'' when it reaches this
``ceiling'', and can react to it. In fact, the number of cells is
chosen (nondeterministically) by the encoding, and is decreased
by one by each $\push(\gamma,\dec)$ instruction. The ``ceiling'' can be
detected by the predicate $\test(\nnull)$.

As an example (taken from \cite{vLe1}) we give a regular
$\Pd$(Count-down) transducer $\G_7$ with $L(\G_7)$ = $\{a^{n^2}\mid n \geq 1\}$.
When the pushdown is preset to $k$
cells (i.e., the encoding chooses $k-1$), $\G_7$ first generates $k$ $a$'s, by
moving to
the ceiling and back again. Then $\G_7$ pushes twice, and generates
$k-2$ $a$'s, by moving to the ceiling, and back to the same square,
marked for that purpose. Then $\G_7$ pushes twice, etc. In this way
$\G_7$ generates $k+(k-2)+\cdots+3+1$ $a$'s (assuming that $k$ is odd), i.e.,
all squares. Formally $\G_7 = (N,e,\{a\},\Ain,R)$, where $N = \{A,B,C\}$,
$\Ain = A$, $e = (\#,\mathrm{en})$, and $R$ consists of the following rules
(pushdown symbols $\#$ and $\$$ are used; $\#$ as marker, and $\$$ as a
dummy symbol).

\begin{center}\parbox{0cm}{
\begin{tabbing}
    \=$A $\=$\ar $\=$\ruleif\text{ \bool{not}}\test(\nnull) \rulethen
      aB(\push(\$,\dec)) \ruleelse a$\\[1mm]
    \>$B $\=$\ar $\=$\ruleif\text{ \bool{not}}\test(\nnull) \rulethen
      aB(\push(\$,\dec)) \ruleelse aC(\pop)$\\[1mm]
    \>$C $\=$\ar $\=$\ruleif\text{ \bool{not}}\ttop{=}\,\# \rulethen
      C(\pop) \ruleelse A(\push(\#,\dec);\push(\#,\dec))$
\end{tabbing}}
\end{center}

\noindent This concludes the example.

By Theorem \ref{theo5.1}(1), $\lambdaREG(\Pd$(Count-down)) =
$\lambdaCFext$(Count-down). In other words, the preset pushdown
automaton is equivalent to what might be called the extended EOL
system. It is easy to see (cf. the proof of Lemma~\ref{cfext_pf}) that 
this extended EOL system can be viewed as an EOL
system in which the right-hand sides of rules are regular
languages: a so-called $\REG$-iteration grammar, with one
substitution (see \cite{RozSal, Asv, Eng3}; in \cite{vLe1} the
corresponding class of languages is called the hyper-algebraic
extension of $\REG$). This equivalence of preset pushdown automata
to $\REG$-iteration grammars was proved in \cite{vLe1}. To obtain an
automaton equivalent to the EOL system, the preset pushdown
automaton should be restricted to have the bounded excursion
property (see \cite{vLe1, EngVog2}).

\vspace{1.5em}

(5) $S$ = One-way. By a rather easy argument it can be
shown that $\lambdaCFext$(One-way) = $\lambdaCF$(One-way), 
cf. Lemma 3.3.2 of \cite{EngRozSlu}. 
Thus Theorem \ref{theo5.1}(1) implies that the one-way
$\Pd$(One-way) automaton is equivalent to the ETOL system. In fact,
this automaton is the \emp{checking-stack/pushdown automaton} (CS-PD automaton),
introduced in \cite{vLe2} where this equivalence was proved (see also
\cite{EngSchvLe, EngRozSlu, RozSal}). To see that the $\REG(\Pd$(One-way))
\mbox{r-acceptor} is the CS-PD automaton, note that the pushdown of a
$\REG(\Pd$(One-way)) r-acceptor $\G$ is always of the form 
$(\gamma_m,c_m)\cdots (\gamma_2,c_2)(\gamma_1,c_1)$

\begin{tabbing}
 with \; \=$c_1$\; \=$=\;a_1$\= $a_2\cdots$\= $\cdots a_n$\\
 \>$c_2$\>$=$ \>$a_2\cdots$\>$\cdots a_n$\\
 \>\> $\cdots$\\
 \>$c_m$\>$=$\>$\quad \, a_m$\>$\cdots a_n$
\end{tabbing}

\noindent where $a_1 a_2\cdots a_n$ is the string ``guessed'' by the encoding
of $\G$ at the beginning of its computation. Another, obviously
equivalent, way of representing this storage configuration is by
an ordinary pushdown $\gamma_m\cdots \gamma_2\gamma_1$ 
and a checking stack $a_1 a_2\cdots a_n$, with their reading
heads combined into one, pointing to $\gamma_m$ and $a_m$, see Fig.~5. 
The one reading head moves
in a two-way fashion over the checking stack, synchronously with
the movements of the top of the pushdown. A $\push(\gamma,\rread)$
instruction moves the reading head to the right, and a $\pop$
instruction moves it to the left. This is precisely the
behavior of a CS-PD automaton.

It can be shown that $\Pd$(One-way$_\LA$) is ``equivalent'' to
$\Pd$(One-way): it is the monadic case of $\Pd$(Tree$_\LA$) = $\Pd$(Tree), see
\cite{EngVog2}. Hence $\lambdaDREG(\Pd$(One-way)) =
$\lambdaDCF$(One-way$_\LA$), cf. the
discussion after Theorem \ref{theo5.3}. It can also easily be shown that
$\lambdaDCF$(One-way$_\LA$) = $\lambdaDCF$(One-way), cf. Lemma 3.3.4 of
\cite{EngRozSlu}. Hence $\lambdaDREG(\Pd$(One-way)) is the class of EDTOL
languages. Note that the D stands for transducer determinism
(thus EDTOL does not correspond to the deterministic CS-PD
automaton), cf. Section 5 of \cite{EngSchvLe}.

Let us consider $\Pd$(One-way) with just one pushdown
symbol, i.e., $\Pd_{\p}$(One-way). It should be clear from the
discussion above that $\Pd_{\p}$(One-way) could be called \emp{Two-way}: the
storage type corresponding to a two-way read-only tape (with
endmarkers). Thus, the $\REG(\Pd_{\p}$(One-way)) transducer is the
\emp{two-way finite state transducer} or two-way gsm, the
$\CF(\Pd_{\p}$(One-way)) d-acceptor is the \emp{alternating two-way finite
automaton}, and the $\REG(\Pd_{\p}$(One-way)) r-acceptor is the one-way
\emp{checking stack automaton}.

\vspace{0.5cm}

\begin{figure}[h]
\centering
\begin{tikzpicture}

\draw (0,1.8) --(0,-0.4);
\draw (0,0.7) -- (3.7,0.7);
\draw (0,1.4) -- (3.7,1.4);
\draw (0.7,0) -- (0.7,1.4);
\draw (1.4,0) -- (1.4,1.4);
\draw (0,0) -- (3.7,0);
\draw (3,0) -- (3,1.4);
\draw (3.7,0) -- (3.7,1.4);

\draw[-latex] (3.35,-0.6) -- (3.35,0);
\draw[-latex] (3.35,2) -- (3.35,1.4);

\node[align=center] at (-0.8,0.75) {bottom};
\node[align=center] at (4.2,0.75) {top};

\node at (0.35,0.35) {$c_1$};
\node at (0.35,1.05) {$\gamma_1$};
\node at (1.05,0.35) {$c_2$};
\node at (1.05,1.05) {$\gamma_2$};
\node at (2.2,0.35) {$\dots$};
\node at (2.2,1.05) {$\dots$};
\node at (3.35,0.35) {$c_m$};
\node at (3.35,1.05) {$\gamma_m$};

\node at (5.4,2.3) {with};

\draw (8,1.9) --(8,2.6);
\draw (8.7,1.9) --(8.7,2.6);
\draw (9.4,1.9) --(9.4,2.6);
\draw (11,1.9) --(11,2.6);
\draw (11.7,1.9) --(11.7,2.6);
\draw (8,1.9) --(11.7,1.9);
\draw (8,2.6) --(11.7,2.6);

\node at (7.2,2.25) {$c_1 =$};
\node at (8.35,2.25) {$a_1$};
\node at (9.05,2.25) {$a_2$};
\node at (10.2,2.25) {$\dots$};
\node at (11.35,2.25) {$a_n$};

\draw (8.7,0.8) --(8.7,1.5);
\draw (9.4,0.8) --(9.4,1.5);
\draw (11,0.8) --(11,1.5);
\draw (11.7,0.8) --(11.7,1.5);
\draw (8.7,0.8) --(11.7,0.8);
\draw (8.7,1.5) --(11.7,1.5);

\node at (7.2,1.15) {$c_2 =$};
\node at (9.05,1.15) {$a_2$};
\node at (10.2,1.15) {$\dots$};
\node at (11.35,1.15) {$a_n$};

\node at (7.2,0.15) {$\vdots$};

\draw (10.1,-0.5) --(10.1,-1.2);
\draw (9.4,-0.5) --(9.4,-1.2);
\draw (11,-0.5) --(11,-1.2);
\draw (11.7,-0.5) --(11.7,-1.2);
\draw (9.4,-0.5) --(11.7,-0.5);
\draw (9.4,-1.2) --(11.7,-1.2);

\node at (7.2,-0.85) {$c_m =$};
\node at (9.75,-0.85) {$a_m$};
\node at (10.55,-0.85) {$\dots$};
\node at (11.35,-0.85) {$a_n$};

\end{tikzpicture}
 \captionsetup{labelformat=bf-parens, name=Fig. 5(a)}
 \caption{Configuration of P(One-way).}
\end{figure}
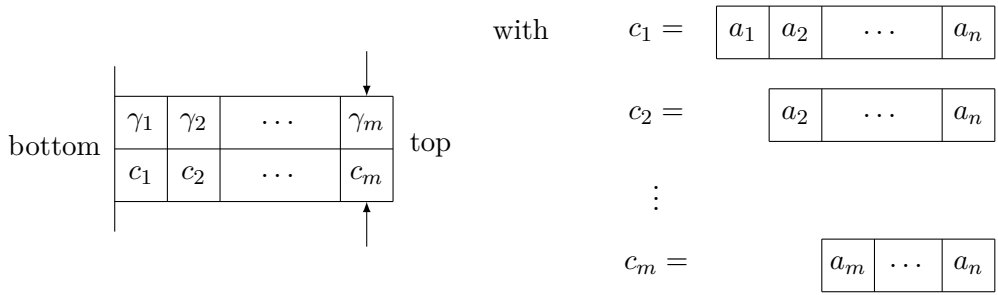

\vspace{0.5cm}

\begin{figure}[h]
\centering
\begin{tikzpicture}

\draw (0,1.75) --(0,0.35);
\draw (0,0.7) -- (3.7,0.7);
\draw (0,1.4) -- (3.7,1.4);

\draw (0.7,0.7) -- (0.7, 1.4);
\draw (1.4,0.7) -- (1.4,1.4);

\draw (0.7,-0.7) -- (0.7, 0);
\draw (1.4,-0.7) -- (1.4,0);

\draw (3,0.7) -- (3,1.4);
\draw (3.7,0.7) -- (3.7,1.4);

\draw (0,0) -- (6,0);
\draw (0,-0.7) -- (6,-0.7);
\draw (3,-0.7) -- (3,0);
\draw (0,-0.7) -- (0,0);
\draw (3.7,-0.7) -- (3.7,0);
\draw (5.3,-0.7) -- (5.3,0);
\draw (6,-0.7) -- (6,0);

\draw[-latex] (3.35,0) -- (3.35,0.7);
\draw[-latex] (3.35,0.7) -- (3.35,0);

\node[align=center] at (-0.8,1.05) {bottom};
\node[align=center] at (4.2,1.0) {top};

\node at (0.35,-0.35) {$a_1$};
\node at (0.35,1.05) {$\gamma_1$};
\node at (1.05,-0.35) {$a_2$};
\node at (1.05,1.05) {$\gamma_2$};
\node at (2.2,-0.35) {$\dots$};
\node at (4.5,-0.35) {$\dots$};
\node at (2.2,1.05) {$\dots$};
\node at (3.35,-0.35) {$a_m$};
\node at (5.65,-0.35) {$a_n$};
\node at (3.35,1.05) {$\gamma_m$};

\end{tikzpicture}
 \captionsetup{labelformat=bf-parens, name=Fig. 5(b)}
 \caption{Configuration of CS-PD.}
\end{figure}
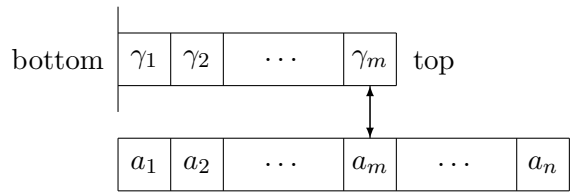
\addtocounter{figure}{-1}

\vspace{1.5em}

(6) $S$ = Tree. Theorem \ref{theo5.1}(1) shows that
$\tauCFext$(Tree) = $\tauREG(\Pd$(Tree)); see Theorem 4.5 of \cite{EngRozSlu}.
The $\CFext$(Tree) transducer is the \emp{regularly extended top-down
tree transducer}. Moreover, the $\REG(\Pd$(Tree)) transducer is the
\emp{checking-tree/pushdown transducer} (CT-PD transducer). To see
this, note that, as for the previous case of $S$ = One-way, the
pushdown of a $\REG(\Pd$(Tree)) transducer is always of the form
$(\gamma_m,t_m)\cdots(\gamma_2,t_2)(\gamma_1,t_1)$ where $t_1$ is the input tree
(as given by the encoding), and $t_{i+1}$ is a direct subtree of $t_i$, for all
$i$.
Therefore, this storage configuration can also be represented by
an ordinary pushdown (viz. $\gamma_m\cdots\gamma_2\gamma_1$) and a tree (viz.
$t_1$) with a pointer to one of its nodes (viz. the one corresponding to the
subtree $t_m$ of $t_1$). Again, the pushdown pointer and the tree
pointer are combined into one: a $\push(\gamma,\sel_i)$ moves the pointer
down in the tree to the $i$-th son of the current node, and a $\pop$
moves it up to its father. Thus, the length of the pushdown
equals the length of the path from the root to the current node
of the input tree. This is precisely the storage type of the
CT-PD transducer, see Fig.~6. In programming terminology it is
just the familiar fact that tree walking (CT-PD) can be
implemented by a pushdown of pointers to the tree ($\Pd$(Tree)).

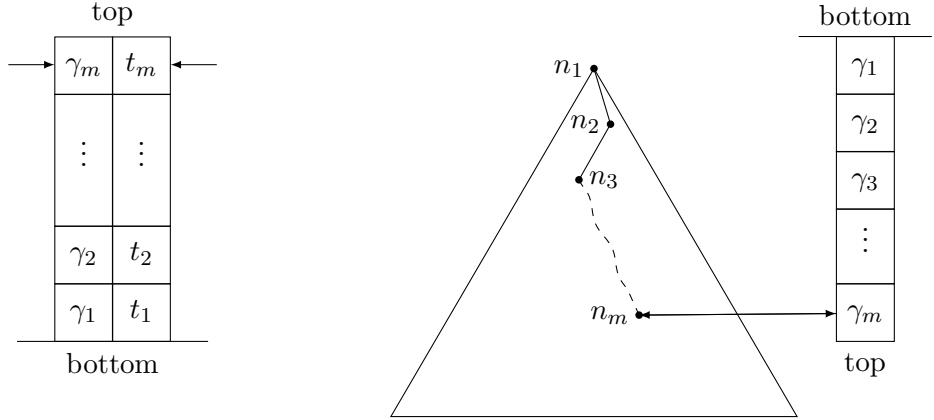
\begin{figure}[h]
\centering
\begin{tikzpicture}

\draw (6.4,3.63) -- (3.7,-1) -- (9.1,-1) -- cycle;

\coordinate[label=left:$n_1$] (A) at (6.4,3.63);

\coordinate[label=left:$n_2$] (B) at (6.62,2.89);

\coordinate[label=right:$n_3$] (C) at(6.2,2.15) ;

\coordinate[label=left:$n_m$] (D) at (7,0.35);

\fill (A) circle (0.05);

\fill (B) circle (0.05);

\fill (C) circle (0.05);

\fill (D) circle (0.05);

\draw (A) -- (B) -- (C);

\path [dashed,-, decoration={snake,segment length=22,amplitude=1},
  line join=round] (C) edge[decorate] (D);

\matrix(M1) at (0em, -\pgflinewidth)
[ inner sep=0
, anchor=south
, matrix of nodes
, row sep=-\pgflinewidth
, column sep=-\pgflinewidth
, every node/.style=
 { draw
 , minimum width=2em
 , text height=1.18em
 , text depth=0. 81em
 }
, row 2/.style={minimum height=4.56em}
]{
$\gamma_m$ &  $t_m$
\\
 \vdots & \vdots
\\
$\gamma_2$ &  $t_2$
\\
 $\gamma_1$ &  $t_1$
\\};

\draw (-3.3em, 0) -- ++(6.6em, 0);

\matrix(M2) at (26em, -\pgflinewidth)
[ inner sep=0
, anchor=south
, matrix of nodes
, row sep=-\pgflinewidth
, column sep=-\pgflinewidth
, every node/.style=
 { draw
 , minimum width=2em
 , text height=1.18em
 , text depth=0. 81em
 }
, row 4/.style={minimum height=2.58em}
]{
$\gamma_1$ 
\\
$\gamma_2$ 
\\
$\gamma_3$ 
\\
 \vdots 
\\
 $\gamma_m$ 
\\};

\draw (23.7em, 10.54em) -- ++(4.6em, 0);

\draw [-latex](D) -- ($(M2.south)+(-1em,1em)$);

\draw [-latex]($(M2.south)+(-1em,1em)$) -- (D);

\node[align=center, below=0em of M1](N1)  {bottom};

\node[align=center, above=0em of M1](N2) {top};

\node[align=center, below=0em of M2](N3) {top};

\node[align=center, above=0em of M2](N4) {bottom};

\draw[-latex] ($(M1.north)+(3.6em,-1em)$)--($(M1.north)+(2em,-1em)$);

\draw[-latex] ($(M1.north)+(-3.6em,-1em)$)--($(M1.north)+(-2em,-1em)$);

\end{tikzpicture}
 \captionsetup{labelformat=bf-parens, name=Fig. 6}
 \caption{To the left a configuration of P(Tree), and to the right a
configuration of CT-PD, where $n_i$ is the root of the subtree $t_i$, and
$n_{i+1}$ is a son of $n_i$.}
\end{figure}

Note that a $\Pd$(Tree) configuration can be represented by
several CT-PD configurations: a sequence of consecutive direct
subtrees does not uniquely determine the path in the tree
(think, e.g., of a full binary tree over $\{\sigma,a\}$, with $\rk(\sigma) =
2$ and $\rk(a) = 0$). However, there is still a one-to-one
correspondence between the instructions and predicates of
$\Pd$(Tree) and those of CT-PD. In fact, the \emp{storage type CT-PD}
can easily be defined, as a generalization of Tree-walk. Its
instructions are: $\down_i(\gamma)$, to move down to the $i$-th son and
push $\gamma$; $\up$, to move up to the father and pop; $\stay(\gamma)$, to stay
at the same node and change the top of the pushdown to $\gamma$; and
$\stay$, the identity. Its predicates are: $\ttop{=}\,\gamma$, to test the top
of the pushdown; $\llabel{=}\,\sigma$, to test the label of the current node;
and $\rroot$, to see whether the current node is the root of the
input tree. The correspondence between the instructions and
predicates of CT-PD and those of $\Pd$(Tree) is as follows.
\begin{tabbing}
\quad \=CT-PD: \qquad \=$\down_i(\gamma)$ \qquad \=$\up$ \qquad \=$\stay(\gamma)$ \quad \=$\stay$ \quad \=$\ttop{=}\,\gamma$ \quad \=$\llabel{=}\,\sigma$ \hspace{1cm} \=$\rroot$ \\[2mm]
\>$\Pd$(Tree): \>$\push(\gamma,\sel_i)$ \>$\pop$ \>$\stay(\gamma)$ \>$\stay$ \>$\ttop{=}\,\gamma$ \>$\test(\rroot{=}\,\sigma)$ \>$\bottom$
\end{tabbing}

\noindent Thus, we may identify CT-PD and $\Pd$(Tree).

As noted in Section \ref{sect5}, the storage types $\Pd$(Tree$_\LA$) and
$\Pd$(Tree) are equivalent, and hence $\tauDCF$(Tree$_\LA$) =
$\tauDREG(\Pd$(Tree)), i.e., the deterministic top-down tree
transducer with regular look-ahead is equivalent to the
deterministic CT-PD transducer, as shown in Theorem~4.7 of \cite{EngRozSlu}.

The storage type $\Pd$(Tree) might give the reader the idea
to consider $\CF(\Pd$(Tree)) transducers: by Theorem \ref{theo5.1}(2) they are
equivalent to $\REG(\Pd^2$(Tree)) transducers. Since $\Pd$(Tree) is
``closed under look-ahead'', this also holds for the deterministic
transducers. In fact, both the $\RT(\Pd$(Tree)) transducer and the
$\REG(\Pd^2$(Tree)) transducer are studied in \cite{EngVog2}, where they
are called \emp{indexed tree transducer} (because of the similarity to
the $\CF(\Pd)$ grammar, i.e., the indexed grammar) and \emp{pushdown$^2$
tree-to-string transducer}, respectively. Let us point out
(again) a correspondence between alternation and tree
transducers. In (5) of this section we noted that the
alternating two-way finite automaton is the $\CF(\Pd_{\p}$(One-way))
d-acceptor. Since One-way is the monadic case of Tree, $\Pd_{\p}$ is a
special case of $\Pd$, and $\CF$ is the yield of $\RT$, it follows that
every $\CF(\Pd_{\p}$(One-way)) d-acceptor may be viewed as an $\RT(\Pd$(Tree))
d-acceptor: $\alphaCF(\Pd_{\p}$(One-way)) $\subseteq \alphaRT(\Pd$(Tree)). 
Thus, the alternating two-way finite automaton languages are domains of
indexed tree transducers. In Lemma 6.11 of \cite{EngVog2} it is shown
that $\alphaRT(\Pd$(Tree)) = $\alphaRT$(Tree), which equals $\RT$. 
From this follows
the fact, shown in \cite{LadLipSto}, that $\alphaCF(\Pd_{\p}$(One-way)) =
$\REG$. Thus, regularity of the domains of indexed tree transducers implies,
``immediately'', regularity of the languages accepted by
alternating two-way finite automata.

Since $\alphaRT(\Pd$(Tree)) = $\alphaRT$(Tree) = $\RT$, it follows that
also $\alphaREG(\Pd$(Tree)) = $\RT$ (because $\alphaCF$(Tree) $\subseteq
\alphaREG(\Pd$(Tree)) $\subseteq \alphaCF(\Pd$(Tree)) ). 
Hence, the $\REG(\Pd$(Tree)) \mbox{d-acceptor} accepts the
regular tree languages; it is a tree walking automaton with a
synchronized pushdown. 
Thus, by New Observation~\ref{bojcol},
$\alphaREG$(Tree-walk) is properly included in $\alphaREG(\Pd$(Tree)). 
Note that the
class $\alphaCF(\Pd$(Tree)) of domains of indexed tree transducers may also
be viewed as the class of tree languages accepted by the
$\CF(\Pd$(Tree)) d-acceptor, i.e., the alternating $\Pd$(Tree) automaton;
from this point of view $\alphaCF(\Pd$(Tree)) = $\RT$ was shown in
\cite{Slu}.

\vspace{1.5em}

(7) $S$ = Tree-walk. From the previous discussion it
should now be clear that, apart from the $\son=i$ predicates,
Tree-walk is the same as $\Pd_{\p}$(Tree). In fact, using the natural
numbers as pushdown symbols, Tree-walk can be simulated by
$\Pd$(Tree) as follows: $\down_i$ corresponds to $\push(i,\sel_i)$, up to
$\pop$, $\stay$ to $\stay$, $\llabel{=}\,\sigma$ to $\test(\rroot{=}\,\sigma)$, $\son{=}\,i$
to $\ttop{=}\,i$, and $\rroot$ to $\bottom$. Thus, $\tauDRT$(Tree-walk)
$\subseteq \tauDRT(\Pd$(Tree)), i.e.,
the attribute grammar is a special case of the deterministic
indexed tree transducer. Similarly $\tauREG$(Tree-walk) $\subseteq$
\mbox{$\tauREG(\Pd$(Tree))}, i.e., the tree walking automaton of \cite{AhoUll} is
a special case of the \mbox{CT-PD} transducer. Note that Two-way =
$\Pd_{\p}$(One-way) is the monadic case of Tree-walk (in the monadic
case the $\son{=}\,i$ predicates are superfluous).

Theorem \ref{theo5.1}(2) shows that $\tauCF$(Tree-walk) =
$\tauREG(\Pd$(Tree-walk)). Although the deterministic case of this
equation also involves look-ahead, it should be clear that there
is a close relationship between the attribute grammar (with
strings as values) and the deterministic $\REG(\Pd$(Tree-walk))
transducer. This relationship was pointed out in \cite{Kam}, where
the $\REG(\Pd$(Tree-walk)) transducer is called the \emp{tree-walking
(synchronized) pushdown tree-to-string transducer}. Intuitively,
it is a tree-walking automaton that can back-track on the path
it has walked (cf. also the description of $\Pd(\Pd$(Tree)) in
\cite{EngVog2}). Since $\Pd$(Tree-walk) can be simulated by $\Pd^2$(Tree), it
is a special case of the pushdown$^2$ tree-to-string transducer.
The alternating version of the corresponding d-acceptor was
shown to accept the regular tree languages in \cite{Slu}. This fits
with the fact that domains of $\RT(\Pd^2$(Tree)) transducers are
regular (cf.\cite{EngVog2,EngVog3}).

Since trees are strings, $\tauDRT$(Tree-walk) $\subseteq
\tauCF$(Tree-walk) = $\tauREG(\Pd$(Tree-walk)). In particular when we
would write trees in postfix rather than prefix notation (i.e.,
$t_1\cdots t_k\sigma$ rather than $\sigma t_1\cdots t_k$) it would be quite
natural to output a tree symbol by symbol on the output tape of a
$\REG(\Pd$(Tree-walk)) transducer: these symbols may be viewed as
instructions to operate on a stack of (attribute) values, in the
usual way. Note however, that this is still the same, very
inefficient, way of evaluating attributes.

It may be worthwile to see what happens if one extends
attribute grammars (in their usual notation) to ``attribute
grammars with pushdown'', which are equivalent to DRT($\Pd$(Tree))
transducers (i.e., deterministic indexed tree transducers), in
the same way as attribute grammars are equivalent to
DRT(Tree-walk) transducers.

\vspace{1.5em}

(8) Since Tree-pushdown is closely related to Pushdown,
$\Pd$(Tree-pushdown) is closely related to $\Pd^2$. We note here that
Tree-pushdown has been generalized to an operator Tree-pushdown
of $S$, abbreviated $\TP(S)$, in \cite{EngVog2}. The $\RT(\TP$(Tree))
transducer with one state only is the \emp{macro tree transducer} of
\cite{CouFra,Eng6,EngVog1}, and it is closely related to the
$\RT(\Pd$(Tree)) transducer, i.e., the indexed tree transducer (see
\cite{EngVog2}).

\vspace{1.5em}

(9) The last storage type we consider is $\SPACE(f)$, as
discussed in Section \ref{sect4}. By Theorem~\ref{theo5.1}(2), 
$\alphaCF(\SPACE(f)) = \alphaREG(\Pd(\SPACE(f)))$. We now note that it
is quite easy to see that the storage type $\Pd(\SPACE(f))$
can be simulated by $\SPACE(f) \times \Pd$ (see Definition~\ref{prod_type}
for the product of storage types). In fact, a pushdown
$(\gamma_m,c_m)(\gamma_{m-1},c_{m-1})\cdots(\gamma_1,c_1)$
where $c_m,c_{m-1},\ldots,c_1$ are $\SPACE(f)$ configurations, can be
represented by the $\SPACE(f)$ configuration $c_m$ and the ordinary
pushdown $\gamma_{m-1}c_{m-1}\cdots\gamma_1 c_1$ (where it is understood that
the $c_i$ are put on the pushdown, coded as strings in the usual way);
$\gamma_m$ can be kept in the control of the involved transducer. Thus we obtain that
\mbox{$\alphaCF(\SPACE(f))$} 
$\subseteq \alphaREG(\SPACE(f) \times \Pd))$. 
In other words, the alternating $\SPACE(f)$ Turing machine can be simulated by
the nondeterministic \emp{$\SPACE(f)$ auxiliary pushdown automaton} \cite{Coo}. 
In fact, the above inclusion is an equality: the well-known relationship
between alternating Turing machines and auxiliary pushdown
automata (see \cite{Ruz}). The inclusion in the other direction requires 
a different construction.

More generally, 
$\alphaCF(\SPACE(f)\times S) = \alphaREG(\SPACE(f)\times \Pd(S))$
for every storage type $S$ that has an identity, 
see Theorem 2 of \cite{Eng9}. Note
that, in this equation, $\SPACE(f)$ cannot be replaced by One-way$_{\id}$. In
fact, \mbox{$\alphaREG$(One-way$_{\id}\times\Pd(S)) =$} $\lambdaREG(\Pd(S)) =
\lambdaCF(S)$, and we have
seen at the end of Section~\ref{sect4} (first observation) that, in
general, $\alphaCF$(One-way$_{\id}\times S$) and $\lambdaCF(S)$ are not equal.

\begin{newob}\rm
Another stupid mistake in the original version of this report, was the statement that
it is quite easy to see that $\Pd(\SPACE(f))$ and $\SPACE(f) \times \Pd$ are
equivalent storage types. This was a typical case of wishful thinking: 
it would have been so nice if Theorem~\ref{theo5.1} would have ``explained'' the 
relationship between alternating Turing machines and auxiliary pushdown
automata. The mistake was discovered by Roland Bol in 1987, when he attended 
my lectures on the subject. 
\end{newob}


\section{\sect{Deterministic r-acceptors}}\label{sect7}

In the next section we will show that, for a restricted
kind of storage type $S$, the languages generated by $\CF(S)$
grammars can be obtained from the languages accepted by
deterministic one-way $S$ automata, by a certain class of
operations on languages. Therefore we give in this section the
definition of determinism for $\REG(S)$ r-acceptors (cf.
Section \ref{sect2}). We call this \mbox{\emp{r-acceptor determinism}} 
to distinguish it from the (transducer) determinism discussed upto now. It has
to be admitted at this point that the deterministic r-acceptors
do not fit so nicely in our framework of $\CF(S)$ grammars, due to
the different ways in which they end their computations.

What is a deterministic $\REG(S)$ r-acceptor? First of all
we have to require that the encoding cannot be used to ``guess''
an initial configuration; we formulate this as a property of the
storage type (and since we consider one-way $S$ automata, we also
require an identity; see Section \ref{sect2}).

\begin{df}
 A storage type $S = (C,P,F,I,E,m)$ is \emp{r-acceptor
 deterministic} if $S$ has an identity, and $I$ is a singleton. \QEDB
\end{df}

Thus, for such a storage type, with, say, $I = \{u_0\}$, an
encoding $e$ just determines a fixed initial configuration
$m(e)(u_0)$. Most of the usual storage types for one-way $S$ automata
are r-acceptor deterministic. The particular ones discussed in
this paper are Counter and the iterated pushdown storage types
P$^n$, for $n \geq 0$. The results of the next section will be applied
to the one-way iterated pushdown automata.

We now define determinism of r-acceptors in the obvious
way. Recall from Section \ref{sect2} the intuitive interpretation of a
rule $A \rightarrow \ruleif b \rulethen wB(f)$ of a $\REG(S)$ r-acceptor $\G$.
Note that such a rule is applicable if $\G$ is in state $A$, $b$ holds for its
current storage configuration, and $w$ is a prefix of the rest of
the input. Determinism should ensure the applicability of at
most one rule in every situation. Recall also the notion of
normal form from Section \ref{sect2}: every rule is of the form 
$A \rightarrow \ruleif b \rulethen \xi$, where 
$\xi=aB(f)$ or $\xi=B(f)$ or $\xi=a$ or $\xi=\lambda$ (with $a\in\Delta$).

\begin{df}
 Let $S$ be an r-acceptor deterministic storage
type. A $\REG(S)$ transducer $\G = (N,e,\Delta,\Ain,R)$ is \emp{r-acceptor
deterministic} if the following two conditions hold.
\begin{itemize}
\setlength{\itemsep}{0pt}
 \item[(1)] $\G$ is in normal form.
 \item[(2)] If $A \rightarrow \ruleif b_1 \rulethen a_1 Q_1$ and 
      $A \rightarrow \ruleif b_2 \rulethen a_2 Q_2$ are two
      different rules in $R$ (with $a_i \in \Delta \cup \{\lambda\}$ and 
      $Q_i \in N(F) \cup \{\lambda\})$
      such that $a_1 = a_2$ or $a_1 = \lambda$ or $a_2 = \lambda$, 
      then $m(b_1\text{ \bool{and} }b_2)(c) = \false$ for every $c \in C$. \QEDB
\end{itemize}

\end{df}

Rather than ``r-acceptor deterministic $\REG(S)$
transducer'' we will also say ``deterministic $\REG(S)$ r-acceptor''.

An example of an r-acceptor deterministic $\REG(\Pd)$
transducer is $\G_2$, discussed in Sections~\ref{sect1.1} and~\ref{sect3}(2), in
which the third rule should be replaced by the rule $A \rightarrow bB(\pop)$.
This transducer is not deterministic. Thus, every deterministic
$\REG(S)$ transducer is r-acceptor deterministic, but not vice
versa. 

However, the deterministic $\REG(S)$ r-acceptors do not
yet correspond to the usual deterministic one-way $S$ automata.
The reason is that they decide deterministically when the input
string should end; hence they only accept prefix-free languages:
a language $L \subseteq \Delta^*$ is \emp{prefix-free} if $w \in L$ implies
$wv \notin L$ for all $v \in \Delta^+$. Thus, they accept, so to say, by empty
storage, and what we need is acceptance by final state.

\begin{df}
 Let $\G = (N,e,\Delta,\Ain,R)$ be a deterministic $\REG(S)$ r-acceptor. 
Then $L(\G)$ is called the \emp{language} of $\G$ \emp{accepted by empty store}. 
For a set $N_H \subseteq N$ of final states, the \emp{language} of $\G$
and $N_H$ \emp{accepted by final state}, denoted $L(\G,N_H)$, is defined by
$L(\G,N_H) = \{w \in \Delta^* \mid \Ain(m(e)(u_0)) \Rightarrow^*_G wA(c)$ for
some $A \in N_H$ and $c \in C\}$, where $I = \{u_0\}$. \QEDB
\end{df}

We denote by $\lambdaDeREG(S)$ the class $\{L(\G) \mid \G$ is a
deterministic $\REG(S)$ r-acceptor$\}$, and by $\lambdaDfREG(S)$ the class
$\{L(\G,N_H) \mid \G$ is a deterministic $\REG(S)$ r-acceptor and $N_H$ a
subset of its set of states$\}$. It should be clear to the reader
that
\begin{center}
``$\lambdaDfREG(S)$ = deterministic one-way $S$ automaton languages''.
\end{center}

\begin{lm}\label{DeREG-DfREG}
$\lambdaDeREG(S) \subseteq \lambdaDfREG(S) \subseteq \lambdaREG(S)$.
\end{lm}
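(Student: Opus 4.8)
The plan is to prove the two inclusions separately, both by essentially trivial constructions on the given deterministic $\REG(S)$ r-acceptor.

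\textbf{First inclusion, $\lambdaDeREG(S) \subseteq \lambdaDfREG(S)$.} Let $\G = (N,e,\Delta,\Ain,R)$ be a deterministic $\REG(S)$ r-acceptor. The goal is to express $L(\G)$ (acceptance by empty store, i.e., ending a computation via a rule $A \rightarrow \ruleif b \rulethen a$ or $A \rightarrow \ruleif b \rulethen \lambda$) as $L(\G',N_H)$ for some deterministic $\REG(S)$ r-acceptor $\G'$ and some set $N_H$ of its states. The idea is to add a single fresh ``final'' state $H$, put $N_H = \{H\}$, and for every terminating rule $A \rightarrow \ruleif b \rulethen a$ (with $a \in \Delta$) add instead $A \rightarrow \ruleif b \rulethen aH(\id)$, and for every $A \rightarrow \ruleif b \rulethen \lambda$ add $A \rightarrow \ruleif b \rulethen H(\id)$; here we use that $S$ has an identity (guaranteed since $S$ is r-acceptor deterministic). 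I would add no rules with left-hand side $H$, so once $\G'$ enters state $H$ it is stuck, and a computation reaches a configuration $wH(c)$ exactly when the corresponding computation of $\G$ terminated via the original rule having just read~$w$. One checks that $\G'$ is still in normal form and still satisfies condition~(2): the new rules have left-hand sides from the old $N$ and are obtained from old rules by only appending $H(\id)$, which does not change the relevant prefixes $a_i$ or the tests $b_i$; and $H$ has no outgoing rules so no conflict arises there. Hence $\G'$ is a deterministic $\REG(S)$ r-acceptor with $L(\G',\{H\}) = L(\G)$.

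\textbf{Second inclusion, $\lambdaDfREG(S) \subseteq \lambdaREG(S)$.} This one is almost immediate, since $\lambdaDfREG(S)$ concerns deterministic r-acceptors, which are in particular $\REG(S)$ transducers, but the subtlety is the change of acceptance mode (final state versus empty store, which is what $L(\cdot)$ uses). Given a deterministic $\REG(S)$ r-acceptor $\G = (N,e,\Delta,\Ain,R)$ and a set $N_H \subseteq N$, I would build an ordinary (possibly nondeterministic) $\REG(S)$ transducer $\G''$ with $L(\G'') = L(\G,N_H)$ by simply adding, for each $A \in N_H$, the unconditional terminating rule $A \rightarrow \lambda$ (i.e., $A \rightarrow \ruleif \text{\bool{true}} \rulethen \lambda$), keeping all the old rules. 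Then $\Ain(m(e)(u_0)) \Rightarrow^*_{\G''} w$ holds iff $\Ain(m(e)(u_0)) \Rightarrow^*_{\G} wA(c)$ for some $A \in N_H$ and $c \in C$ (using the new rule as the last step) or $\Ain(m(e)(u_0)) \Rightarrow^*_\G w$ already in $\G$ via an original terminating rule — but the latter case does not matter for $\G''$ having the right language provided we are careful: we only want $L(\G,N_H)$, so actually I should \emph{drop} the original terminating rules $A \rightarrow \ruleif b \rulethen a$ and $A \rightarrow \ruleif b \rulethen \lambda$ and keep only the non-terminating rules $A \rightarrow \ruleif b \rulethen aB(f)$ and $A \rightarrow \ruleif b \rulethen B(f)$, then add $A \rightarrow \lambda$ for $A \in N_H$. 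With that adjustment $L(\G'') = L(\G,N_H)$ exactly. We do not need $\G''$ to be deterministic, so this is fine.

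\textbf{Main obstacle.} The constructions themselves are routine; the one thing that genuinely needs care is checking that the modified machine in the first inclusion still satisfies the \emph{r-acceptor determinism} conditions of the relevant definition, since that is the only non-cosmetic requirement (normal form is easy to preserve, and condition~(2) must be re-verified against the new rules). I expect this verification — and being precise about which acceptance mode ($L(\cdot)$ = empty store) is in play at each step — to be the only place where one could slip. Everything else is bookkeeping, and I would leave the straightforward equalities $L(\G',\{H\}) = L(\G)$ and $L(\G'') = L(\G,N_H)$ to the reader.
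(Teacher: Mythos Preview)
Your proposal is correct and follows essentially the same route as the paper: for the first inclusion, redirect every terminating rule $A \rightarrow \ruleif b \rulethen w$ to $A \rightarrow \ruleif b \rulethen wQ(\id)$ for a fresh final state $Q$ (your $H$); for the second, add unconditional rules $A \rightarrow \lambda$ for all $A \in N_H$. Your adjustment in the second inclusion---dropping \emph{all} terminating rules rather than only those with left-hand side in $N_H$, as the paper's sketch does---is in fact needed: otherwise a kept terminating rule at some $A \notin N_H$ could put extra strings into $L(\G'')$ that are not in $L(\G,N_H)$.
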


\begin{proof}
 First inclusion: replace every rule $A \rightarrow \ruleif b \rulethen w$
(with $w \in \Delta^*$) by the rule $A \rightarrow \ruleif b \rulethen wQ(\id)$,
where $Q$ is a (new) final state, and $\id$ is the identity of $S$. Second
inclusion: remove all rules $A \rightarrow \ruleif b \rulethen w$ with $A \in N_H$
(which are superfluous when accepting by final state) and 
add rules $A \rightarrow \lambda$ for all $A \in N_H$.
\end{proof}

In fact, $\lambdaDeREG(S)$ is the class of prefix-free
languages in $\lambdaDfREG(S)$, a situation that is familiar from the
deterministic pushdown automata; proof: let $L(\G,N_H)$ be
prefix-free; for every $A \in N_H$, replace all rules with left-hand
side $A$ by the one rule $A \rightarrow \lambda$.

In the next section we will also show that the tree
languages accepted by deterministic top-down $S$ tree automata can
be obtained from the languages accepted by deterministic one-way
$S$ automata, by certain string-to-tree operations. Therefore we
also define determinism for $\RT(S)$ r-acceptors, in the obvious
way. Recall the notion of normal form from Section \ref{sect2}: 
every rule is of the form $A \rightarrow \ruleif b \rulethen \xi$, 
where $\xi=B(f)$ or $\xi=\sigma B_1(f_1)\cdots B_k(f_k)$ (with $\sigma\in\Delta_k$).

\begin{df}
 Let $S$ be an r-acceptor deterministic storage
type. An $\RT(S)$ transducer $\G = (N,e,\Delta,\Ain,R)$ is \emp{r-acceptor
deterministic} if the following two conditions hold.
\begin{itemize}
\setlength{\itemsep}{0pt}
 \item[(1)] $\G$ is in normal form.
 \item[(2)] If $A \rightarrow \ruleif b_1 \rulethen \sigma_1 Q_1$ and $A
\rightarrow \ruleif b_2 \rulethen \sigma_2 Q_2$ are two
different rules in $R$ (with $\sigma_i \in \Delta \cup \{\lambda\}$ and $Q_i \in
N(F)^*$) such that $\sigma_1 = \sigma_2$ or $\sigma_1 = \lambda$ or $\sigma_2 = \lambda$, 
then $m(b_1\text{ \bool{and} }b_2)(c) = \false$ for every $c \in C$.\QEDB
\end{itemize}
\end{df}

We denote by $\lambdaDeRT(S)$ the class $\{L(\G) \mid \G$ is a
deterministic $\RT(S)$ r-acceptor$\}$. For $S = S_0$, this is the class
of tree languages accepted by deterministic top-down finite tree
automata (rules $A \rightarrow B(\id)$ can easily be removed). For $S = \Pd$, it
is the class of tree languages accepted by deterministic
pushdown tree automata \cite{Gue2}.

Although tree languages are prefix-free ``by nature''
(viewing trees as strings), there is still a difference between
$\lambdaDeREG(S)$ and $\lambdaDeRT(S)$, due to the fact that there is
an empty string but no empty tree. In fact, a deterministic $\REG(S)$
r-acceptor can read the last symbol of the input string, and
then check some property of the storage configuration (reading
the empty string). On the other hand, a deterministic $\RT(S)$
r-acceptor has to decide whether to accept or reject as soon as
it sees a symbol of rank 0. We now define $\RT(S)$ r-acceptors that
can ``read beyond the leaves'' by way of a trick, as follows (see \cite{Gue1}).

\begin{df}
For a ranked alphabet $\Delta$, and $\# \notin \Delta$, $\Delta\#$ is the
ranked alphabet with $(\Delta\#)_0 = \{\#\}$, $(\Delta\#)_1 = \Delta_0 \cup \Delta_1$, and
$(\Delta\#)_k = \Delta_k$ for $k \geq 2$. 
The mapping mark: $T_\Delta \rightarrow T_{\Delta\#}$ is defined as
follows: for $t \in T_\Delta$, mark($t$) is the result of replacing, in $t$,
every $\sigma$ by $\sigma\#$, for all $\sigma \in \Delta_0$. For a tree language
$L$, mark($L$) = $\{$mark($t$) $\mid t \in L\}$. \QEDB
\end{df}

Thus every leaf $\sigma$ of a tree is made to be of rank $1$,
and a new leaf $\#$ is attached to it, i.e., it is replaced by the
tree $\sigma(\#)$.

We denote by $\lambdaDfRT(S)$ the class
$\{L \mid$ mark($L$) $\in \lambdaDeRT(S)\}$.

\begin{lm}\label{DeRT-DfRT}
 $\lambdaDeRT(S) \subseteq \lambdaDfRT(S) \subseteq \lambdaRT(S)$.
\end{lm}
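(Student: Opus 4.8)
The plan is to reprove, for top-down tree $S$-acceptors, the two constructions that establish Lemma~\ref{DeREG-DfREG} for the regular case, adapting them to the ``mark'' device that separates $\lambdaDeRT(S)$ from $\lambdaDfRT(S)$ — namely the fact that a tree has no empty subtree, so a deterministic $\RT(S)$ r-acceptor must commit to acceptance already \emph{at} a rank-$0$ leaf rather than after reading it. Throughout I use that a deterministic $\RT(S)$ r-acceptor is by definition in normal form, and that $S$ being r-acceptor deterministic gives an identity instruction $\id\in F$.

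For the inclusion $\lambdaDeRT(S)\subseteq\lambdaDfRT(S)$, let $\G=(N,e,\Delta,\Ain,R)$ be a deterministic $\RT(S)$ r-acceptor. I would build a deterministic $\RT(S)$ r-acceptor $\G'$ over $\Delta\#$ with $L(\G')=\mathrm{mark}(L(\G))$, which is exactly what is needed to conclude $L(\G)\in\lambdaDfRT(S)$. Concretely: keep every stay rule $A\rightarrow\ruleif b\rulethen B(f)$ and every branching rule $A\rightarrow\ruleif b\rulethen\sigma B_1(f_1)\cdots B_k(f_k)$ with $k\geq 1$ unchanged (such a $\sigma\in\Delta_k$ keeps rank $k$ in $\Delta\#$), and replace each halting rule $A\rightarrow\ruleif b\rulethen\sigma$ with $\sigma\in\Delta_0$ by the pair $A\rightarrow\ruleif b\rulethen\sigma Q(\id)$ and $Q\rightarrow\#$, with $Q$ a single fresh state. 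One checks that $\G'$ is still in normal form and still r-acceptor deterministic (the replacement rule carries the same test and the same leading symbol $\sigma$ as the rule it replaces, and $Q$ has only one rule), and that derivations of $\G$ and of $\G'$ correspond step for step, with $A(c)\Rightarrow_\G\sigma$ matched by $A(c)\Rightarrow_{\G'}\sigma Q(c)\Rightarrow_{\G'}\sigma\#$; hence $L(\G')=\mathrm{mark}(L(\G))$.

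For the inclusion $\lambdaDfRT(S)\subseteq\lambdaRT(S)$, let $L\in\lambdaDfRT(S)$, so $\mathrm{mark}(L)=L(\G)$ for some deterministic $\RT(S)$ r-acceptor $\G=(N,e,\Delta\#,\Ain,R)$ in normal form; I would produce an ordinary (not necessarily deterministic) $\RT(S)$ transducer $\G''$ over $\Delta$ with $L(\G'')=\mathrm{mark}^{-1}(L(\G))=L$ (using that $\mathrm{mark}$ is injective on its image). The idea is to run $\G$ but to ``fold'' each $\#$-leaf into its parent leaf $\sigma\in\Delta_0$: take state set $N\cup\{\bar A_\sigma\mid A\in N,\ \sigma\in\Delta_0\}$; keep every stay rule of $\G$ and every branching rule of $\G$ whose leading symbol $\sigma\notin\Delta_0$ (such a $\sigma$ has the same positive rank in $\Delta$ as in $\Delta\#$); for each $\G$-rule $A\rightarrow\ruleif b\rulethen\sigma B(f)$ with $\sigma\in\Delta_0$ add the stay rule $A\rightarrow\ruleif b\rulethen\bar B_\sigma(f)$; and let the barred states replay $\G$'s behaviour at the $\#$-node, via rules $\bar A_\sigma\rightarrow\ruleif b\rulethen\bar B_\sigma(f)$ for each $\G$-stay-rule $A\rightarrow\ruleif b\rulethen B(f)$ and a halting rule $\bar A_\sigma\rightarrow\ruleif b\rulethen\sigma$ for each $\G$-rule $A\rightarrow\ruleif b\rulethen\#$. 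Since a barred state is entered only by a rule of the first new kind and left only by generating the leaf $\sigma$, every maximal barred segment of a successful $\G''$-derivation corresponds exactly to a sub-derivation $A(c)\Rightarrow_\G\sigma B(m(f)(c))\Rightarrow^*_\G\sigma\#$ of $\G$, and conversely; translating segments both ways gives $L(\G'')=\mathrm{mark}^{-1}(L(\G))=L$.

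The routine parts are the normal-form bookkeeping and the two derivation correspondences. The step I expect to be delicate is this second construction: I must keep the symbol $\sigma$ attached to the barred states, so that the leaf finally generated is forced to be the one prescribed by the rule $A\rightarrow\ruleif b\rulethen\sigma B(f)$ and not some other rank-$0$ symbol reachable through $\G$'s computation at the $\#$-node; and I must check that retaining $\G$'s unbarred branching rules in $\G''$ introduces no spurious derivation — which holds because each such rule produces a node of positive rank, hence can never substitute for the folded leaf. Once these points are pinned down, both inclusions follow, exactly paralleling Lemma~\ref{DeREG-DfREG}.
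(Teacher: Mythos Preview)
Your proposal is correct and follows essentially the same approach as the paper: for the first inclusion you introduce a fresh state $Q$ with the single rule $Q\rightarrow\#$ and turn each halting rule $A\rightarrow\ruleif b\rulethen\sigma$ into $A\rightarrow\ruleif b\rulethen\sigma Q(\id)$, and for the second inclusion you tag states with the pending leaf symbol $\sigma\in\Delta_0$ (your $\bar B_\sigma$ is the paper's $B_\sigma$) and fold the two-step $\sigma$-then-$\#$ acceptance into a stay rule followed by a halting rule that emits~$\sigma$. The paper's proof is just a terser statement of the same two constructions.
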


\begin{proof}
First inclusion: replace every rule \mbox{$A \rightarrow \ruleif b \rulethen\sigma$}
with $\sigma \in \Delta_0$ by the rule \mbox{$A \rightarrow \ruleif b \rulethen
\sigma Q(\id)$}, where $Q$ is a new
nonterminal (and $\id$ is the identity of $S$), and add the rule
\mbox{$Q \rightarrow \#$}. Second inclusion: introduce new nonterminals
$B_\sigma$ for every nonterminal $B$ and every $\sigma \in \Delta_0$; replace
every rule
\mbox{$A \rightarrow \ruleif b \rulethen \sigma B(f)$}, with $\sigma\in\Delta_0$
(and so $\sigma\in (\Delta\#)_1)$, by the rule
\mbox{$A \rightarrow \ruleif b \rulethen B_\sigma(f)$}; 
if \mbox{$A \rightarrow \ruleif b \rulethen B(f)$} is a rule, then add
the rules \mbox{$A_\sigma \rightarrow \ruleif b \rulethen B_\sigma(f)$}, for
every $\sigma\in \Delta_0$; replace every
rule \mbox{$A \rightarrow \ruleif b \rulethen \#$} by the rules 
\mbox{$A_\sigma \rightarrow \ruleif b \rulethen \sigma$}, for every $\sigma \in \Delta_0$.
\end{proof}

In case $S$ is ``closed under look-ahead'' (e.g., $S = \Pd$ or
$S = S_0$), these two ways of acceptance are the same, i.e.,
$\lambdaDeRT(S) = \lambdaDfRT(S)$. To see this, we prove the following fact.

\begin{lm}
$\lambdaDfRT(S) \subseteq \lambdaDeRT(S_\LA)$.
\end{lm}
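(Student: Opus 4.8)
The plan is to show the inclusion $\lambdaDfRT(S) \subseteq \lambdaDeRT(S_\LA)$ by taking a deterministic $\RT(S)$ r-acceptor $\G$ that accepts $\mathrm{mark}(L)$ and building a deterministic $\RT(S_\LA)$ r-acceptor $\G'$ that accepts $L$ directly (without marking). The obstacle $\G$ faces is that, on a leaf $\sigma \in \Delta_0$, a deterministic top-down tree automaton must commit to accept or reject immediately, whereas after marking it gets one extra step: it reads $\sigma$ (now of rank 1), moves to some state with some configuration, and then inspects that configuration to decide whether to accept $\#$. So $\G$, when processing a leaf $\sigma$, effectively does: apply a rule $A \rightarrow \ruleif b \rulethen \sigma B(f)$, then check at state $B$ with configuration $m(f)(c)$ whether $\#$ is accepted. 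The idea is to let $\G'$ replace this two-step test by a single look-ahead test on the configuration $c$ present at the leaf.

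Concretely, first I would put $\G$ into normal form (Section~\ref{sect2}), so its rules on the marked alphabet $\Delta\#$ are of the forms $A \rightarrow \ruleif b \rulethen B(f)$, $A \rightarrow \ruleif b \rulethen \sigma' B_1(f_1)\cdots B_k(f_k)$ with $\sigma' \in (\Delta\#)_k$. For a state $B$ and an instruction symbol $f \in F$, I would form the $\RT(S)$ transducer $\G(B,f)$ obtained from $\G$ by adding a fresh initial nonterminal $\underline{A}$ and the single rule $\underline{A} \rightarrow B(f)$ (with empty test), exactly as in the construction after Definition~\ref{sla}: then $\Acc(\G(B,f)) = \{c \mid B(m(f)(c)) \Rightarrow^*_\G w$ for some $w\}$, provided $m(f)$ is defined on $c$. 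The predicate symbol $\acc(\G(B,f))$ belongs to $S_\LA$. Now I construct $\G' = (N, e, \Delta, \Ain, R')$ over the alphabet $\Delta$: every rule of $\G$ of the form $A \rightarrow \ruleif b \rulethen B(f)$ or $A \rightarrow \ruleif b \rulethen \sigma B_1(f_1)\cdots B_k(f_k)$ with $\sigma \in \Delta_k$, $k \geq 2$, is kept verbatim; every rule $A \rightarrow \ruleif b \rulethen \sigma B_1(f_1)\cdots B_k(f_k)$ with $\sigma \in \Delta_1$ becomes $A \rightarrow \ruleif b \rulethen \sigma B_1(f_1)$ (since in $\Delta\#$ the old $\sigma$ had rank $1$, this $B_1(f_1)$ is exactly the subtree continuation — I need to be careful that rank-$1$ symbols of $\Delta$ appear in $(\Delta\#)_1$ alongside rank-$0$ symbols, and track which case a normal-form rule came from); and for each leaf $\sigma \in \Delta_0$, a rule $A \rightarrow \ruleif b \rulethen \sigma B(f)$ of $\G$ (using that $\sigma \in (\Delta\#)_1$) is replaced by $A \rightarrow \ruleif b \text{ \bool{and} } \acc(\G(B,f)) \rulethen \sigma$, i.e.\ we absorb into a look-ahead test the condition that the deferred computation from $(B, m(f)(c))$ accepts $\#$.

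The correctness argument runs by induction on the structure of the input tree: $A(c) \Rightarrow^*_{\G'} t$ for $t \in T_\Delta$ iff $A(c) \Rightarrow^*_\G \mathrm{mark}(t)$. The internal rules and the rank-$\geq 1$ rules translate step-for-step; at a leaf $\sigma$, $A(c) \Rightarrow_{\G'} \sigma$ holds iff $\G$ has a rule $A \rightarrow \ruleif b \rulethen \sigma B(f)$ with $m(b)(c) = \true$ and $c \in \Acc(\G(B,f))$, which unfolds precisely to $A(c) \Rightarrow_\G \sigma B(m(f)(c)) \Rightarrow^*_\G \sigma\# = \mathrm{mark}(\sigma)$. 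For determinism of $\G'$: it is in normal form by construction; for two distinct leaf rules $A \rightarrow \ruleif b_i \rulethen \sigma$ ($i=1,2$) with the same $\sigma$, the underlying $\G$-rules $A \rightarrow \ruleif b_i \rulethen \sigma B_i(f_i)$ already had mutually exclusive tests $b_1, b_2$ on all of $C$ (by determinism of $\G$ in the case $\sigma_1 = \sigma_2$), so $b_1 \text{ \bool{and} } \acc(\ldots)$ and $b_2 \text{ \bool{and} } \acc(\ldots)$ are a fortiori mutually exclusive; the other rule-form conflicts are inherited directly from $\G$. The main obstacle I anticipate is bookkeeping: when $\G$ is put into normal form over $\Delta\#$, a single normal-form rule $A \rightarrow \ruleif b \rulethen \sigma' B(f)$ could a priori come from either a rank-$0$ symbol $\sigma'=\sigma \in \Delta_0$ (the case to be rewritten with a look-ahead test) or a rank-$1$ symbol $\sigma'=\sigma \in \Delta_1$ (to be kept), and I must make sure the normal-form construction records this provenance — most cleanly by normalizing $\G$ before any marking considerations, or by keeping $(\Delta\#)_1$ partitioned into $\Delta_0$-derived and $\Delta_1$-derived symbols. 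Once that is pinned down, every step is routine.
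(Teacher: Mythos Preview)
Your proposal is correct and takes essentially the same approach as the paper: delete the $\#$-rules, and replace each rule $A \rightarrow \ruleif b \rulethen \sigma B(f)$ with $\sigma \in \Delta_0$ by $A \rightarrow \ruleif b \text{ \bool{and} } \acc(\G(B,f)) \rulethen \sigma$, using exactly the auxiliary transducer $\G(B,f)$ you describe. Your bookkeeping worry about distinguishing $\Delta_0$-derived from $\Delta_1$-derived rank-$1$ symbols in $\Delta\#$ is a non-issue, since in a ranked alphabet each symbol has a unique rank, so the symbol $\sigma$ itself tells you whether it belongs to $\Delta_0$ or $\Delta_1$.
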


\begin{proof}
 Let $L \in \lambdaDfRT(S)$, and let $\G = (N,e,\Delta\#,\Ain,R)$ be a
deterministic $\RT(S)$ \mbox{r-acceptor} such that $L(\G)$ = mark($L$). Delete
every rule in which $\#$ occurs. Replace every rule
\mbox{$A \rightarrow \ruleif b \rulethen \sigma B(f)$}, with $\sigma \in
\Delta_0$, by the rule \mbox{$A \rightarrow \ruleif b\text{ \bool{and} }\acc(\G(B,f))
\rulethen \sigma$}. In this rule, $\G(B,f)$ is the $\RT(S)$ transducer $(N \cup
\{\underline{A}\},e,\Delta\#,\underline{A},\underline{R})$, where $\underline{A}$ is a new
initial nonterminal, and $\underline{R}$ is $R$ plus the rule $\underline{A} \rightarrow B(f)$.
\end{proof}

This concludes our discussion of determinism for
r-acceptors.


\newpage
\section{\sect{A new operation on languages}}\label{sect8}

One of the questions in formal language theory
(particularly in AFL/AFA theory) is whether one class of
languages can be obtained from another class by the application
of certain operations on languages. As an example, the
context-free languages can be obtained from the regular
languages by the application of nested iterated substitutions,
and the recursively enumerable languages can be obtained from
the (deterministic) context-free languages by the application of
intersections and homomorphisms (in both cases the resulting
class of languages is also closed under these operations). One
of the concrete questions in this area is whether it is possible
to obtain the indexed languages from the context-free languages
by a class of (natural) operations (see \cite{Gre}). The adjective
``natural'' is important here: one can, e.g., always view an
indexed grammar as an operation on languages. We give a partial
answer to this question: we define a class $\delta$ of unary operations
on languages, such that $\delta(\REG) = \CF$, and $\delta(\DCF)$ = Indexed,
where $\DCF$ denotes the class of deterministic context-free languages,
i.e., in our notation, $\lambdaDfREG(\Pd)$. (But $\CF$ and Indexed are not
closed under the $\delta$ operations; in fact, $\delta(\CF)$ = RE.) These are
two particular cases of our general result that $\lambdaCF(S) =
\delta(\lambdaDfREG(S))$, for r-acceptor deterministic $S$, shown in
Theorem \ref{theo8.1}(3). Thus the languages generated by $\CF(S)$ grammars
can be obtained by the $\delta$ operations from the languages accepted
by deterministic one-way $S$ automata.

The $\delta$ operations are defined by viewing strings as
paths through a tree, and taking the yield of that tree; thus
they incorporate the essence of the general philosophy in tree
language theory (for obtaining higher level devices), as
discussed in \cite{Eng1,Eng6, Dam}. We start by defining paths through
trees.

For a ranked alphabet $\Delta$, the \emp{path alphabet} $\pi(\Delta)$ is the
(nonranked) alphabet $\Delta_0 \cup \{(\sigma,i) \mid \sigma \in \Delta_k$ and
$1 \leq i \leq k$, for some $k \geq 1\}$. We will also write $\sigma_i$ rather
than $(\sigma,i)$. Intuitively, $\sigma_i$ denotes the fact that the path goes
to the $i$-th son of a node labeled $\sigma$. For every $t \in T_\Delta$, the
\emp{set of paths through $t$}, denoted $\pi(t)$, is the finite subset of
$\pi(\Delta)^*$ defined
as follows: (1) for $\sigma\in\Delta_0$, $\pi(\sigma) = \{\sigma\}$, and 
(2) for $\sigma\in\Delta_k$ $(k \geq 1)$ and $t_1,\ldots,t_k \in T_\Delta$,
$\pi(\sigma t_1\cdots t_k) = \cup\{(\sigma,i)\cdot \pi(t_i) \mid$
$1 \leq i \leq k\}$. Thus, $\pi(t)$ contains all paths from the root of $t$ to
its leaves (coded as strings). For a tree language $L$, $\pi(L) =
\{\pi(t) \mid t \in L\}$ is its path language.

As an example, let $\Delta_3 = \{a\}$, $\Delta_2 = \{b\}$, and
$\Delta_0 = \{c,d,\eps\}$, and let $t = abcdb\eps c\eps =
a(b(c,d),b(\eps,c),\eps)$. Then
$\pi(t) = \{a_1 b_1 c, \,a_1 b_2 d, \,a_2 b_1 \eps, \,a_2 b_2 c, \,a_3\eps\}$.
Note that $\yield(t) = cdc$ (because $\yield(\eps) = \lambda$, by
convention).

\begin{df}
 Let $\Delta$ be a ranked alphabet. For a (string)
language $L$, $\treeD(L)$ is the tree language $\treeD(L) = \{t \in
T_\Delta \mid \pi(t) \subseteq L\}$, and $\delta_\Delta(L)$ is the (string)
language $\delta_\Delta(L) = \yield(\treeD(L)) = \{w \in \Delta^*_0 \mid w =
\yield(t)$ for some $t \in T_\Delta$ such that $\pi(t) \subseteq L\}$. We call
$\delta_\Delta$ a \emp{delta operation}.\QEDB
\end{df}

For a class $K$ of (string) languages tree($K$) denotes
$\{\treeD(L) \mid L \in$ $K$, $\Delta$ is a ranked alphabet$\}$, and
$\delta(K)$ denotes $\yield$(tree($K$)), i.e., $\{\delta_\Delta(L) \mid L \in$
$K$, $\Delta$ is a ranked alphabet$\}$.

Thus, $\treeD(L)$ is obtained by taking paths from the
building kit $L$ and glueing them together to trees over $\Delta$. One
argument for the ``naturalness'' of the delta operations is that
they are continuous in the following sense: for every language $L$, 
$\delta_\Delta(L) = \cup\{\delta_\Delta(F) \mid F$ is a finite subset of $L\}$. 
All the full AFL operations are also continuous in this sense (think
for instance of Kleene $*$).

We use the symbol $\delta$ because its capital version $\Delta$ looks
like a (mathematical) tree, see Figs.~2, 6, and 8.
Let us consider some examples of $\delta_\Delta(L)$.

\begin{ex}
(1) Let $\Delta_3 = \{c\}$ and $\Delta_0 = \{a,b,\eps\}$. Consider the
regular language $L = c_2^*(c_1 a \cup c_2\eps \cup c_3 b)$ over
$\pi(\Delta)$. A simple argument shows that the trees that can be put together
from the paths in $L$, are those that have a ``spine'' of $c$'s, ending in an
$\eps$, with one $a$ and one $b$ sticking out of each $c$, see Fig.~7(1).
Thus $\treeD(L)$ is the tree language generated by the regular tree
grammar with rules $A \rightarrow caAb$, and $A \rightarrow ca\eps b$. Hence
$\delta_\Delta(L)= \yield(\treeD(L)) = \{a^n b^n \mid n \geq 1\}$. Thus
$\delta_\Delta$ transforms a regular
language into a (nonregular) context-free language.

(2) Let $\Delta_2 = \{b\}$, $\Delta_1 = \{c\}$, and $\Delta_0 = \{a\}$. Consider the
(deterministic) context-free language 
$L = \{c^n_1 wa \mid w \in \{b_1,b_2\}^n, n \geq 0\}$. 
The trees in $\treeD(L)$ consist of a monadic
``handle'' of, say, $n$ $c$'s, connected with a full binary tree of $b$'s of
depth $n$ (with $a$'s at the leaves), see Fig.~7(2). This tree language
can be generated by the $\RT(\Pd)$ grammar with rules 
$A \rightarrow cA(\push(c))$, $A \rightarrow B(\stay)$, 
$B \rightarrow \ruleif \ttop=c \rulethen bB(\pop)B(\pop) \ruleelse a$. 
Hence $\delta_\Delta(L)$ is the indexed language $\{a^{2^n} \mid n \geq 0\}$. 

(3) This example is similar to the one in (2). Let
$\Delta_2 = \{f,d,b\}$, $\Delta_1 = \{c\}$, and $\Delta_0 = \{p,q,r,s,a\}$.
Consider the context-free language $L = \{f_2 s\} \cup L_1\cup L_2$ with 
\begin{quote}
$L_1=\{f_1c_1^n d_1 wp \mid w \in \ \{b_1,b_2\}^n, \,n \geq 0\}$ $\quad$ and 
\end{quote} 
\begin{quote}
$L_2=\{f_1 c_1^n d_2 wx \mid w \in \{b_1,b_2\}^n, \,x \in \{q,r\}, \,n \geq 0\}$.
\end{quote} 
For a tree in $\treeD(L)$, see Fig.~7(3). It should now be clear that
$\delta_\Delta(L)$ is
the indexed language $\{p^{2^n}ws \mid w \in \{q,r\}^{2^n}, n \geq 0\}$.
Now suppose that $p,q,r,s$ really denote the symbols $c_1,b_1,b_2,a$,
respectively. Then, $\delta_\Delta(\delta_\Delta(L)) = \{a^{d(n)} \mid n \geq 0\}$, 
where $d(n) = 2^{2^n}$. This language can be
generated by a $\CF(\Pd^2)$ grammar. \QEDB
\end{ex}

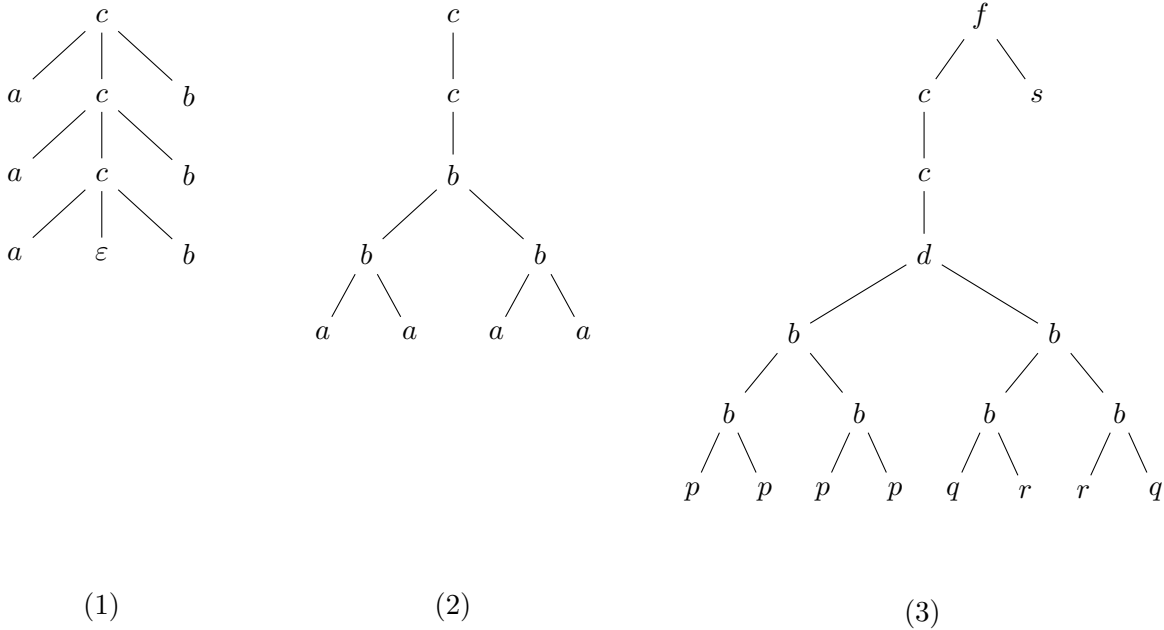
\begin{figure}[h]
\centering
\begin{tikzpicture}
\tikzset{level distance=30pt}
\node(N1) [sibling distance = 3em] {$c$}
	child [sibling distance = 3em] {node {$a$}}
	child [sibling distance = 3em] {node {$c$}
		child [sibling distance = 3em] {node {$a$}}
		child [sibling distance = 3em] {node {$c$}
			child {node {$a$}}
			child {node {$\varepsilon$}}
			child {node {$b$}}
		}
		child {node {$b$}}
	}
	child [sibling distance = 3em] {node {$b$}};

\node(N2)[right=11em of N1] {$c$}
	child {node {$c$}
		child {node {$b$}
			child [sibling distance = 6em] {node {$b$}
				child [sibling distance = 3em] {node {$a$}}
				child [sibling distance = 3em] {node {$a$}}
			}
			child [sibling distance = 6em]{node {$b$}
				child [sibling distance = 3em] {node {$a$}}
				child [sibling distance = 3em] {node {$a$}}
			}
		}
	};

\node(N3)[right=17em of N2] {$f$}
	child {node {$c$}
		child {node {$c$}
			child {node {$d$}
				child [sibling distance = 9em]{node {$b$}
					child [sibling distance = 4.5em]{node {$b$}
						child [sibling distance = 2.5em]{node {$p$}}
						child [sibling distance = 2.5em]{node {$p$}}
					}
					child [sibling distance = 4.5em]{node {$b$}
						child [sibling distance = 2.5em]{node {$p$}}
						child [sibling distance = 2.5em]{node {$p$}}
					}
				}
				child [sibling distance = 9em]{node {$b$}
					child [sibling distance = 4.5em]{node {$b$}
						child [sibling distance = 2.5em]{node {$q$}}
						child [sibling distance = 2.5em]{node {$r$}}
					}
					child [sibling distance = 4.5em]{node {$b$}
						child [sibling distance = 2.5em]{node {$r$}}
						child [sibling distance = 2.5em]{node {$q$}}
					}
				}
			}
		}
	}
	child {node {$s$}};

\node[align=center, below=19em of N1](M1)  {(1)};
\node[align=center, below=19em of N2](M1)  {(2)};
\node[align=center, below=19em of N3,xshift=-2em](M1)  {(3)};

\end{tikzpicture}
 \caption{Trees that illustrate tree$_\Delta$ and $\delta_\Delta$.}
\end{figure}

We now show the announced result. The basic idea
involved is that a tree language that can be recognized by a
deterministic top-down $S$ tree automaton, is completely
determined by its path language.

\begin{theo}\label{theo8.1}
 Let $S$ be an r-acceptor deterministic storage type.
\begin{itemize}
\setlength{\itemsep}{0pt}
 \item[\rm(1)] $\lambdaCF(S) = \yield(\lambdaDfRT(S))= \yield(\lambdaRT(S))$
 \item[\rm(2)] $\lambdaDfRT(S) = \tree(\lambdaDfREG(S))$
 \item[\rm(3)] $\lambdaCF(S) = \delta(\lambdaDfREG(S))$.
\end{itemize}
\end{theo}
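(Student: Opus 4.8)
The plan is to prove the three parts essentially in the order (1), (2), (3), since (3) follows by combining (1) and (2), and (2) is the heart of the matter. For part (1), the equality $\lambdaCF(S) = \yield(\lambdaRT(S))$ is already announced in Section~\ref{sect2} and follows from Theorem~\ref{theo5.1}(1) via the yield construction (an $\RT(S)$ transducer with the $\Delta$-algebra $\Delta_0^*$ is a $\CF(S)$ transducer, and conversely one lifts a $\CF(S)$ transducer to an $\RT(S)$ transducer over a ranked alphabet with an explicit concatenation symbol of each rank, together with $\eps$); so the real content of (1) is $\yield(\lambdaDfRT(S)) = \yield(\lambdaRT(S))$. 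The inclusion $\subseteq$ is immediate from Lemma~\ref{DeRT-DfRT}. For $\supseteq$, given an $\RT(S)$ transducer $\G$ one wants a \emph{deterministic} $\RT(S)$ r-acceptor (reading beyond the leaves) whose mark-language has the same yield. Here I would first put $\G$ in normal form, then determinize it top-down: the $\son{=}\,i$-style bookkeeping is not available, but one can instead pass to an r-acceptor whose states are \emph{sets} of original states, in the style of the classical subset construction for top-down tree automata — the point being that yield only depends on which trees are accepted, not on how, so collapsing nondeterminism across the accepting subtree-set is harmless for the yield. The ability to read one step beyond a leaf (the $\#$ trick, via $\lambdaDfRT$) is exactly what lets this subset-determinization decide acceptance at the leaves with look at the storage configuration.

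Part (2), $\lambdaDfRT(S) = \tree(\lambdaDfREG(S))$, is where the key idea lives: a tree language recognized by a deterministic top-down $S$ tree automaton is completely determined by its set of root-to-leaf paths. For $\subseteq$: let $L \in \lambdaDfRT(S)$, so $\mathrm{mark}(L) = L(\G)$ for a deterministic $\RT(S)$ r-acceptor $\G = (N,e,\Delta\#,\Ain,R)$ in normal form. Build a deterministic $\REG(S)$ r-acceptor $\G^{\pi}$ over the path alphabet $\pi(\Delta)$ that simulates $\G$ along a single path: a rule $A \to \ruleif b \rulethen \sigma B_1(f_1)\cdots B_k(f_k)$ of $\G$ becomes, for each $i$, a rule $A \to \ruleif b \rulethen \sigma_i B_i(f_i)$ of $\G^{\pi}$ (reading the path-letter $(\sigma,i)$ and descending into the $i$-th son), and leaf rules / $\#$-rules become the halting/accepting rules of $\G^{\pi}$; $\lambda$-move rules $A \to \ruleif b \rulethen B(f)$ carry over unchanged. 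Then determinism of $\G$ gives determinism of $\G^{\pi}$, and one argues $L(\pi(\G)) = \pi(\mathrm{mark}(L)) \cup(\text{the obviously-irrelevant padding})$, which when fed through $\treeD$ (with $\Delta$) recovers exactly $L$ — this is the ``path language determines the tree language'' claim, which holds precisely because $\G$ is deterministic top-down: whether a tree is accepted is the conjunction, over its paths, of ``$\G$ accepts this path''. For $\supseteq$: given a deterministic $\REG(S)$ r-acceptor $M$ over $\pi(\Delta)$ and the ranked alphabet $\Delta$, reverse the construction — reassemble the path-transitions of $M$ into a (deterministic, via $\#$-reading) $\RT(S)$ r-acceptor over $\Delta$ whose mark-language is $\treeD(L(M))$.

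Part (3) is then immediate: $\lambdaCF(S) = \yield(\lambdaDfRT(S)) = \yield(\tree(\lambdaDfREG(S))) = \delta(\lambdaDfREG(S))$, using (1), (2), and the definition $\delta = \yield\circ\mathrm{tree}$. Finally, for the two advertised corollaries one instantiates: $S = S_0$ gives $\lambdaDfREG(S_0) = \REG$ and $\lambdaCF(S_0) = \CF$, hence $\delta(\REG) = \CF$; and $S = \Pd$ gives $\lambdaDfREG(\Pd) = \DCF$ and $\lambdaCF(\Pd) = \text{Indexed}$, hence $\delta(\DCF) = \text{Indexed}$ (with $\delta(\CF) = \RE$ following from Section~\ref{sect4}-style arguments or directly). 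The main obstacle I anticipate is the $\subseteq$ direction of part~(1): getting a genuinely \emph{deterministic} $\RT(S)$ r-acceptor out of an arbitrary $\RT(S)$ transducer while only preserving the \emph{yield} requires care with the subset construction — in particular one must check that the storage-configuration bookkeeping survives determinization (each state in the subset carries its own $S$-configuration, so one really needs the storage threaded per-state, which is fine since the transitions are per-son), and that the $\#$-extension is genuinely needed and sufficient to place the accept/reject decision at the leaves. Everything else — the path-alphabet translations in (2), and the purely formal combination in (3) — is routine once the correctness statement ``$\treeD(\pi(L)) = L$ for deterministic-top-down-recognizable $L$'' is nailed down.
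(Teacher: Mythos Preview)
Your treatment of parts~(2) and~(3) is essentially the paper's, and correct. The real problem is your plan for part~(1), specifically the inclusion $\yield(\lambdaRT(S)) \subseteq \yield(\lambdaDfRT(S))$.

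Your proposal is to take an arbitrary $\RT(S)$ transducer and determinize it via a subset construction, arguing that since only the yield matters, ``collapsing nondeterminism across the accepting subtree-set is harmless''. This does not work. First, there is no subset construction for top-down tree automata: deterministic top-down tree automata are strictly weaker than nondeterministic ones even over a fixed alphabet (the standard example being $\{\sigma(a,a),\sigma(b,b)\}$). Second, and more seriously, your remark that ``each state in the subset carries its own $S$-configuration'' reveals the fatal obstacle: distinct nondeterministic runs of $\G$ reaching the same node will in general carry \emph{different} $S$-configurations, so the ``state'' of your determinized acceptor would have to be a finite set of pairs $(A,c)\in N\times C$. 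That is an infinite state space, and there is no way to drive it with a single $S$-configuration using the instructions of~$S$. The $\#$-trick does not help here; it only postpones a leaf decision by one step, it does not let you merge divergent storage threads.

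The paper avoids this entirely by a different idea: rather than determinizing $\G$ over the \emph{same} output alphabet, it changes the alphabet. Given a $\CF(S)$ grammar $\G$, it builds a deterministic $\RT(S)$ r-acceptor $\G'$ that recognizes the \emph{rule-labeled derivation trees} of~$\G$: each internal node is labeled by the rule of $\G$ applied there (a symbol $\hat r$ of the appropriate rank), and leaves by terminals or~$\eps$. Since the rule label at a node completely determines the transition (which nonterminals and instructions go to which sons), $\G'$ is trivially r-acceptor deterministic, and $\yield(L(\G'))=L(\G)$ because the yield of a derivation tree is the generated string. This shows directly $\lambdaCF(S)\subseteq\yield(\lambdaDeRT(S))\subseteq\yield(\lambdaDfRT(S))$, bypassing any determinization of the original transducer.
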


\begin{proof}
 (3) follows immediately from (1) and (2).

 (1) $\hspace{1mm}$ First, $\yield(\lambdaRT(S)) \subseteq \lambdaCF(S)$: replace
every rule $A \rightarrow \ruleif b \rulethen t$ by the rule $A \rightarrow
\ruleif b \rulethen \yield(t)$. Second, we show that $\lambdaCF(S) \subseteq
\yield(\lambdaDeRT(S))$, see Lemma~\ref{DeRT-DfRT}. Let $\G = (N,e,\Delta,\Ain,R)$
be a $\CF(S)$ grammar. We construct an $\RT(S)$ r-acceptor $\G'$ that
recognizes (deterministically) the derivation trees of $\G$: to
make this possible, we assume that each internal node of a
derivation tree is labeled with the rule applied at that node;
leaves are labeled with terminals as usual (or with $\eps$, to denote
$\lambda$). Formally, $\G' = (N',e,\Sigma,\Ain,R')$, where $N' = N \cup
\{A_\sigma \mid \sigma\in \Delta \cup \{\eps\}\}$, and $\Sigma$ and $R'$ are
determined as follows.
The symbols of rank~0 in $\Sigma$ are $\Sigma_0 = \Delta \cup \{\eps\}$, 
where $\eps$ is the symbol with $\yield(\eps) = \lambda$. 
For each rule $r: A \rightarrow \ruleif b \rulethen \xi$ of $R$,
with $\xi\neq \lambda$, we put a symbol $\hat{r}$ of
rank $|\xi|$ in $\Sigma$, and we put the rule $A \rightarrow \ruleif b \rulethen
\hat{r}\xi'$ in $R'$, where $\xi'$ is the result of replacing, in $\xi$, every
$\sigma\in\Delta$ by $A_\sigma(\id)$. Similarly, for each rule $r: A \rightarrow
\ruleif b \rulethen \lambda$ of $R$, we put a symbol $\hat{r}$ of rank $1$ in
$\Sigma$, and put the rule $A \rightarrow \ruleif b \rulethen
\hat{r}A_\eps(\id)$ in $R'$. Finally, $R'$ contains all rules $A_\sigma
\rightarrow \sigma$ with $\sigma\in\Delta\cup \{\eps\}$. Clearly, $\G'$ is
r-acceptor deterministic, and $L(\G')$ is the set of (rule labeled) derivation
trees of $\G$, i.e., $\yield(L(\G')) = L(\G)$.

(2) $\hspace{1mm}$ First we show that $\lambdaDfRT(S) \subseteq
\tree(\lambdaDeREG(S))$, see Lemma~\ref{DeREG-DfREG}. 
Let $L \subseteq T_\Delta$ be a tree language in
$\lambdaDfRT(S)$. Thus there is a deterministic $\RT(S)$ r-acceptor $\G =
(N,e,\Delta\#,\Ain,R)$ such that $L(\G)$ = mark($L$). We construct a
deterministic $\REG(S)$ r-acceptor $G'$ such that $\treeD(L(\G')) = L$.
The acceptor $\G'$ just imitates the behavior of $\G$, on the paths of the trees. Thus, $\G'
= (N \cup \overline{N},e,\pi(\Delta),\Ain,R')$, where $\overline{N} = \{\overline{A} \mid A \in
N\}$, and the rules of $R'$ are determined as follows.

\begin{itemize}
 \item If the rule $A \rightarrow \ruleif b \rulethen \sigma B_1(f_1)\cdots B_k(f_k)$ is
    in $R$, with $k \geq 2$, or $k = 1$ and $\sigma \notin \Delta_0$, \\[1mm]
    then the rules $A \rightarrow \ruleif b \rulethen \sigma_i B_i(f_i)$ are in $R'$, 
    for every $i$, $1 \leq i \leq k$.
 \item If the rule $A \rightarrow \ruleif b \rulethen \sigma B(f)$, with $\sigma \in
    \Delta_0$, is in $R$, \\[1mm]
    then the rule $A \rightarrow \ruleif b \rulethen \sigma\overline{B}(f)$ is in $R'$.
 \item If the rule $A \rightarrow \ruleif b \rulethen \#$ is in $R$, \\[1mm]
    then the rule $\overline{A} \rightarrow \ruleif b \rulethen \lambda$ is in $R'$. 
 \item If the rule $A \rightarrow \ruleif b \rulethen B(f)$ is in $R$, \\[1mm]
    then this rule and
    the rule $\overline{A} \rightarrow \ruleif b \rulethen \overline{B}(f)$ are in $R'$.
\end{itemize}
This concludes the construction of $\G'$. Due to the use of bars,
$\G'$ is r-acceptor deterministic. It is obvious that if $t \in L$,
then $\pi(t) \subseteq L(\G')$, i.e., all paths of $t$ are accepted by $\G'$.
On the other hand, if $t$ is a tree over $\Delta$ such that all its paths
are accepted by $\G'$, then mark($t$) is accepted by $\G$, due to the
determinism of $\G$, and the fact that all computations of $\G'$ (on
all paths) start in the same initial configuration $m(e)(u_0)$,
where $I = \{u_0\}$. (We note that $G'$ does not accept the language
$\pi(L)$; $L(\G')$ may contain paths that are not in $\pi(L)$. Acceptance
of $\pi(L)$ can be realized by look-ahead, i.e., by a deterministic
$\REG(S_\LA)$ r-acceptor, see Lemma~5.2 of \cite{Vog3}.)

Next we show that $\tree(\lambdaDfREG(S)) \subseteq \lambdaDfRT(S)$.
Let $L \subseteq \Sigma^*$ be a language in \mbox{$\lambdaDfREG(S)$}, and let
$\Delta$ be a ranked alphabet. Consider $L' = L \cap
(\Sigma-\Delta_0)^*\Delta_0$. Since $(\Sigma-\Delta_0)^*\Delta_0$ is a
regular language, $L'$ is still in $\lambdaDfREG(S)$, by the usual
product construction. Moreover $\treeD(L') = \treeD(L)$. Now note
that $L'$ is prefix-free. Hence $L'$ is in $\lambdaDeREG(S)$, cf. the
remark after Lemma~\ref{DeREG-DfREG}.

Thus, it suffices to show that $\tree(\lambdaDeREG(S)) \subseteq
\lambdaDfRT(S)$. Let $\G = (N,e,\Sigma,\Ain,R)$ be a deterministic
$\REG(S)$ r-acceptor, and let $\Delta$ be a ranked alphabet. As shown above, we
may assume that $L(\G) \subseteq (\Sigma-\Delta_0)^*\Delta_0$. We want to
construct a deterministic $\RT(S)$ r-acceptor $\G'$ such that $L(\G')$ =
mark($\treeD(L(\G)))$. The acceptor $\G'$ just simulates $\G$ on all paths of the input
tree. Thus, $\G' = (N\cup\{Q\},e,\Delta\#,\Ain,R')$, where $Q$ is a new
nonterminal, and $R'$ is defined as follows.

\begin{itemize}
 \item If, for $\sigma\in\Delta_k$ with $k \geq 1$, the rules $A \rightarrow
    \ruleif b_i \rulethen \sigma_i B_i(f_i$) are in $R$ for all $i$, $1 \leq i \leq k$, \\[1mm]
    then the rule $A \rightarrow \ruleif b_1\text{ \bool{and} }\cdots
    \text{ \bool{and} }b_k \rulethen \sigma B_1(f_1) \cdots B_k(f_k)$ is in $R'$.
 \item If, for $\sigma\in\Delta_0$, the rule $A \rightarrow \ruleif b \rulethen
    \sigma$ is in $R$, \\[1mm]
    then the rules $A \rightarrow \ruleif b \rulethen
    \sigma Q(\id)$ and $Q \rightarrow \#$ are in $R'$.
 \item If, for $\sigma \in \Delta_0$, the rule $A \rightarrow \ruleif b
    \rulethen \sigma B(f)$ is in $R$, \\[1mm]
    then it is also in $R'$.
 \item If the rule $A \rightarrow \ruleif b \rulethen \lambda$ is in $R$, \\[1mm]
    then the rule $A \rightarrow \ruleif b \rulethen \#$ is in $R'$.
 \item If the rule $A \rightarrow \ruleif b \rulethen B(f)$ is in $R$, \\[1mm]
    then it is also in $R'$.
\end{itemize}
This concludes the construction of $\G'$. It is left to the reader
to show that $\G'$ is indeed r-acceptor deterministic, and that
$L(\G')$ = mark($\treeD(L(\G)))$.
\end{proof}

The idea for this theorem came from the fact, proved by
Magidor and Moran, that \mbox{$\lambdaDeRT(S_0)$}, the class of tree languages
accepted by deterministic top-down finite tree automata, is
included in $\tree(\REG)$, see \cite{Tha}. See also \cite{Cou} and Section
II.11 of \cite{GecSte}; a tree language $L$ over $\Delta$ is said to be
\emp{closed} if $L = \treeD(\pi(L))$; it is easy to see that, for every
string language $L$, $\treeD(L)$ is closed; hence Theorem \ref{theo8.1}(2)
shows that all tree languages in $\lambdaDfRT(S)$ are closed.

The main application of Theorem \ref{theo8.1} is to the iterated
pushdown languages: the class of languages accepted by the
nondeterministic one-way $(n+1)$-iterated pushdown automata can be
obtained by the delta operations from the class of languages
accepted by the deterministic $n$-iterated pushdown automata.

\begin{theo}\label{theo8.2}
For every $n\geq 0$, $\lambdaREG(\Pd^{n+1}) = \delta(\lambdaDfREG(\Pd^n))$.\\
In particular, $\CF = \delta(\REG)$, and $\mathrm{Indexed} = \delta(\DCF)$.
\end{theo}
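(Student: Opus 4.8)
The plan is to derive Theorem~\ref{theo8.2} by combining Theorem~\ref{theo5.1}(2) with Theorem~\ref{theo8.1}(3), the only preliminary being the observation that each iterated pushdown storage type $\Pd^n$ (recall $\Pd^0 = S_0$ and $\Pd^{n+1} = \Pd(\Pd^n)$) is r-acceptor deterministic. First I would verify that claim. Inspecting the definition of $\Pd(S)$ one sees that its input set is $I' = I$; since the trivial storage type $S_0$ has the singleton input set $I = \{c_0\}$, a trivial induction shows that $\Pd^n$ has a singleton input set for every $n$. Moreover $\Pd(S)$ always has the instruction $\stay$ with $m'(\stay)(c') = c'$, so every $\Pd^n$ has an identity. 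Hence $\Pd^n$ is r-acceptor deterministic.

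Now the two theorems apply. Because $\Pd^n$ has an identity, Theorem~\ref{theo5.1}(2) gives $\tauCF(\Pd^n) = \tauREG(\Pd(\Pd^n)) = \tauREG(\Pd^{n+1})$, hence in particular $\lambdaCF(\Pd^n) = \lambdaREG(\Pd^{n+1})$. Because $\Pd^n$ is r-acceptor deterministic, Theorem~\ref{theo8.1}(3) gives $\lambdaCF(\Pd^n) = \delta(\lambdaDfREG(\Pd^n))$. Chaining these equalities yields the general statement $\lambdaREG(\Pd^{n+1}) = \delta(\lambdaDfREG(\Pd^n))$.

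Finally I would read off the two named special cases by unwinding notation with the identifications already established earlier. For $n = 0$ we have $\Pd^0 = S_0$, so $\lambdaREG(\Pd^1) = \lambdaREG(\Pd) = \lambdaCF(S_0) = \CF$ (Section~\ref{sect3}(1): the context-free grammar is the one-way pushdown automaton), while $\lambdaDfREG(\Pd^0) = \lambdaDfREG(S_0)$ is the class of languages of deterministic one-way finite automata, i.e.\ $\REG$; thus $\CF = \delta(\REG)$. For $n = 1$ we have $\lambdaREG(\Pd^2) = \lambdaCF(\Pd) = \mathrm{Indexed}$ (Section~\ref{sect3}(2): the $\CF(\mathrm{P})$ grammar is the indexed grammar), while $\lambdaDfREG(\Pd^1) = \lambdaDfREG(\Pd) = \DCF$ by the definition of $\DCF$ recalled in this section; thus $\mathrm{Indexed} = \delta(\DCF)$.

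The only step needing genuine (if minor) attention is the verification that $\Pd^n$ is r-acceptor deterministic — i.e.\ that it has a singleton input set and an identity — since that is precisely the hypothesis of Theorem~\ref{theo8.1}; everything else is bookkeeping of results already in hand, and nothing new about the delta operation is required here.
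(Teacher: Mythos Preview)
Your proposal is correct and follows essentially the same approach as the paper: apply Theorem~\ref{theo5.1}(2) and Theorem~\ref{theo8.1}(3) and chain the equalities, then read off the special cases. The paper's proof is even terser, as it had already remarked in Section~\ref{sect7} that the $\Pd^n$ are r-acceptor deterministic and simply notes for the $n=0$ case that finite automata can be made deterministic so $\lambdaDfREG(S_0)=\REG$; your explicit inductive verification of the r-acceptor determinism of $\Pd^n$ is a welcome bit of extra care but not a different idea.
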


\begin{proof}
 By Theorem \ref{theo5.1}(2), $\lambdaREG(\Pd^{n+1}) = \lambdaCF(\Pd^n)$, and by
Theorem \ref{theo8.1}(3) $\lambdaCF(\Pd^n) = \delta(\lambdaDfREG(\Pd^n))$. 
Note that finite automata can be made deterministic; hence
$\lambdaDfREG(S_0) = \REG$.
\end{proof}

Of course, we also get from Theorem \ref{theo8.1}(2) that
$\lambdaDfRT(\Pd^n) = \tree(\lambdaDfREG(\Pd^n))$. Using (effective)
closure of $\Pd^n$ under look-ahead, one could rather easily prove that the
equivalence problems for $\lambdaDfRT(\Pd^n)$ and for
$\lambdaDfREG(\Pd^n)$ are equivalent (see \cite{Cou}); note that the
decidability of these problems is open.\footnote{{\bf New Observation.} 
For $n=1$ this problem (the dpda equivalence problem) was solved 
by S\'enizergues in \cite{*Sen1} (see also \cite{*Sen2}).
}

Another consequence of Theorem \ref{theo8.1}(2) is that the
languages accepted by alternating one-way $S$ automata can be
expressed in terms of the languages accepted by deterministic
one-way $S$ automata. In fact, as observed at the end of Section \ref{sect4},
the class 
$\alphaCF$(One-way$_{\id}\times S)$ equals
$\tauRT$(One-way$_{\id})^{-1}(\lambdaRT(S))$. 
Now it is easy
to see from the proof of this statement in \cite{Eng6, DamGue}, 
that it even equals 
$\tauRT$(One-way$_{\id})^{-1}(\lambdaDfRT(S))$. 
Hence, by Theorem~\ref{theo8.1}(2),
\begin{quote}
$\alphaCF$(One-way$_{\id} \times S) =
\tauRT$(One-way$_{\id})^{-1}(\tree(\lambdaDfREG(S)))$.
\end{quote}
Thus, e.g., the class of alternating one-way $\Pd^n$ languages can be
expressed in terms of \mbox{$\lambdaDfREG(\Pd^n)$}; note that this is the class
DTIME($\exp_k(cn))$ of $k$-iterated exponential time languages \cite{Eng9}.

To illustrate that determinism plays an essential role
in Theorems \ref{theo8.1} and \ref{theo8.2}, we show that $\delta$(LIN) = RE,
where LIN is the class of linear context-free languages.

\begin{theo}\label{theo8.3}
  $\delta\mathrm{(LIN) = RE}$.
\end{theo}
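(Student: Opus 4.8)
The claim $\delta(\mathrm{LIN}) = \mathrm{RE}$ should follow from the general machinery already in place, combined with the fact that linear context-free languages are exactly the languages accepted by \emph{one-turn} pushdown automata, i.e. $\mathrm{LIN} = \lambdaREG(\mathrm{P}_{1\text{t}})$. The one inclusion $\delta(\mathrm{LIN}) \subseteq \mathrm{RE}$ is trivial since $\mathrm{RE}$ is closed under every effective operation (and $\delta_\Delta$ is clearly effective and $\mathrm{RE}$-preserving), so the real content is $\mathrm{RE} \subseteq \delta(\mathrm{LIN})$. The idea is to exhibit, for a given recursively enumerable language, a ranked alphabet $\Delta$ and a linear language $L$ whose ``path kit'' assembles into trees whose yields are exactly the desired $\mathrm{RE}$ language. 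The natural route is to use $\delta_\Delta(L) = \yield(\treeD(L))$ where $\treeD(L)$ is some tree language generated by a device that sits between $\lambdaRT$ and the full power needed — and the key point is that the determinism (or prefix-freeness) exploited in Theorem~\ref{theo8.1} is \emph{absent} here, so $\treeD(L)$ can be much more powerful than $\lambdaDfRT$ of anything.

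First I would recall that a recursively enumerable language $L_0 \subseteq \Delta_0^*$ can be written in a two-intersection / two-homomorphism form, or — more convenient here — as the set of yields of the derivation trees of a type-0 grammar, or via a Turing machine whose computation histories are checked by intersecting two linear (even deterministic one-counter or metalinear) conditions. Concretely, a standard fact is that every $\mathrm{RE}$ language is $h(L_1 \cap L_2)$ with $L_1, L_2$ linear context-free and $h$ a homomorphism; even better, using the ``valid computations of a Turing machine'' encoding, the membership of $w$ in $L_0$ is witnessed by a computation history $w \# w_1 \# w_2 \# \cdots$, and consecutive-pair consistency can be checked by a linear language while a separate linear language checks the odd/even grouping — the classic Hartmanis-style presentation. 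I would then design $\Delta$ so that a tree $t$ over $\Delta$ is forced, by the requirement $\pi(t) \subseteq L$, to encode exactly such a valid computation history: each root-to-leaf path in $t$ carries one of the linear ``local consistency'' checks, and because every path must lie in the linear language $L$, the whole tree is constrained to be a legal history, whose yield is $w$.

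The main design step — and the main obstacle — is getting the branching structure of $\Delta$ and the path language $L$ to cooperate: along any single path one only sees a \emph{linear} amount of information, yet the tree must globally verify a Turing computation. The trick is that $\treeD$ performs an unbounded conjunction over all paths, so one linear check per path suffices; I would lay out $\Delta$ so that from each node two kinds of sons sprout — one continuing the ``backbone'' that spells out the history, and one (a short linear branch) that performs a local comparison of adjacent configuration-cells, analogous to Fig.~7(3) and Example (3) in the text. Each backbone-to-comparison path is then a string in a fixed linear language $L$ encoding ``cell $j$ of configuration $i$ is consistent with cell $j$ of configuration $i{+}1$''; the backbone-to-leaf paths spell out, and the yield collects, the input $w$. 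Verifying that $\pi(t) \subseteq L$ iff $t$ is a genuine accepting history — hence $\delta_\Delta(L) = L_0$ — is the crux and will require care with the padding symbols (the $\eps$ of rank $0$, whose yield is $\lambda$, to absorb auxiliary branches) and with the direction indicators $(\sigma,i)$. Once this encoding is set up, the theorem follows immediately: $L \in \mathrm{LIN}$ by construction and $\delta_\Delta(L) = L_0$, so $L_0 \in \delta(\mathrm{LIN})$; combined with the trivial inclusion this gives $\delta(\mathrm{LIN}) = \mathrm{RE}$.
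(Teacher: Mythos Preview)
You already hold the right tool --- the representation of an arbitrary RE language as $h(L\cap M)$ with $L,M$ linear and $h$ a homomorphism --- but you set it aside for a direct Turing-history encoding whose mechanics you leave vague. The paper stays with $h(L\cap M)$ and finishes in a few lines. Give each $a\in\Sigma$ rank~$2$; build a spine $a_1,a_2,\ldots,a_n,\$$ descending via second sons, with $\$$ (also of rank~$2$) having two $\eps$-leaves; and hang off the first son of each $a_i$ a symbol $\#_k$ of rank $k=|h(a_i)|$ whose $k$ leaves spell $h(a_i)$. The two long paths are $(a_1,2)\cdots(a_n,2)(\$,1)\eps$ and $(a_1,2)\cdots(a_n,2)(\$,2)\eps$; requiring the first to lie in a coded copy of $L$ and the second in a coded copy of $M$ forces $a_1\cdots a_n\in L\cap M$, while every path into a side-fan lies in a fixed \emph{regular} language that pins the fan's yield to $h(a_i)$. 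The union $K$ of these three pieces is linear, and $\delta_\Delta(K)=h(L\cap M)$.

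Your history-checking scheme, by contrast, has a concrete obstacle. A path that leaves the backbone at position $k$ carries only the backbone \emph{prefix} up to $k$ together with a short branch; it cannot see the cell it is meant to be compared against, which lives further down the backbone. Getting a one-turn pushdown to compare two positions separated by unbounded intervening material is precisely what linearity forbids, so ``a short linear branch performs a local comparison of adjacent configuration-cells'' does not go through as stated. The natural repair is to route the whole string down to a bottom node with two children --- one child's path checking one linear condition, the other child's path checking the other --- but that \emph{is} the paper's $h(L\cap M)$ construction, not the per-cell scheme you outline.
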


\begin{proof}
 Obviously $\delta$(LIN) $\subseteq$ RE. We now prove that RE $\subseteq
\delta$(LIN).
Every recursively enumerable language is of the form $h(L \cap M)$,
where $L$ and $M$ are linear context-free languages, over some
alphabet $\Sigma$, and $h$ is a homomorphism from $\Sigma$ to some alphabet
$\Omega$ (see, e.g., \cite{EngRoz}). We now construct a linear context-free
language $K$ and a ranked alphabet $\Delta$, such that
$\delta_\Delta(K) = h(L \cap M)$.
In fact $\treeD(K)$ consists of all trees of the form suggested in
Fig.~8, with $a_1 a_2\cdots a_n \in L \cap M$ (and $a_i \in\Sigma$).
Let $\Delta$ be the ranked alphabet with $\Delta_0 = \Omega \cup \{\eps\}$,
$\Delta_1 = \{\#_1\}$, $\Delta_2 = \Sigma \cup \{\$,\#_2\}$, and
$\Delta_k = \{\#_k\}$ for $3 \leq k \leq m$, where $m = \max\{|h(a)| \mid a
\in\Sigma\}$. We define $K = L_2(\$,1)\eps \cup
M_2(\$,2)\eps \cup R$, where $L_2 =
\{(a_1,2)(a_2,2)\cdots(a_n,2) \mid a_1 a_2\cdots a_n \in L\}$, and similarly
for $M_2$. Moreover, $R$
is the regular language $\cup\{\Sigma^*_2 F_a \mid a \in\Sigma\}$, where
$\Sigma_2 = \{(a,2) \mid a\in\Sigma\}$, and
\begin{itemize}
\setlength{\itemsep}{0pt}
 \item if $h(a) = b_1\cdots b_k$ with $k \geq 1$ and $b_i \in \Omega$, then
    $F_a= \{(a,1)(\#_k,1)b_1,\ldots,(a,1)(\#_k,k)b_k\}$,
 \item if $h(a) = \lambda$, then $F_a = \{(a,1)(\#_1,1)\eps\}$.
\end{itemize}

\noindent Clearly $K$ is a linear context-free language, and
$\delta_\Delta(K) = h(L \cap M)$.
\end{proof}

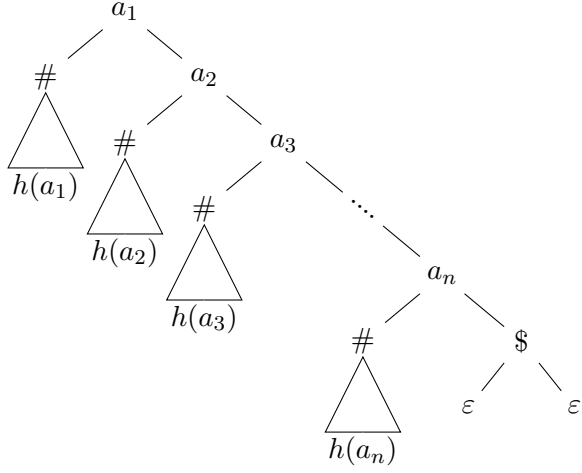
\begin{figure}[h]
\centering
\begin{tikzpicture}[
leaf/.style={isosceles triangle,draw,shape border rotate=90,isosceles triangle stretches=true, minimum height=10mm,minimum width=10mm,inner sep=0,yshift={-9mm},font=\tiny},
te/.style={inner sep=0,yshift={3.2mm}}]

\node {$a_1$}[sibling distance = 60pt, level distance=25pt]
child {node {$\#$}{node[leaf]{} child [draw=white] {node[te] {$h(a_1)$}}}}
child {node {$a_2$}
child {node {$\#$ } {node[leaf]{}child [draw=white] {node[te] {$h(a_2)$}}}}
child {node {$a_3$}
child {node {$\#$}{node[leaf]{}child [draw=white] {node[te] {$h(a_3)$}}}}
child {node[rotate=-39.5] {$....$}
child [draw=white] {node {}}
child {node {$a_n$}
child {node {$\#$}{node[leaf]{}child [draw=white] {node[te] {$h(a_n)$}}}}
child {node {$\$$}[sibling distance = 40pt]
child {node {$\varepsilon$}}
child {node {$\varepsilon$}}
}
}
}
}
};
\end{tikzpicture}
 \caption{Illustration of $\delta$(LIN) = RE.}
\end{figure}

This implies that the linear context-free languages are
not contained in $\lambdaDfREG(\Pd^n)$, as shown, in a different way, in
\cite{EngVog4}. In fact, if LIN $\subseteq \lambdaDfREG(\Pd^n)$, then RE
= $\delta$(LIN) $\subseteq
\delta(\lambdaDfREG(\Pd^n)) = \lambdaREG(\Pd^{n+1})$. However, the
languages in $\lambdaREG(\Pd^n)$ are known to be recursive \cite{Dam,Eng9}.

In \cite{Gre} an example is given of a (r-acceptor
deterministic) storage type $\widetilde{S}$ such that \mbox{$\lambdaREG(\widetilde{S}) = \CF$} and 
$\lambdaREG(\Pd(\widetilde{S}))$ = RE. In our notation, $\widetilde{S} =
(C,P,F,I,E,m)$ with $C = \Omega^* \cup \{\bot\}$, where $\Omega$ is a fixed
infinite set of symbols and $\bot \notin \Omega$, $I = \{u_0\}$ and $E =
\{e_0\}$ with $m(e_0)(u_0) = \lambda$, $P = \emptyset$, $F= \{$write$(v) \mid v
\in \Omega^*\} \cup \{\test(L) \mid L \in \CF\}$, and
for $w \in \Omega^*$, $m($write$(v))(w) = wv$,
$m($write$(v))(\bot)= \bot$, $m(\test(L))(w) = \bot$ if $w \in L$ and
undefined otherwise. Thus the
storage of $\widetilde{S}$ is that of an ordinary write-only output tape, and
the content of that tape may be tested once for membership in a
context-free language. Hence, for $L \in \CF$, the $\REG(\widetilde{S})$ grammar
with rules $A \rightarrow aA($write$(a))$ for all $a$, $A \rightarrow
B(\test(L))$, $B \rightarrow \lambda$,
generates $L$, and so $\CF \subseteq \lambdaREG(\widetilde{S}$). By standard
AFA/AFL techniques it can be shown that $\lambdaREG(\widetilde{S}) \subseteq
\CF$. The intersection $L_1 \cap L_2$ of two context-free languages $L_1$
and $L_2$ can be generated by the $\CF(\widetilde{S})$ grammar with rules $A
\rightarrow aA($write$(a))$ for all $a$, $A \rightarrow
B(\test(L_1))B(\test(L_2))$, $B\rightarrow\lambda$. Hence, since every
recursively enumerable language is the homomorphic image of such
an intersection, and $\lambdaCF(\widetilde{S})$ is closed under homomorphisms
(it is a full trio), RE $\subseteq \lambdaCF(\widetilde{S}) =
\lambdaREG(\Pd(\widetilde{S}))$.

Thus, as noted in \cite{Gre}, the following desirable
property of $\Pd(S)$, and thus of $\lambdaCF(S)$, is \emp{not true}:
\begin{quote}
($*$) $\quad$ if $\lambdaREG(S_1) = \lambdaREG(S_2)$ then
$\lambdaREG(\Pd(S_1)) = \lambdaREG(\Pd(S_2))$.
\end{quote}
Indeed, for $S_1 = \widetilde{S}$ and $S_2 = \Pd$, we get
$\lambdaREG(\Pd(S_1))$ = RE and
$\lambdaREG(\Pd(S_2)) = \lambdaCF(\Pd)$ = Indexed $\subseteq$ the class
of recursive languages. 
Property ($*$) holds in fact for most other operations
$O(S)$ on storage types, such as the well-nested AFA, because
there exists a class $F$ of operations on languages such that
$\lambdaREG(O(S)) = F(\lambdaREG(S)$), see \cite{Gin, Gre} (for well-nested AFA,
$F$ is the class of nested iterated substitutions). From
Theorems \ref{theo5.1}(2) and \ref{theo8.1}(3) it follows, for an r-acceptor
deterministic storage type $S$, that the analogous equality
$\lambdaREG(\Pd(S))= $ \mbox{$\delta(\lambdaDfREG(S))$} holds. Hence the
following property of $\Pd(S)$, for r-acceptor deterministic storage types, is
\emp{true}:
\begin{quote}
($**$) $\quad$ if $\lambdaDfREG(S_1) = \lambdaDfREG(S_2)$ then
$\lambdaREG(\Pd(S_1)) = \lambdaREG(\Pd(S_2))$.
\end{quote}
It can probably even be shown that: if $\lambdaDfREG(S_1) =
\lambdaDfREG(S_2)$ then $\lambdaDfREG(\Pd(S_1)) =$
$\lambdaDfREG(\Pd(S_2))$. This would suggest to define two
storage types $S_1$ and $S_2$ to be \emp{equivalent} if 
$\lambdaDfREG(S_1) = \lambdaDfREG(S_2)$,
and use this notion of equivalence, rather than the, more
structural, one used in \cite{Eng9, EngVog2,EngVog3}, as the basic notion
of indistinguishability of storage types. In this respect it
would be nice to have a class $\delta'$ of operations such that
$\lambdaDfREG(\Pd(S)) = \delta'(\lambdaDfREG(S))$.

Some remaining questions in this section are the following.
\begin{itemize}
\setlength{\itemsep}{0pt}
 \item Is there a (natural) class $F$ of operations such that
    $F(\lambdaREG(\Pd^n)) = \lambdaREG(\Pd^{n+1})$?
 \item Is there a formal relationship between $\delta$ and YIELD? Note that
    repeated application of YIELD to RT gives the IO-hierarchy,
    whereas, apart from the restriction to determinism, repeated
    application of $\delta$ to RT gives the OI-hierarchy
    $\{\lambdaREG(\Pd^n)\}_n$ (see \cite{Dam, Eng6}).
 \item Is there a ``delta theorem'' (i.e., a theorem analogous to
    Theorem \ref{theo8.1}) for arbitrary storage types?
\end{itemize}


\newpage
\noindent
\textbf{Acknowledgments.}
I thank Peter Asveld, Hendrik Jan Hoogeboom,
and Jetty Kleijn for their comments on a previous version of
this paper. I am grateful to Werner Damm, Gilberto File, and
Heiko Vogler for many stimulating conversations. Finally, I
thank Heiko Vogler again, for our long co-operation in finding
all the appropriate definitions.

\vspace{1cm}
\noindent
In the following list, new references (i.e., references added in 2014) are starred.


\bibliographystyle{alpha}

\end{document}